\newcommand{\ball}[2]{\ensuremath{\mathcal{B}_{#1}({#2})}}
\newcommand{\qV}{\ensuremath{V}}
\newcommand{\qD}{\ensuremath{\Delta}}
\newcommand{\qL}{\ensuremath{L}}
\newcommand{\qB}{\ensuremath{B}}
\newcommand{\qR}{\ensuremath{R}}
\newcommand{\qQal}{\ensuremath{\mathcal{Q}}}
\newcommand{\qS}{\ensuremath{S}}
\newcommand{\qTal}{\ensuremath{\mathcal{T}}}
\newcommand{\qQop}{\ensuremath{Q}}
\newcommand{\qq}{\ensuremath{q}}
\newcommand{\qthr}{\ensuremath{t}}
\newcommand{\qlev}{\ensuremath{l}}
\newcommand{\qcell}{\ensuremath{\mathcal{C}}}
\newcommand{\qerr}{\ensuremath{\varepsilon}}
\newcommand{\pdf}{\ensuremath{\varphi}}
\newcommand{\ie}{\emph{i.e.}, }
\newcommand{\eg}{\emph{e.g.}, }
\newcommand{\inv}[1]{\frac{1}{#1}}
\newcommand{\tinv}[1]{\tfrac{1}{#1}}
\newcommand{\ud}{{\rm d}} 
\newcommand{\bb}{\mathbb}
\newcommand{\cl}{\mathcal}
\newcommand{\st}{\ {\rm s.t.}\ }
\newcommand{\Id}{I}
\newcommand{\noise}{\xi}
\newcommand{\fnorm}[1]{|\!|\!|#1|\!|\!|}
\DeclareMathOperator{\diag}{diag}
\DeclareMathOperator{\sign}{sign}
\newcommand{\sd}{\ensuremath{\Sigma\Delta}\xspace}
\newcommand{\decoder}{\ensuremath{\mathcal D}}
\newtheorem{theorem}{Theorem}
\newtheorem{definition}{Definition}
\newtheorem{proposition}{Proposition}
\begin{document}

\title{Quantization and Compressive Sensing}
\author{Petros T. Boufounos\footnote{Mitsubishi Electric Research
  Laboratories, 201 Broadway, Cambridge, MA, USA, \url{petrosb@merl.com}}, Laurent Jacques\footnote{ ISPGroup, ICTEAM/ELEN, Universit\'{e} catholique de Louvain, Place 
du Levant 2, PO box L5.04.04, B1348 Louvain-la-Neuve, Belgium \url{laurent.jacques@uclouvain.be}}, Felix Krahmer\footnote{Georg-August-Universit\"at G\"ottingen, Lotzestra{\ss}e 16-18, 37083 G\"ottingen, Germany \url{f.krahmer@math.uni-goettingen.de}}, and Rayan Saab\footnote{Univeristy of California, San Diego, 9500 Gilman
Drive \#0112, La Jolla, CA 92093-0112, USA \url{rsaab@ucsd.edu}}}

\maketitle

\begin{abstract}
  Quantization is an essential step in digitizing signals, and,
  therefore, an indispensable component of any modern acquisition
  system. This chapter explores the interaction of quantization and
  compressive sensing and examines practical quantization strategies
  for compressive acquisition systems. Specifically, we first provide
  a brief overview of quantization and examine fundamental performance
  bounds applicable to any quantization approach. Next, we consider
  several forms of scalar quantizers, namely uniform, non-uniform, and
  1-bit. We provide performance bounds and fundamental analysis, as
  well as practical quantizer designs and reconstruction algorithms
  that account for quantization. Furthermore, we provide an overview
  of Sigma-Delta (\sd) quantization in the compressed sensing context,
  and also discuss implementation issues, recovery algorithms and
  performance bounds. As we demonstrate, proper accounting for
  quantization and careful quantizer design has significant impact in
  the performance of a compressive acquisition system.
\end{abstract}

\section{Introduction}
\label{sec:intro}

In order to store and manipulate signals using modern devices, it is
necessary to digitize them. This involves two steps: sampling (or
measurement) and quantization. The compressed sensing theory and
practice described in the remainder of this book provides a novel
understanding of the measurement process, enabling new technology and
approaches to reduce the sampling burden. This chapter explores the
very interesting interaction of compressed sensing with quantization.

Sampling maps a signal to a set of coefficients, typically using
linear measurements. This map can often be designed to be lossless,
\ie to perfectly represent all signals in a certain class, as well as
robust to noise and signal modeling errors. The Nyquist
theorem, as well as more recent compressive sampling theorems are
examples of such sampling
approaches~\cite{candes2006ssr,blu2008sparse,mishali2011sub}.

The guarantees in sampling theorems are typically stated in terms of
the critical measurement rate, \ie the number of measurements necessary to
perfectly represent signals in a given class. Oversampling, compared
to that minimum rate, typically provides robustness to errors in the
representation, noise in the acquisition, and mismatches in signal
models. The latter is especially important in compressive sensing
systems as they provide perfect reconstruction guarantees for exactly
sparse signals; in practice, the acquired signal is almost never
exactly sparse.

Quantization, on the other hand, is the process of mapping the
representation coefficients---which potentially belong to an
uncountably infinite set---to elements in a finite set, and
representing them using a finite number of bits. Due to the
many-to-one nature of such a map, the quantized representation is in
general lossy, \ie distorts the representation and, therefore, the
signal. This distortion occurs even if the measurement process is lossless.

The interaction of quantization with sampling introduces interesting
trade-offs in the acquisition process. A system designed to sample 
signals at (or slightly above) the critical rate may be less
robust to errors introduced by quantization. Consequently, it requires a
sophisticated quantizer design that ensures very small quantization errors. 
On the other hand, a simpler quantizer architecture (e.g., with fewer bits per measurement) could introduce significant error to the representation
and require some oversampling to compensate. Practical systems
designs navigate this trade-off, for example, according to the complexity of the
corresponding hardware.

Compressive acquisition systems amplify the importance of the
trade-off between quantizer complexity and oversampling. The sampling rate is significantly reduced in such systems, at the expense of increased sensitivity to noise and signal model
mismatch. Thus, loss of information due to quantization
can be detrimental, especially when not properly handled. One may revert to oversampling here as well, however
the incoherent and often randomized nature of compressive
measurements poses challenges.  Thus, powerful oversampling based quantization approaches, such as Sigma-Delta quantization can be applied, but only after careful consideration.

Nevertheless, the sparse signal models and the computational
methods developed for compressed sensing can alleviate a number of
performance bottlenecks due to quantization in conventional
systems. Using computational approaches originating in frame theory
and oversampling, it is possible to significantly reduce the
distortion due to quantization, to significantly improve the
performance due to saturation, and to enable reconstruction from
measurements quantized as coarsely as 1~bit. The theory and practice
for such methods are described in Sec.~\ref{sec:scalarCS}.

It might seem counter-intuitive that compressed sensing
attempts to remove sampling redundancy, yet successful reconstruction
approaches employ tools developed for oversampled representations. In
fact there is a strong connection between compressed sensing and
oversampling, which we explore in various points in this
chapter. Furthermore, with sufficient care, this connection can be
exposed and exploited to implement Sigma-Delta quantization in
CS-based acquisition systems, and significantly improve performance
over scalar quantization. The details are discussed in
Sec.~\ref{sec:SDCS}.

The next section presents general principles of quantization,
including a brief background on vector, scalar, and Sigma-Delta
quantization for general acquisition systems. It is not an exhaustive
survey of the topic. For this we refer the reader
to~\cite{gray1998quantization,DD03,NST96}. Instead, it serves to
establish notation and as quick reference for the subsequent
discussion. Sec.~\ref{sec:scalarCS} and Sec.~\ref{sec:SDCS} examine the
interaction of compressive sensing and quantization in significant
detail. Sec.~\ref{sec:discussion} concludes with some discussion of
the literature, promising directions and open problems.

\ \\
\noindent\textbf{Notation:} In addition to the notational conventions
defined in Chapter 1, this chapter also uses the following general
notations. The logarithm in base $a>0$ is noted $\log_a$ and whenever
the base is not specified, $\log$ refers to the natural
logarithm. Note that in some cases, such as asymptotic results, the
logarithm base is not important. This chapter also uses the following
non-asymptotic orderings: For two functions $f$ and $g$, we write $f \lesssim
g$ if there exists a constant $C>0$ independent of the function
arguments such that $f \leq C g$, with a similar definition for $f
\gtrsim g$. Moreover, $f \asymp g$ if we have both $f \lesssim g$ and
$f \gtrsim g$.  Occasionally, we also rely on the well-established
big-$O$ and big-$\Omega$ asymptotic notation to concisely explain
asymptotic behavior when necessary. More specific notation is defined
at first occurrence.

\section{Fundamentals of Quantization}
\label{sec:intro_fundamentals}
For the purposes of this section, a quantizer operates on signals $x$, viewed as vectors
in a bounded set $\qV\subset\mathbb{R}^n$. The goal of a quantizer
$\qQop(\cdot)$ is to represent those signals as accurately as possible
using a rate of \qR\ bits, \ie using a quantization point
$\qq=\qQop(x)$ chosen from a set of $2^\qR$ possible ones often
referred to as codebook. Of course, when $\qV$ contains an infinite
number of signals, signals will be distorted through this
representation.

In this section, we first define common quantization performance
metrics and determine fundamental bounds on the performance of a
quantizer. Then, in preparation for the next sections, we examine
common approaches to quantization, namely scalar and Sigma-Delta
quantization, which are very useful in compressive sensing
applications.

\subsection{Quantization Performance Bounds}
\label{sec:intro_vector}
To measure the accuracy of the quantizer we consider the distortion, i.e.,  the $\ell_2$ distance of a
quantization point from its original signal $\|x-\qQop(x)\|_2$. The
overall performance of the quantizer is typically evaluated either using the
average distortion over all the signals---often computed using a
probability measure on the signal space \qV---or using the worst case
distortion over all signals in \qV. In this chapter, in the spirit of
most of the compressed sensing literature, we quantify the performance
of the quantizer using the worst case distortion on any signal, \ie
\begin{align}
\qerr=\sup_{x\in \qV}\|x-\qQop(x)\|_2.
\end{align}
This choice enables very strong guarantees, irrespective of the
accuracy of any probabilistic assumption on the signal space.

A lower bound on the distortion of any quantizer can be derived by
constructing a covering of the set \qV. A covering of radius $r$ is a
set of points \qq\ such that each element in \qV\ has distance at most
$r$ from its closest point in the covering. If we can construct a
covering using $P$ points, then we can also define a quantizer that
uses $\qR=\lceil\log_2 P\rceil$ bits and has worst case distortion $\qerr=r$ as
each signal is quantized to the closest point in the covering.

To determine a lower bound for the number of points in such a
covering, we consider balls of radius $r$ centered {at} $\qq$, defined as
\begin{align}
  \ball{r}{\qq}=\left\{\left.x\in\mathbb{R}^n\right|\|\qq-x\|_2\le r\right\}.
\end{align}
Since each signal in \qV\ is at most $r$ away from some point in the
covering, if we place a ball of radius $r$ at the center of each point
of the covering, then the union of those balls covers \qV. Thus, the total volume of the balls should be at
least as large as the volume of the set, denoted
$\mathrm{vol}(\qV)$. Since the volume of a ball of radius $r$ in $n$
dimensions is
$\mathrm{vol}(\ball{r}{\qq})=r^n\pi^{n/2}/\Gamma(1+n/2)$, where
$\Gamma(\cdot)$ is the Gamma function, the best possible error given
the rate \qR\ can be derived using
\begin{align}
  \mathrm{vol}(\qV)\le
  \tfrac{\pi^{n/2}}{\Gamma\left(1+\frac{n}{2}\right)}\,2^\qR\,r^n
  \Rightarrow r \gtrsim 2^{-\frac{\qR}{n}}.
\end{align}
In other words, the worst-case error associated with an optimal quantizer can, at best, 
decay exponentially as the bit rate increases. Moreover, the decay rate
depending on the ambient dimension of the signal. In short,
\begin{align} \qerr&\gtrsim 2^{-\frac{\qR}{n}}.
\end{align}
The smallest achievable worst case distortion for a set is also known
as the $(\qR+1)$-dyadic entropy number of the set, whereas the number
of bits necessary to achieve a covering with worst-case distortion
equal to $\qerr$ is known as the Kolmogorov $\qerr$-entropy or metric
entropy of the set.

For the models commonly assumed in compressive sensing, these
quantities are not straightforward to calculate and depend on the
sparsity model assumed. For example, compressible signals are commonly modeled as 
being drawn from a
unit $\ell_p$ ball, where $0<p<1$ (cf.~Chapter 1 for a discussion on compressibility). In this case, the worst case distortion is bounded by
\begin{align}
  \qerr\ \gtrsim\ \left\{
  \begin{array}{cl}
    1&\mbox{if}~1\le \qR\le \log_2 n\\    \,\left(\tinv{\qR}\log_2(\tfrac{n}{\qR}+1)\right)^{\inv{p}-\inv{2}}&\mbox{if}~\log_2 n\le \qR\le n\\[1mm]
    \,2^{-\frac{\qR}{n}}n^{\inv{2}-\inv{p}}&\mbox{if}~\qR\ge n,
  \end{array}
  \right.  
\end{align}
where the constant implicit in our nation is independent of \qR\ and
$n$~\cite{schutt1984entropy,edmunds1996function,kuhn2001lower,candes2006encoding}.

In the case of exactly $k$-sparse signals, the volume of the union of
subspaces they occupy has measure zero in the $n$-dimensional ambient
space. However, by considering the $\binom{n}{k}$ $k$-dimensional
subspaces and coverings of their unit balls, a lower bound on the error can be
derived~\cite{BB_DCC07}, namely
\begin{align}
  \qerr\gtrsim \frac{2^{-\frac{\qR}{k}}n}{k}.
  \label{eq:sparse_lower}
\end{align}
Note that this lower bound can be achieved in principle using standard
transform coding (TC), \ie by first representing the signal using its
sparsity basis, using $\log_2\binom{n}{k} \lesssim k\log_2(n/k)$ bits to
represent the support of the non-zero coefficients and using the
remaining bits to represent the signal in the $k$-dimensional subspace
at its Kolmogorov entropy
\begin{align}
  \qerr_{\mathrm{TC}} \lesssim 2^{-\frac{\qR-k\log_2(n/k)}{k}} =\frac{2^{-\frac{\qR}{k}}n}{k}.
\end{align}
Unfortunately, compressive sensing systems do not have direct access to
the sparse vectors. They can only access the measurements, $y=Ax$,
which must be quantized upon acquisition---in practice using analog
circuitry. Thus, transform coding is not possible. Instead, we must
devise simple quantization algorithms that act directly on the
measurements in such a way that permits accurate reconstruction.
\subsection{Scalar Quantization}
\label{sec:intro_scalar}
The simplest approach to quantization is known as {\em scalar
  quantization} and often referred to as {\em pulse code modulation}
(PCM), or {\em memoryless scalar quantization} (MSQ). Scalar
quantization directly quantizes each measurement of the signal,
without taking other measurements into account. In other words a
1-dimensional, \ie scalar, quantizer is applied separately to each
measurement of the signal.

\subsubsection{Measurement and Scalar Quantization}
\label{sec:meas-scal-quant}

A scalar quantizer can be defined using a set of levels, $\qQal
=\{\qlev_i \in \bb R: \qlev_j < \qlev_{j+1}\}$, comprising the
quantization codebook, and a set of thresholds $\qTal = \{\qthr_i \in
\overline{\bb R}: \qthr_j < \qthr_{j+1}\}$, implicitly defining the
quantization intervals $\qcell_j = [\qthr_j,\qthr_{j+1})$. Assuming no
  measurement noise, the quantizer is applied element-wise to the
  measurement coefficients, $y=Ax$, to produce the quantized
  measurements $\qq=Q(y),~\qq_i=Q(y_i)$. Using a rate of \qB\ bits per
  coefficient, \ie $\qR=m\qB$ total bits, the quantizer represents
  $\qL=2^\qB$ total levels per coefficient. A scalar value $y_i$
  quantizes to the quantization level corresponding to the
  quantization interval in which the coefficient lies.
\begin{equation}
  \label{eq:scalar-quant-def}
  \qQop(y_i) = \qlev_j\quad\Leftrightarrow\quad y_i \in \qcell_j.  
\end{equation}

A scalar quantizer is designed by specifying the quantization levels
and the corresponding thresholds. Given a source signal with
measurements modeled as a continuous random variable $X$, a
\emph{(distortion) optimal} scalar quantizer minimizes the error
\begin{equation}
  \label{eq:distortion-optimal}
  \bb E|X - \qQop(X)|^2. 
\end{equation}
Such an optimal quantizer necessarily satisfies the Lloyd-Max
conditions~\cite{Lloyd1982,Max1960}
\begin{equation}
  \label{eq:LM-condition}
  \qlev_j = \bb E\left\{X|X \in \qcell_j\right\},\qquad  \qthr_j =
  \tinv{2}(\qlev_j+\qlev_{j+1}),
\end{equation}
which define a fixed point equation for levels and thresholds and the
corresponding fixed-point iteration---known as the Lloyd-Max
algorithm---to compute them.

\begin{figure}[t]
  \centerline{\includegraphics[width=.7\linewidth]{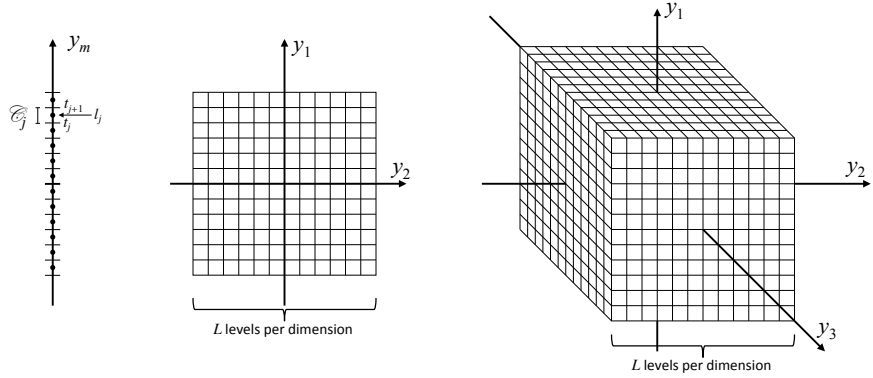}}
  \caption{A finite uniform scalar quantizer and the uniform grid it
    generates in 2 and 3 dimensions}
  \label{fig:quant_grid}
\end{figure}

Alternatively, a simpler design approach is the uniform scalar
quantizer, which often performs almost as well as an optimal scalar
quantizer design. It is significantly less complex and can be shown to
approach optimality as the bit-rate increases~\cite{gray1998quantization}. The
thresholds of a uniform scalar quantizer are defined to be
equi-spaced, \ie $\qthr_{j+1} - \qthr_j = \qD$, where $\qD$ is
referred to as the quantization bin width or \emph{resolution}. The
levels are typically set to the mid-point $\qlev_j = \tinv{2}(\qthr_j
+ \qthr_{j+1})$ of the quantization bin $\qcell_j$. Thus, the
quantization error introduced to each coefficient is bounded by
$\qD/2$. A uniform quantizer defines a uniform grid in the
$m$-dimensional measurement space, as shown in
Fig.~\ref{fig:quant_grid}.

In practical systems, the scalar quantizer has finite range, \ie it
saturates if the signal exceeds a saturation level \qS. In particular,
a uniform finite-range scalar quantizer using $B$ bits per coefficient
has quantization interval $\qD=\qS 2^{-B+1}$. If a coefficient exceeds
$S$, the quantizer maps the coefficient to the largest quantization
level, \ie it saturates.  Depending on the magnitude of the
coefficient, this may introduce significant error. However, it is
often convenient in theoretical analysis to assume an infinite
quantizer that does not saturate. This assumption is often justified,
as \qS\ in practice is set large enough to avoid saturation given a signal
class. As described in Sec.~\ref{sec:unif-scal-quant-with-saturation},
this is often suboptimal in compressive sensing applications.

Compared to classical systems, optimal scalar quantizer designs for
compressive sensing measurements require extra care. An optimal design with respect to the measurement error is
not necessarily optimal for the signal, due to the non-linear reconstruction inherent in compressed sensing. While specific designs have been derived for very
specific reconstruction and probabilistic signal models,
\eg~\cite{sun2009optimal,kamilov2011optimal}, a general optimal design
remains an open problem. Thus the literature has focused mostly, but
not exclusively, on uniform scalar quantizers. 

\begin{figure}[t]
  \centerline{\includegraphics[width=.8\linewidth]{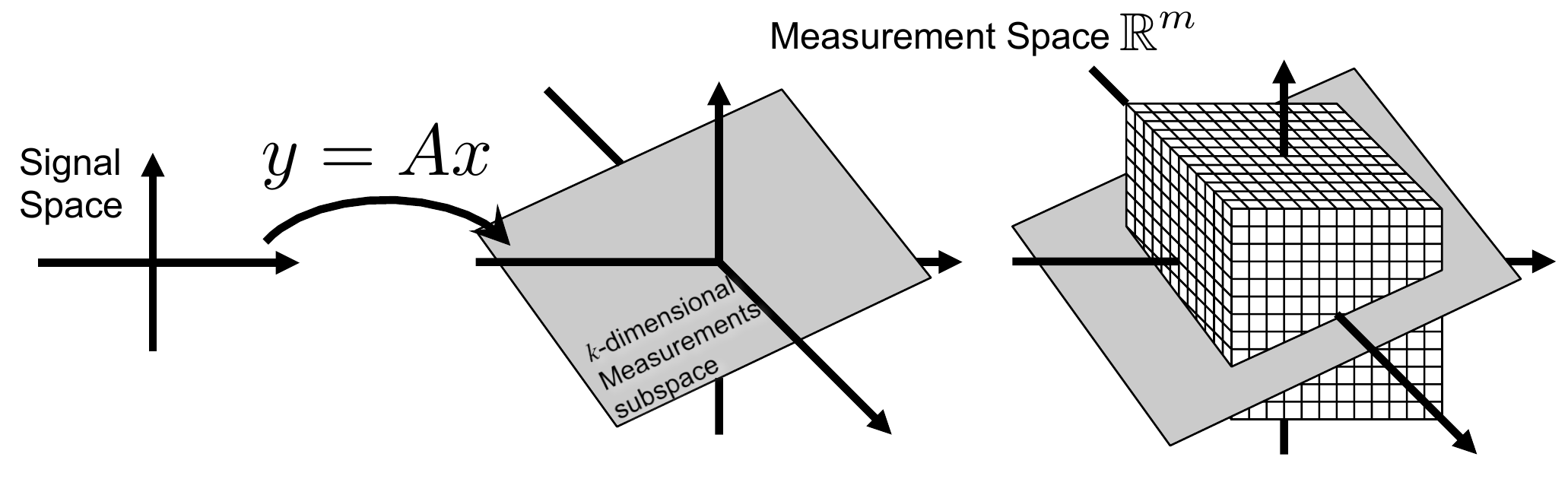}}
  \caption{A $k$-dimensional space measured using $m$ measurements
    spans a $k$-dimensional subspace of $\mathbb{R}^m$ and intersects
    only a few of the $L^m$ available quantization cells.}
  \label{fig:cell_intersect}
\end{figure}

\subsubsection{Scalar Quantization and Oversampling}
\label{sec:scal-quant-overs}

When a signal is oversampled, a scalar quantizer makes suboptimal use
of the bit-rate. The $k$-dimensional signal space mapped through the
measurement operator to an $m$-dimensional measurement space, where
$m>k$, spans, at most, a $k$-dimensional subspace of $\mathbb{R}^m$,
as shown in Fig.~\ref{fig:cell_intersect}. As evident from the figure,
this subspace intersects only a few of the available quantization
cells and, therefore, does not use the available bits effectively. For
an $\qL$-level quantizer, the number of quantization cells intersected
$I_{k,m,L}$ is bounded
by~\cite{bib:Thao96,goyal_1998_lowerbound_qc,bib:Boufounos06thesis}
\begin{align}
  I_{k,m,L}\lesssim\left(\frac{Lm}{k}\right)^k
\end{align}
Using a simple covering argument as in Sec.~\ref{sec:intro_vector}, it
is thus possible to derive a lower bound on the error performance as a
function of the number of measurements $m$
\begin{align}
  \qerr\gtrsim\left(\frac{2^{-\qB}k}{m}\right) \label{eq:scalar_bound1}
\end{align}
The bounds hold for any scalar quantizer design, not just uniform
ones.

Linear reconstruction, \ie reconstruction using a linear operator
acting on the scalar quantized measurements, does not achieve the bound
~\eqref{eq:scalar_bound1}~\cite{bib:Thao94,goyal_1998_lowerbound_qc}. The
quantization error using linear reconstruction can only decay as fast
as 
\begin{equation}
  \qerr\gtrsim\frac{2^{-\qB}k}{\sqrt{m}}.
  \label{eq:scalar_bound2}
\end{equation} 

Instead, {\em consistent}
reconstruction achieves the optimal bound in a number of
cases. Consistent reconstruction treats the quantization regions as reconstruction
constraints and ensures that the reconstructed signal $\hat{x}$
quantizes to the same quantization points when measured using the same
system. Thus in the oversampled setting where $A$ is an $m\times k$
matrix with $m>k$, and where $\qq=\qQop(Ax)$ one solves the problem:
\begin{align}
\label{eq:consistent-algo}
  \mathrm{find~any}~~\hat{x}&~~\mathrm{s.t.}~~\qq=\qQop(A\hat{x}).
\end{align}
If the measurement operator $A$ is a tight frame formed by an
  oversampled Discrete Fourier Transform (DFT), the root mean square
  error (RMSE) of such a reconstruction (with respect to a random
  signal model) decays as $O(1/m)$
  \cite{bib:Thao94,goyal_1998_lowerbound_qc}, \ie as
  \eqref{eq:scalar_bound1}. In the case of random
  frames with frame vectors drawn independently from a Gaussian distribution \cite{jacques2014error} or from a suitable distribution on the
  $(m-1)$-sphere \cite{powell_consistent}, the reconstruction method
  in \eqref{eq:consistent-algo}
  also displays RMSE and worst
  case reconstruction error decreasing as $O(1/m)$ and $O((\log m)/m)$, respectively.

The constraints imposed by consistent reconstruction are convex and
can be imposed on any convex optimization algorithm. This makes them
particularly suitable for a number of reconstruction algorithms
already used in compressive sensing systems, as we explore in
Sec.~\ref{sec:scalarCS}.

The bounds \eqref{eq:scalar_bound1} and
\eqref{eq:scalar_bound2}---which can be achieved with proper design of
the measurement process and the reconstruction algorithm---demonstrate
that the most efficient use of the rate $\qR=m\qB$ is in refining each
measurement using more bits per measurement, \qB, rather than in
increasing the number of measurements, $m$. They suggest that in terms
of error performance, by doubling the oversampling it is possible to
save 0.5 bits per coefficient if linear reconstruction is used and 1
bit per coefficient if consistent reconstruction is used. This means
that a doubling of the rate by doubling the oversampling factor, is
equivalent to a linear increase in the rate by $m/2$ or $m$ through an
increase in \qB, for linear and consistent reconstruction,
respectively. So in principle, if rate-efficiency is the objective, the
acquisition system should only use a sufficient number of measurements
to reconstruct the signal and no more. All the rate should be devoted
to refining the quantizer. However, these bounds ignore the practical advantages in oversampling
a signal, such as robustness to erasures, robustness to measurement
noise and implementation complexity of high-rate scalar quantizers. Thus in
practice, oversampling is often preferred, despite the
rate-inefficiency. Techniques such as Sigma-Delta quantization, which we discuss in Sec.~\ref{sec:intro_SD}, have been developed to
improve some of the trade-offs and are often used in conjunction with oversampling.

\subsubsection{Performance Bounds on Sparse Signals}
\label{sec:perf-bounds-sparse}

Scalar quantization in compressive sensing exhibits similar bounds as
scalar quantization of oversampled signals. Signals that are
$k$-sparse in $\bb R^n$ belong
to a union of $k$-dimensional subspaces. When measured using $m$
linear measurements, they occupy a union of $k$-dimensional subspaces
of $\mathbb{R}^m$, $\binom{n}{k}$ of them. Using the same counting
argument as above, it is evident that the number of quantization cells
intersected, out of the $L^m$ possible ones, is at most
\begin{align}
  \binom{n}{k}I_{k,m,L}\gtrsim\left(\frac{Lmn}{k^2}\right)^k
\end{align}
The resulting error bound is
\begin{align}
  \qerr&\gtrsim\frac{2^{-\qB}k}{m}\\
  &\gtrsim\frac{2^{-\frac{\qR}{m}}k}{m},
\end{align}
which decays slower than \eqref{eq:sparse_lower} as the rate increases
keeping the number of measurements $m$ constant. Furthermore, as the
rate increases with the number measurements $m$, keeping $\qB$, the
number of bits per measurement constant, the behavior is similar to
quantization of oversampled frames: the error can only decay linearly
with $m$.

These bounds are not surprising, considering the similarities of
oversampling and compressive sensing of sparse signals. It should,
therefore, be expected that more sophisticated techniques, such as
Sigma-Delta ($\Sigma\Delta$) quantization should improve performance,
as they do in oversampled frames. However, their application is not as
straightforward. The next section provides an overview of
$\Sigma\Delta$ quantization and Sec.~\ref{sec:SDCS} discusses in
detail how it can be applied to compressive sensing.
\subsection{Sigma-Delta Quantization}
\label{sec:intro_SD}
An alternative approach to the scalar quantization techniques detailed
in the previous section is feedback quantization. The underlying idea
is that the fundamental limits for the reconstruction accuracy
discussed above can be overcome if each quantization step takes into
account errors made in previous steps. The most common feedback
quantization scheme is $\sd$ quantization, originally introduced for
bandlimited signals in \cite{IYM62} (cf. \cite{IY63}). A simple $\sd$ scheme, illustrated in Figure \ref{fig:SD_blk}, shows this feedback structure.
\begin{figure}
\begin{center}
\includegraphics{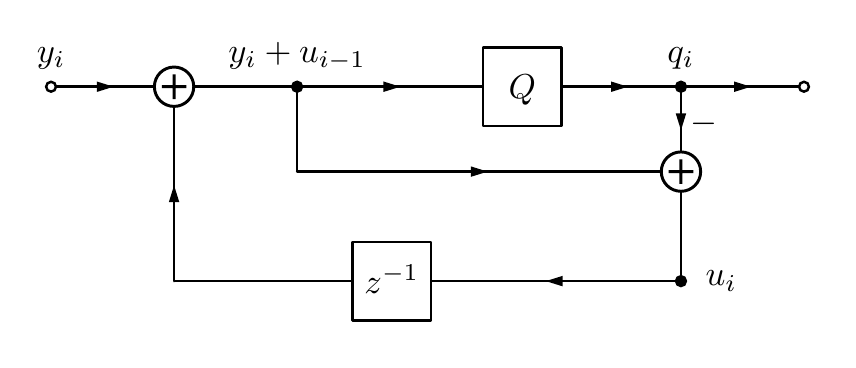}\label{fig:SD_blk}
\caption{A block diagram of a simple $1$st order $\sd$ scheme: The input $y_i$ is added to the state variable $u_{i-1}$ (initialized as $u_0 =0 $) and the sum is scalar quantized. Subsequently, the state variable is updated as the difference between the scalar quantizer's input and its output. More complex designs, featuring higher order $\sd$ quantization with more feedback loops are possible. We discuss such designs in more detail in  Section~\ref{sec:SDCS}.}
\end{center}
\end{figure}

A motivation
in \sd\ quantization is that, in some applications, reducing circuit
complexity is desirable, even at the expense of a higher sampling
rate. Indeed, \sd\ designs drastically reduce the required bit depth
per sample while allowing for accurate signal reconstruction using
simple circuits. In fact, since its introduction, $\sd$ quantization has seen widespread use (see,
\eg \cite{NST96} and the references therein) in applications ranging from audio coding to wireless communication. 

Nevertheless, a mathematical analysis of $\sd$ quantization in its full generality has been challenging. A preliminary
analysis of simple $\sd$ schemes for restricted input
classes (including constant input and sinusoidal input) was presented in \cite{G87} and follow-up works. However, most
of these results were limited to linear, or at best low-order polynomial error
decay in the oversampling rate.  This type of error decay is sub-optimal (albeit better than scalar quantization), and  rather far from the optimal exponential error decay.
Specifically, a major difficulty that prevented a
more comprehensive treatment was understanding the instabilities caused by
the positive feedback inherent to the $\sd$ circuit designs. For example, depending on the design of the $\sd$ scheme, the state variables could grow without bound. A crucial idea to
prevent such phenomena for arbitrary \emph{band-limited} inputs was
developed in \cite{DD03}; their analysis led, for the first-time, to
super-polynomial bounds for the error decay. To date, the best known
error bounds decay exponentially in the oversampling rate \cite{G03,
  DGK11}. While this is near-optimal (optimal up to constants in the exponent), it has been shown that with a fixed
bit budget per sample, the achievable rate-distortion relationship is
strictly worse than for scalar quantization of Nyquist rate samples
\cite{KW12}. That said, increasing the bit budget per sample entails more expensive and complex circuitry, which grows increasingly costly with every added bit (in fact, the best current quantizers provide a resolution of about 20 bits per sample). Thus, for quantizing bandlimited functions, if one wishes to improve the performance or reduce the cost, one must revert to oversampling-based methods such as $\sd$ quantization.

The accuracy gain of $\sd$ quantization is most prominent when a
significant oversampling rate and, therefore, a high redundancy of
samples is inherent or desired. Such redundant representations can
also be encountered in a finite-dimensional discrete context. Namely,
this corresponds to a finite frame expansion in the sense of
(1.32). This observation served as a motivation to devise $\sd$
schemes for finite-frame expansions, and the first such construction was
provided in \cite{BPY06}.  In contrast to oversampled representations
of bandlimited signals, which directly correspond to a temporal
ordering, finite frames generally do not have an inherent order, nor
are the frame vectors necessarily close enough to each other to allow
for partial error compensation.  Due to this difficulty, the first
works on $\sd$ quantization for finite frame expansions focus on
frames with special smoothness properties. Namely, they assume that
the frame $\Phi=\{\phi_j\}_{j=1}^N$ has a well controlled {\em frame
  variation} \[v_\Phi:=\sum_{j=1}^{N-1} \|\phi_{j+1}-\phi_j\|_2.\] The
constructions in \cite{BPY06} coupled with (linear) reconstruction via
the canonical dual frame (that is, the Moore-Penrose pseudo-inverse of the matrix that generates the redundant representation) was shown to yield an error decay on the
order of $v_{\Phi} N^{-1}$, \ie linear error decay whenever the frame
variation is bounded by a constant. By using more sophisticated $\sd$
schemes these results were later improved to higher order polynomial
error decay \cite{BPY06-2, BP07, BPA07} in the number of measurements,
thereby beating the bound \eqref{eq:scalar_bound1} associated with
scalar quantization. Again, these constructions require certain
smoothness conditions on the frame and employ the canonical dual frame
for recovery. In a slightly different approach, the design of the feedback 
and the ordering of the frame vectors has been considered as part of the quantizer design~\cite{BO_JASP06,bib:Boufounos06thesis}.

A new take on the frame quantization problem was initiated in
\cite{LPY10, BLPY10} where the authors realized that reconstruction
accuracy can be substantially improved by employing an appropriate
alternative dual frame (\ie, a different left-inverse) for recovery. At the core of this approach is
still a smoothness argument, but this time for the dual frame. Given a
frame, an appropriate dual frame, the so-called Sobolev dual, can be
obtained by solving a least-squares problem over the space of all
duals \cite{ BLPY10}. Again, this yields polynomial error decay,
albeit now in more general settings. Moreover, by optimizing over such
constructions, root-exponential error decay can be achieved
\cite{KSW12}.

While the definition of the Sobolev dual does not require any
smoothness of the frame, the concrete examples discussed in the
aforementioned works still exclusively focused on smooth
frames. Similar results on recovery guarantees for frames without
smoothness properties were first obtained for frames consisting of
independent standard Gaussian vectors \cite{GLPSY13} and subsequently
generalized to vectors with independent subgaussian entries
\cite{KSY13}.

The underlying constructions also form the basis for the $\sd$
quantization schemes for compressed sensing measurements. Details on
such schemes are given in Sec.~\ref{sec:SDCS}. The insight behind
the schemes is that the number of measurements taken in compressed
sensing is typically larger than the support size by at least a
logarithmic factor in the dimension, and there is an interest in
choosing it even larger than that, as this induces additional
stability and robustness. Thus, once the support of the signal has been
identified and only the associated signal coefficients need to be
determined, one is dealing with a redundant representation. The goal
is now to employ frame quantization schemes to exploit this
redundancy.

For typical compressed sensing matrices, any $k$ columns indeed form a
frame; this follows for example from the restricted isometry property.
However, as the support of the signal is not known when quantizing the
measurements, it is crucial that $\sd$ quantization is universal. That
is, it must not require knowledge regarding which of a given
collection of frames (namely, those forming the rows of an $m \times
k$ submatrix of $A$) has been used for encoding. The reconstruction
from the resulting digital encodings then typically proceeds in two
steps. First the support is identified using standard compressed
sensing recovery techniques, just treating the quantization error as
noise. In a second step, only the restriction of the measurement
matrix to the identified support columns is considered. For the frame
consisting of the rows of this matrix, one then applies frame
quantization reconstruction techniques. Recovery guarantees for such
an approach have been proven for Gaussian measurements \cite{GLPSY13}
and measurements with independent subgaussian entries \cite{KSY13}. It
is of great importance that the dual frame used for recovery is chosen
properly (\eg the Sobolev dual), as it follows from the RIP that the
frames never have a small frame variation. Here again the recovery
error bounds decay polynomially in the number of measurements and beat
the analogous bounds for scalar quantization.

Preliminary steps towards a unified approach to support and signal
recovery have been considered in \cite{C13}. The reconstruction
techniques studied in this work, however, intrinsically rely on
certain non-convex optimization problems, for which no efficient
solution methods are known. Thus the quest remains open for an
integrated approach to reconstruction from $\sd$-quantized compressed
sensing measurements that combines numerical tractability and
guaranteed recovery.

\section{Scalar Quantization and Compressive Sensing}
\label{sec:scalarCS}

The interplay of scalar quantization and compressed sensing has been
widely explored in the literature. In addition to the lower bounds
discussed in~\ref{sec:perf-bounds-sparse}, there is significant
interest in providing practical quantization schemes and
reconstruction algorithms with strong performance guarantees.

This part explores these results. Our development considers
the following quantized compressed sensing (QCS) model:
\begin{equation}
  \label{eq:qcs-pcm}
  q = \qQop(y) = \qQop(A x),  
\end{equation}
where $x \in \bb R^n$ and $A \in \bb R^{m \times n}$. The sensing
matrix can be, for instance, a random Gaussian sensing matrix $A$ such
that $a_{ij} \sim_{\rm iid} \cl N(0,1)$. Note that the scaling of the
  entries of the sensing matrix should be independent of $m$. This
  allows us to fix the design of the scalar quantizer $\qQop$ since
  the dynamic range of the components of $Ax$ is then independent of
  the number of measurements. This has no consequence on some of the
  common requirements the sensing matrix must satisfy, such as the
  Restricted Isometry Property (see Chap. 1), as soon as an
  appropriate rescaling of $A$ is applied. For instance, if $A$ has
  RIP of order $2k$ and if $A \to \lambda A$ for some $\lambda>0$,
  then $A/\lambda$ has RIP of the same order and the error bound
  (1.20) in the stability Theorem 1.6 remains
  unchanged~\cite{Jacques2010}.

The first two parts, Sec.~\ref{sec:unif-scal-quant} and
Sec.~\ref{sec:nonunif-scal-quant}, focus on the \emph{high resolution
  assumption} (HRA) that simplifies the QCS model. Under HRA, the
quantization bin widths---$\qD$ or the distance between two
consecutive thresholds---are small with respect to the dynamic range
of the unquantized input. This allows us to model the quantization
distortion $\qQop(A x) - Ax$ as uniform white noise
\cite{gray1998quantization}. Determining bounds on its power and
moments can better constrain signal reconstruction methods, such as the
basis pursuit denoise (BPDN) program
\cite{Chen98atomic,candes2008rip}, which is commonly used for
reconstructing signals whose CS measurements are corrupted by
Gaussian noise. However, the price to pay is an oversampling in CS
measurements.

Sec.~\ref{sec:unif-scal-quant-with-saturation} considers scalar
quantizers with \emph{saturation}. Saturation induces information loss
in the measurements exceeding the saturation level. However,
democracy---a key property of compressive sensing measurements that
makes every measurement equally informative---provides robustness
against such corruption.

In Sec.~\ref{sec:1bitCS}, very low-resolution quantization is studied
through 1-bit compressed sensing.  In this case, the HRA cannot be
assumed anymore---the quantization bins are the two semi-infinite
halves of the real line---and the analysis of the QCS model relies on
high dimensional geometric arguments.

Finally, Sec.~\ref{sec:extensions} studies how noise, either on the
signal or on the measurements, can impact the QCS model
\eqref{eq:qcs-pcm}, the reconstruction error and the quantizer
trade-offs. In particular, at constant bit budget $\qR = m\qB$, the
total noise power determines the optimal trade-off between quantizer
precision and number of measurements.

\subsection{Uniform Scalar Quantization}  
\label{sec:unif-scal-quant}

First we consider the QCS model \eqref{eq:qcs-pcm} using a uniform
quantizer with resolution $\qD$ and a set of levels \qQal,
$$
q = \qQop(y) = \qQop(A x) \in \qQal^m,
$$ 
measuring a signal $x\in \bb R^n$ using a sensing matrix $A\in\bb
R^{m\times n}$. For simplicity, we assume henceforth that $x$ is
sparse in the canonical basis, \ie $\Psi = \Id$.

We consider a quantizer $\qQop$ that has uniform quantization regions, \ie
$t_{j+1}-t_j = \qD$ for all $j$, and, setting $t_j = j \qD$,
quantization levels
$\qlev_j = \frac{t_j+t_{j+1}}{2} = (j +\tinv{2})
\qD$ in $\qQal$. 

By definition, the signal $x$ satisfies the following \emph{quantization consistency}
constraint (QC$_u$)
\begin{equation}
  \label{eq:qc-unfi}
  \|\qq - Ax\|_{\infty} \leq \qD/2. \tag{QC$_{\rm u}$}
\end{equation}
From this fact, we can also deduce that 
$$
\|A x - q\|_2 \leq \sqrt{m}\,\|A x - q\|_\infty
\leq \sqrt m \qD/2.
$$ 
This shows that the QCS model can be assimilated to a noisy CS model
\begin{equation}
  \label{eq:noisy-unfi-QCS}
  q = \qQop(A x) = Ax + \noise,
\end{equation}
with a ``noise'' $\noise = \qQop(A x) - Ax$ of bounded $\ell_2$-norm, \ie
$\|\noise\|_2 \leq \sqrt m \qD/2$. 

The quantization noise power can be further reduced using the high
resolution assumption. Under this assumption, the coefficients of $y$
may lie anywhere in the quantization region determined by the
coefficients of $q$ and it is natural to model the quantization
distortion $\noise$ as a uniform white noise, \ie
$$
\noise_i \sim_{\rm iid}\ \cl
U([-\qD/2,\qD/2]).
$$ 
Under this model, a simple use of the
Chernoff-Hoeffding bound \cite{hoeffding1963pis} provides, with high probability
$$
\|\noise\|^2_2 \leq \epsilon^2_2 := \tfrac{\qD^2}{12} m + \zeta
\tfrac{\qD^2}{6\sqrt 5} m^{1/2},
$$
for a small constant $\zeta>0$.

The first approach in modeling and understanding QCS exploited this
bound and the development of noise-robust CS approaches to impose a
\emph{distortion consistency constraint} (DC$_u$)~\cite{candes2006ssr}
\begin{equation}
  \label{eq:dc-unfi}
  \|\qq - Ax'\|_2 \leq \epsilon_2, \tag{DC$_{\rm u}$}
\end{equation}
on any candidate signal $x'$ estimating $x$. This was indeed a natural
constraint to consider since most noise-robust compressed sensing
reconstruction methods can incorporate a bounded $\ell_2$-norm
distortion on the measurements. For instance, the BPDN program can
find a solution $\hat x$ of
\begin{equation}
  \label{eq:BPDN}
\hat x\ =\ \arg\,\min_{z} \|z\|_1\ \st\ \|q - Az\|_2 \leq \epsilon_2. \tag{BPDN}  
\end{equation}
Then, if the sensing matrix $A' = A/\sqrt{m}$ satisfies the RIP with constant
$\delta \leq 1/\sqrt{2}$ on $2k$ sparse signals, it is known \cite{Cai14} that
$$
\|x - \hat x\|_2 \lesssim \tinv{\sqrt m}\,\epsilon_2 + \tinv{\sqrt k}\,\sigma_k(x)_1 \asymp \qD + \tinv{\sqrt k}\,\sigma_k(x)_1,
$$
where $\sigma_k(x)_1$ is the best $k$-term approximation
  defined in (1.2).
  
This approach has two drawbacks. First, there is no guarantee that the
solution $\hat x$ satisfies the QC$_{\rm u}$ constraint above, \ie
$\|\qq - A\hat x\|_{\infty} \nleq \qD/2$. This shows that some sensing
information has been lost in the reconstruction. Moreover, as
described in Sec.~\ref{sec:scal-quant-overs}, the consistency of the
solution helps in reaching the lower bound
\cite{goyal_1998_lowerbound_qc,powell_consistent,jacques2014error}
\[ (\bb E \|x - \hat x\|^2)^{1/2} \gtrsim \tfrac{k}{m}\,\qD
\] 
in the oversampled setting.  Second, from a maximum \emph{a posteriori}
standpoint, since every constrained optimisation corresponds to an
unconstrained Lagrangian formulation, imposing a small $\ell_2$-norm
on the residual $q - A \hat x$ can be viewed as enforcing a Gaussian
distribution on $\noise$, which is not the uniform one expected from
the HRA.

To circumvent these two limitations, \cite{Jacques2010} studied the
Basis Pursuit DeQuantizer (BPDQ$_p$) program
\begin{equation}
  \label{eq:BPDQ}
\hat x_p = \arg\,\min_{z}\ \|z\|_1\ \st\ \|q - Az\|_p \leq \epsilon_p, \tag{BPDQ$_p$}
\end{equation}
where $\epsilon_p$ must be carefully selected in order for $x$ to be a
feasible point of this new $\ell_p$-constraint. If $\epsilon_p\to \qD$
as $p\to\infty$, the BPDQ$_p$ solution $\hat x_p$ tends to be
consistent with the quantized measurements. But what is the price to
pay, \eg in terms of number of measurements, for being allowed to
increase $p$ beyond $2$?

To answer this, we need a variant of the restricted isometry property.
\begin{definition}
  Given two normed spaces $\mathcal{X}=(\bb R^m,\|\cdot\|_{\cl X})$
  and $\mathcal{Y}=(\bb R^n,\|\cdot\|_{\cl Y})$ (with $m< n$), a matrix
  $A\in\bb R^{m\times n}$ has the Restricted Isometry
  Property from $\mathcal{X}$ to $\mathcal{Y}$ at order
  $k\in\bb N$, radius $0\leq\delta<1$ and for a normalization
  $\mu>0$, if for all $x\in\Sigma_k := \{ u \in \bb R^N: \|u\|_0 \leq k\}$,
\begin{equation}
  \label{eq:rip-p}
  (1-\delta)^{1/\kappa}\,\|x\|_{\mathcal{Y}} \leq \tinv{\mu} \|A x\|_{\mathcal{X}} \leq
  (1+\delta)^{1/\kappa}\,\|x\|_{\mathcal{Y}},
\end{equation}
the exponent $\kappa$ depending on the spaces $\cl X$ and $\cl Y$. To lighten notation, we write that
$A$ is RIP$_{\mathcal{X},\mathcal{Y}}(k,\delta,\mu)$.
\end{definition}

In this general definition, the common RIP is equivalent to
RIP$_{\ell_2^m,\ell_2^n}(k,\delta,1)$ with $\kappa=2$ (see Chap. 1, Eq.
(1.10)). Moreover, the RIP$_{p,k,\delta'}$ defined in \cite{Berinde2008}
is equivalent to the RIP$_{\ell_p^m,\ell_p^n}(k,\delta,\mu)$ with
$\kappa=1$, $\delta'=2\delta/(1-\delta)$ and $\mu=1/(1-\delta)$. Finally, the
Restricted $p$-Isometry Property proposed in
\cite{chartrand2008restricted} is also equivalent to the
RIP$_{\ell_p^m,\ell_2^n}(k,\delta,1)$ with $\kappa=p$.

To characterize the stability of BPDQ we consider the space $\cl X =
\ell_p^m := (\bb R^m, \|\cdot\|_p)$ and $\cl Y = \ell_2^n := (\bb R^m,
\|\cdot\|_2)$ with $\kappa=1$, and we write RIP$_{p}$ as a shorthand
for RIP$_{\ell_p^m,\ell_2^n}$. At first sight, it could seem
  unnatural to define an embedding of $\cl X=\ell^n_p$ in $\cl Y =\ell^m_2$ for $p \neq 2$,
those spaces being not isometrically isomorphic to each other for $m = n$. However,
the RIP$_p$ rather sustains the possibility of an isometry between
$\cl X
\cap A\Sigma_k$ and $\cl Y\cap\Sigma_k$. We will see in
Prop.~\ref{prop:grip-gauss} that the existence of such a relation comes with
an exponential growth of $m$ as $p$ increases, a phenomenon that can
be related to Dvoretsky's theorem when specialized to those Banach
spaces \cite{ledoux}.

From this new characterization,
one can prove the following result.
\begin{theorem}[\cite{Jacques2010,Jacques2013}]
  \label{prop:l2-l1-instance-optimality-BPDQ}
  Let $k\geq 0$, $2\leq p <\infty$ and $A\in \bb R^{m\times n}$ be a RIP$_{p}(s,\delta_s,\mu_p)$ matrix for
  $s\in\{k,2k,3k\}$ and some normalization constant $\mu_p>0$. If
  \begin{equation}
    \label{eq:cond-on-delta-p}
    \delta_{2k}+\sqrt{(1+\delta_k)(\delta_{2k}+\delta_{3k})(p - 1)} < 1/3,   
  \end{equation}
then, for any signal $x\in\bb R^n$ observed according to the noisy
sensing model $y=A x + n$ with $\|n\|_{p}\leq
  \epsilon_p$, the unique solution $\hat x_p$  obeys
\begin{equation}
\label{eq:BPDQ-l2-l1-inst_opt}
\|x^* - x\| \ \leq\ 4\,\tinv{\sqrt k}\,\sigma_k(x)_1\ +\ 8\,\epsilon_p/\mu_p,
\end{equation}
where, again, $\sigma_k(x)_1$ denotes the best $k$-term approximation.
\end{theorem}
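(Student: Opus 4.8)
The plan is to adapt the classical cone-plus-tube analysis of $\ell_1$-minimization to the $\ell_p$ geometry dictated by the RIP$_p$ hypothesis, recovering the parallelogram-law argument of the $\ell_2$ case exactly when $p=2$. Write $h := \hat x_p - x$ for the recovery error. Since $\|y - Ax\|_p = \|n\|_p \le \epsilon_p$, the vector $x$ is feasible for \eqref{eq:BPDQ}, so minimality of $\|\hat x_p\|_1$ gives $\|\hat x_p\|_1 \le \|x\|_1$, while the triangle inequality on the $\epsilon_p$-ball around $y$ gives the \emph{tube condition} $\|A h\|_p \le 2\epsilon_p$. The goal is to turn these two facts, together with RIP$_p$ at orders $k$, $2k$, $3k$, into a bound on $\|h\|_2$.

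The first step is the standard support bookkeeping, which is insensitive to $p$. Let $T_0$ index the $k$ largest-magnitude entries of $x$, split $T_0^c$ into successive blocks $T_1,T_2,\dots$ of size $k$ ordered by decreasing magnitude of the entries of $h_{T_0^c}$, and set $T_{01} := T_0 \cup T_1$. From $\|\hat x_p\|_1 \le \|x\|_1$ one obtains the cone condition $\|h_{T_0^c}\|_1 \le \|h_{T_0}\|_1 + 2\sigma_k(x)_1$, and the usual shelling estimate then yields
\[
\sum_{j \ge 2}\|h_{T_j}\|_2 \;\le\; \tinv{\sqrt k}\,\|h_{T_0^c}\|_1 \;\le\; \|h_{T_{01}}\|_2 + \tfrac{2}{\sqrt k}\,\sigma_k(x)_1 ,
\]
so that $\|h\|_2 \le \|h_{T_{01}}\|_2 + \sum_{j\ge2}\|h_{T_j}\|_2 \le 2\|h_{T_{01}}\|_2 + \tfrac{2}{\sqrt k}\sigma_k(x)_1$. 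It therefore suffices to bound the head term $\|h_{T_{01}}\|_2$ by $\tinv{\sqrt k}\sigma_k(x)_1 + 4\,\epsilon_p/\mu_p$.

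The head bound is the main obstacle, and the only place where $p \ne 2$ genuinely changes things, because for $p\ne 2$ there is no inner product with which to express the ``approximate orthogonality'' of $A h_{T_{01}}$ and the interference $\sum_{j\ge2}A h_{T_j}$. I would replace it by the normalized duality map $J_p(w) := \|w\|_p^{2-p}\,|w|^{p-1}\sign(w)$, which is the gradient of $\tinv{2}\|\cdot\|_p^2$ and satisfies $\langle J_p(w),w\rangle = \|w\|_p^2$ and $\|J_p(w)\|_{p/(p-1)} = \|w\|_p$. Convexity of $\tinv{2}\|\cdot\|_p^2$ at the point $A h_{T_{01}}$ gives $\|A h\|_p^2 \ge \|A h_{T_{01}}\|_p^2 + 2\sum_{j\ge2}\langle J_p(A h_{T_{01}}), A h_{T_j}\rangle$, so combining the tube condition with the lower RIP$_p$ bound $\|A h_{T_{01}}\|_p \ge \mu_p(1-\delta_{2k})\|h_{T_{01}}\|_2$,
\[
\mu_p^2(1-\delta_{2k})^2\,\|h_{T_{01}}\|_2^2 \;\le\; 4\epsilon_p^2 + 2\sum_{j\ge2}\bigl|\langle J_p(A h_{T_{01}}),A h_{T_j}\rangle\bigr| .
\]
The crux is the $\ell_p$ \emph{quasi-orthogonality} estimate: for disjointly supported $u$ ($2k$-sparse) and $v$ ($k$-sparse),
\[
\bigl|\langle J_p(A u),A v\rangle\bigr| \;\lesssim\; \mu_p^2\,\sqrt{(1+\delta_k)(\delta_{2k}+\delta_{3k})(p-1)}\;\|u\|_2\,\|v\|_2 .
\]
I expect essentially all the difficulty to sit here. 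The natural route is a polarization-type argument: view $\langle J_p(Au),Av\rangle$ as $\tfrac{d}{dt}\big|_{t=0}\tinv{2}\|A(u+tv)\|_p^2$, sandwich $\tinv{2}\|A(u+tv)\|_p^2$ between the sharp $2$-smoothness inequality for $\ell_p^m$ (whose modulus carries the factor $\sqrt{p-1}$ for $p\ge2$ --- this is precisely the source of the $p-1$ in \eqref{eq:cond-on-delta-p}) and the convexity inequality, invoke RIP$_p$ on the $2k$-, $k$- and $3k$-sparse vectors $u$, $v$ and $u\pm tv$ (which is why the three orders $k$, $2k$, $3k$ all occur), and optimize over $t>0$.

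With the quasi-orthogonality estimate available, I would substitute it into the displayed inequality together with $\sum_{j\ge2}\|h_{T_j}\|_2 \le \|h_{T_{01}}\|_2 + \tfrac{2}{\sqrt k}\sigma_k(x)_1$, producing a quadratic inequality in $a := \|h_{T_{01}}\|_2$ whose leading coefficient is $\mu_p^2$ times a quantity bounded away from zero exactly under hypothesis \eqref{eq:cond-on-delta-p}; the threshold $1/3$ is calibrated so that, after the factor-$2$ loss in $\|h\|_2 \le 2\|h_{T_{01}}\|_2 + \tfrac{2}{\sqrt k}\sigma_k(x)_1$, the surviving constants come out as the stated $4$ and $8$. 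Solving the quadratic and dividing through by $\mu_p$ gives $\|h_{T_{01}}\|_2 \lesssim \epsilon_p/\mu_p + \tinv{\sqrt k}\sigma_k(x)_1$ with explicit constants, and combining with the head-tail split yields \eqref{eq:BPDQ-l2-l1-inst_opt}. Uniqueness of $\hat x_p$ is handled by a separate argument (applying the same estimate with $\epsilon_p=0$ and $\sigma_k(x)_1=0$ to the difference of two putative minimizers, supplemented by strict convexity of the $\ell_p$ constraint), which is not where the difficulty lies.
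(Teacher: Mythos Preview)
The paper does not actually give a proof of this theorem: immediately after the statement it says only that ``This theorem follows by generalizing the fundamental result of Cand\`es in \cite{candes2008rip} to the particular geometry of Banach spaces $\ell_p^m$'', deferring all details to \cite{Jacques2010,Jacques2013}. Your proposal is exactly such a generalization --- the cone/tube bookkeeping, the shelling estimate, and the quadratic endgame are carried over verbatim from Cand\`es, with the Euclidean inner product replaced by the duality map $J_p$ and the parallelogram identity replaced by an $\ell_p$ polarization estimate yielding the $(p-1)$ factor --- and this is indeed the route taken in the cited references. So your plan matches the paper's (one-sentence) indication and the underlying source.

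One technical caution on the crux step. For $p>2$ the space $\ell_p^m$ is \emph{not} $2$-uniformly smooth in the standard sense (its modulus of smoothness is of order $\tau^p$, not $\tau^2$), so the phrase ``sharp $2$-smoothness inequality for $\ell_p^m$'' is not quite the right hook. What \cite{Jacques2010} actually uses is a direct pointwise inequality bounding $\bigl|\langle J_p(a),b\rangle\bigr|$ in terms of $\|a\pm tb\|_p^2-\|a\|_p^2$ (equivalently, a one-sided Taylor-type estimate for $\tfrac{1}{2}\|\cdot\|_p^2$ along the segment), which after optimizing in $t$ and feeding in the RIP$_p$ bounds on the $k$-, $2k$- and $3k$-sparse vectors produces the factor $\sqrt{(1+\delta_k)(\delta_{2k}+\delta_{3k})(p-1)}$. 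Your instinct that ``essentially all the difficulty sits here'' is correct; just be aware that the $(p-1)$ does not come from a Banach-space smoothness modulus but from this more hands-on estimate.
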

This theorem follows by generalizing the fundamental result of
Cand\`es in \cite{candes2008rip} to the particular geometry of Banach
spaces $\ell_p^m$. It shows that, if $A$ is RIP$_p$ with particular
requirement on the RIP$_p$ constant, the BPDQ$_p$ program is stable
under both measurement noise corruption and departure from the strict
sparsity model, as measured by $e_0$. In particular, under the same
conditions, given a measurement noise $\noise$ and some upper bounds
$\epsilon_p$ on its $\ell_p$-norm, \eqref{eq:BPDQ-l2-l1-inst_opt}
provides the freedom to find the value of $p$ that minimizes
$\epsilon_p/\mu_p$.

This is exactly how QCS signal recovery works. Following Theorem
\ref{prop:l2-l1-instance-optimality-BPDQ} and its stability result
\eqref{eq:BPDQ-l2-l1-inst_opt}, we jointly determine a RIP$_p$ sensing
matrix with known value $\mu_p$ and a tight error bound $\epsilon_p$
on the $\ell_p$ norm of the residual $q-Ax$ under HRA. The existence
of a RIP$_p$ matrix is guaranteed by the following
result~\cite{Jacques2010,Jacques2013}.
\begin{proposition}[\bf RIP$_{p}$ Matrix Existence]
\label{prop:grip-gauss}
Let a random Gaussian sensing matrix $A\in \bb R^{m\times n}$ be such
that $a_{ij} \sim_{\rm iid} \cl N(0,1)$, $p\geq 1$ and $0\leq \eta <
1$. Then, $A$ is RIP$_{p}(k,\delta_k,\mu_p)$ with probability higher than
$1-\eta$ when we have jointly $m\geq 2^{p+1}$ and 
\begin{align}
\label{eq:SGR-RIP-measur-bound}
m \geq
m_0^{\max(p/2,1)}\quad \text{with}\ m_0 = O(\delta_k^{-2}\,
\big(k\log(\tfrac{n}{k}) + k \log(\delta_k^{-1}) + \log\tfrac{2}{\eta}\big)).
\end{align}
Moreover, $\mu_p = \Theta(m^{1/p} \sqrt{p+1})$.
\end{proposition}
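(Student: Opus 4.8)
The plan is the standard net\,+\,concentration proof of a restricted isometry property, carried out in the $\ell_p^m$-geometry of the codomain. By $1$-homogeneity of both inequalities in \eqref{eq:rip-p} (with $\kappa=1$) it suffices to control $\|Ax\|_p$ as $x$ ranges over the unit-norm $k$-sparse vectors, and I will take the normalization to be $\mu_p:=\mathbb{E}\,\|g\|_p$ with $g\sim\mathcal{N}(0,\Id_m)$. Since the rows of $A$ are i.i.d.\ $\mathcal{N}(0,\Id_n)$, for a fixed unit vector $x$ the vector $Ax$ has i.i.d.\ $\mathcal{N}(0,1)$ entries; hence $\mathbb{E}\,\|Ax\|_p=\mu_p$ and $\|Ax\|_p^p$ has the law of $S:=\sum_{i=1}^m|g_i|^p$.

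The first task is to pin down $\mu_p$, which simultaneously yields the last sentence of the statement. The bound $\mu_p\le(\mathbb{E}S)^{1/p}=m^{1/p}(\mathbb{E}|g_1|^p)^{1/p}$ is Jensen, and Stirling applied to $\mathbb{E}|g_1|^p=2^{p/2}\Gamma(\tfrac{p+1}{2})/\sqrt{\pi}$ gives $(\mathbb{E}|g_1|^p)^{1/p}\asymp\sqrt{p+1}$ for $p\ge1$, so $\mu_p\lesssim m^{1/p}\sqrt{p+1}$. For the matching lower bound I would observe that a short Stirling (or duplication-formula) computation yields $\mathbb{E}|g_1|^{2p}\asymp 2^{p}\,(\mathbb{E}|g_1|^p)^2$ with an \emph{absolute} implied constant, whence $\operatorname{Var}(S)/(\mathbb{E}S)^2=\bigl(\mathbb{E}|g_1|^{2p}/(\mathbb{E}|g_1|^p)^2-1\bigr)/m\asymp 2^p/m$; under the hypothesis $m\ge 2^{p+1}$ this is bounded by an absolute constant, so a Paley--Zygmund (or one-sided Chebyshev) estimate gives $\mathbb{P}(S\ge\tfrac12\mathbb{E}S)=\Omega(1)$, and therefore $\mu_p=\mathbb{E}\,S^{1/p}\ge(\tfrac12\mathbb{E}S)^{1/p}\cdot\Omega(1)=\Omega(m^{1/p}\sqrt{p+1})$. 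Hence $\mu_p=\Theta(m^{1/p}\sqrt{p+1})$. (This is precisely the regime $p\lesssim\log m$ in which the Gaussian $\ell_p$-norm still behaves like the $p$-th root of its $p$-th moment; the failure of this for $m\ll 2^p$ is the Dvoretsky-type obstruction mentioned before the statement.)

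Next, per-vector concentration. The map $v\mapsto\|v\|_p$ on $\bb R^m$ is Lipschitz with respect to $\|\cdot\|_2$ with constant $L_p:=m^{(1/p-1/2)_{+}}$ (use $\|\cdot\|_p\le\|\cdot\|_2$ for $p\ge 2$, and $\|\cdot\|_p\le m^{1/p-1/2}\|\cdot\|_2$ for $1\le p\le2$), so Gaussian concentration of measure gives, for any unit vector $x$ and all $s>0$,
\[
\mathbb{P}\bigl(\,\bigl|\,\|Ax\|_p-\mu_p\,\bigr|>s\,\bigr)\ \le\ 2\exp\!\bigl(-s^2/(2L_p^2)\bigr).
\]
From the previous paragraph $(\mu_p/L_p)^2\asymp m^{\min(2/p,1)}(p+1)$ (the exponent of $m$ being $2/p-2(1/p-1/2)_{+}=\min(2/p,1)$), so taking $s=c^{-1}\delta'\mu_p$ converts this into a tail bound $2\exp\!\bigl(-c'\,\delta'^{2}\,m^{\min(2/p,1)}(p+1)\bigr)$ with absolute constants $c,c'>0$, valid for every $0<\delta'<1$.

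Finally, the covering argument. For each support $T$ with $|T|=k$ take a $\theta$-net $\mathcal{N}_T$ of the $\ell_2$-unit sphere of $\bb R^T$ with $|\mathcal{N}_T|\le(3/\theta)^k$, apply the above tail bound, and union-bound over $\mathcal{N}_T$ and over the $\binom{n}{k}\le(en/k)^k$ supports. On the good event, $\mu_p^{-1}\bigl|\,\|Av\|_p-\mu_p\,\bigr|\le\delta'$ for all $v\in\bigcup_T\mathcal{N}_T$, and the usual homogeneity\,/\,triangle-inequality bootstrapping---unchanged from the $\ell_2$ case, as it only uses that $\|\cdot\|_p$ is a norm---upgrades this to RIP$_p(k,\delta_k,\mu_p)$ with $\delta_k\asymp\delta'+\theta$; choosing $\theta\asymp\delta_k$ closes the loop. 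The union bound succeeds with probability $\ge1-\eta$ as soon as $c'\delta_k^{2}m^{\min(2/p,1)}(p+1)\gtrsim k\log(en/k)+k\log(1/\delta_k)+\log(2/\eta)$, \ie $m^{\min(2/p,1)}\gtrsim m_0$ with $m_0=O\bigl(\delta_k^{-2}(k\log(n/k)+k\log(\delta_k^{-1})+\log(2/\eta))\bigr)$ (the extra factor $p+1\ge1$ only helps); this reads $m\gtrsim m_0$ for $p\le2$ and $m\gtrsim m_0^{p/2}$ for $p\ge2$, \ie $m\gtrsim m_0^{\max(1,p/2)}$, which together with $m\ge 2^{p+1}$ is exactly the stated hypothesis. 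I expect the main obstacle to be the two-sided estimate of $\mu_p$ in the second paragraph: the delicate point is extracting the lower bound $\mu_p\gtrsim m^{1/p}\sqrt{p+1}$ from \emph{only} $m\ge 2^{p+1}$, which rests on the sharp Gaussian moment comparison $\mathbb{E}|g_1|^{2p}\asymp 2^{p}(\mathbb{E}|g_1|^p)^2$ with absolute constants; once $\mu_p$ and $L_p$ are correctly identified, the remaining steps are routine.
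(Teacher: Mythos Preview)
The paper does not actually prove this proposition; it is stated with a citation to \cite{Jacques2010,Jacques2013} and then used as a black box. Your proof is correct and follows precisely the approach of those references: the net-plus-concentration strategy in $\ell_p^m$ geometry, with $\mu_p$ taken as $\mathbb{E}\|g\|_p$ and Gaussian Lipschitz concentration applied to $v\mapsto\|v\|_p$ with the right Lipschitz constant $L_p=m^{(1/p-1/2)_+}$.

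A few minor remarks. Your two-sided estimate $\mu_p=\Theta(m^{1/p}\sqrt{p+1})$ via Paley--Zygmund and the moment comparison $\mathbb{E}|g_1|^{2p}\asymp 2^p(\mathbb{E}|g_1|^p)^2$ is valid and is essentially how the cited papers proceed; the hypothesis $m\ge 2^{p+1}$ is exactly what makes the relative variance $O(1)$, as you note. The ``usual bootstrapping'' you invoke at the end is indeed purely norm-based (one first extracts the uniform upper bound $M:=\sup_{u\in\Sigma_k^*}\mu_p^{-1}\|Au\|_p\le(1+\delta')/(1-\theta)$ from the net estimate, then the lower bound), so the passage from net to all $k$-sparse unit vectors goes through unchanged. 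Finally, dropping the helpful factor $(p+1)$ in the exponent of the tail bound is harmless for the statement as written, though the original references sometimes keep track of it to sharpen constants.
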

There is thus an exponential price to pay
for a matrix $A$ to be RIP$_p$ as $p$ increases: roughly speaking, for
$p\geq 2$, we
need $m \geq m^{p/2}_0 = O( k^{p/2} \log^{p/2} (n/k))$ measurements for
satisfying this property with non-zero probability. 

To estimate a tight value of $\epsilon_p$ in the case of quantization
noise---since, under HRA $\noise_j \sim_{\rm iid} \cl
U([-\qD/2,\qD/2])$---we can show that
\begin{equation}
\label{eq:epsilon-p-def}
\|\noise\|_p \leq \epsilon_p\ :=\  
\tfrac{\qD}{2\,(p+1)^{1/p}}\,
\big(\,m + \zeta\,(p+1)\,\sqrt{m}\,\big)^{\inv{p}},
\end{equation}
with probability higher than $1-e^{-2\zeta^2}$. Actually, for $\zeta =
2$, $x$ is a feasible solution of the
BPDQ$_p$ fidelity constraint with a probability exceeding $1-e^{-8} > 1-
3.4 \times 10^{-4}$.  

Finally, combining the estimation $\epsilon_p$ with the bound on
$\mu_p$, we find, under the conditions of Prop.~\ref{prop:grip-gauss}, 
\begin{equation}
  \label{eq:unif-error-decay}
  \tfrac{\epsilon_p}{\mu_p} \lesssim \tfrac{\qD}{\sqrt{p+1}}. 
\end{equation}
This shows that, in the high \emph{oversampled sensing scenario}
driven by \eqref{eq:SGR-RIP-measur-bound}, and provided the RIP$_p$ constants $\{\delta_k,\delta_{2k},\delta_{3k}\}$ satisfy
\eqref{eq:cond-on-delta-p}, the part of the reconstruction error due to
quantization noise behaves as $O(\qD/\sqrt{p+1})$.
This is also the
error we get if $x$ is exactly $k$-sparse since then $e_0$ vanishes in \eqref{eq:BPDQ-l2-l1-inst_opt}.

If we solve for $p$, we can see that the error decays as $O(\qD/\sqrt{\log m})$
as $m$ increases. There is possibly some room
for improvements since, as explained in
Sec.~\ref{sec:scal-quant-overs}, the lower bound on reconstruction of
sparse signal is $\Omega(\qD/m)$.
Beyond scalar quantization schemes,
Sec.~\ref{sec:SDCS} will also show that much better theoretical error
reduction can be expected using \sd quantization.

\begin{figure*}[t]
  \centering
  \null\hfill
  \subfigure[\label{fig:first-exper-qual}]
  {\raisebox{-3mm}{\includegraphics[width=.35\textwidth]{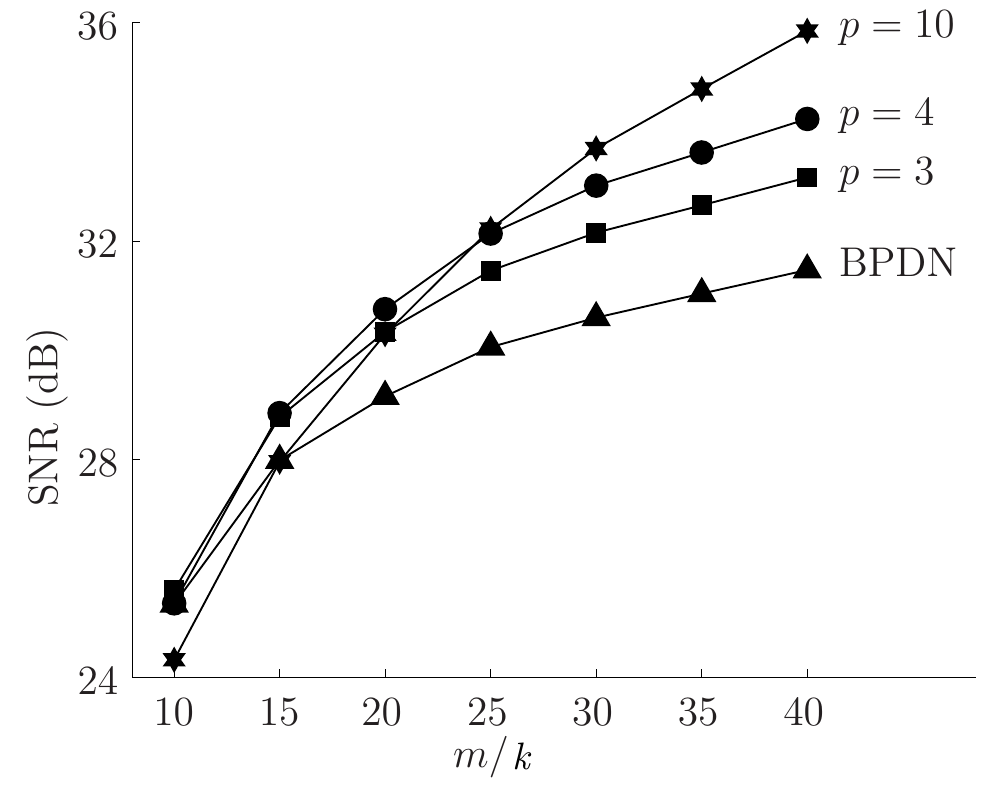}}}
  \hfill
  \subfigure[\label{fig:nrh-1}]
  {\includegraphics[width=.25\textwidth]{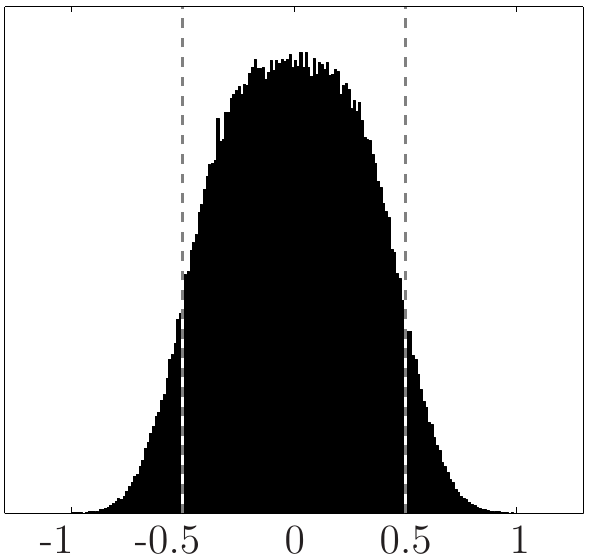}}
  \hfill
  \subfigure[\label{fig:nrh-2}]
  {\includegraphics[width=.25\textwidth]{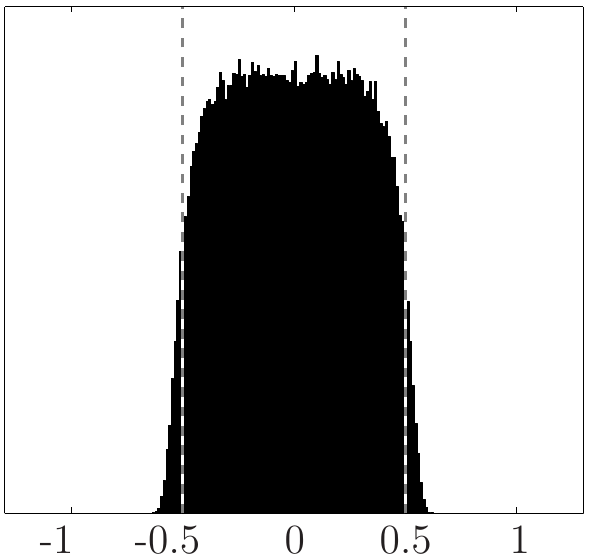}}  
  \hfill\null
  \caption{(a) Quality of BPDQ$_p$ for different $m/k$ and
    $p$. (b) and (c): Histograms of $\qD^{-1}(A \hat{x}-q)_i$ for $p=2$ and
  for $p=10$, respectively. \label{fig:first-exper}}
\end{figure*}

Interestingly, we can, however, observe a numerical gain
in using BPDQ$_p$ for increasing values of $p$ when the signal $x$ is observed by the model
\eqref{eq:noisy-unfi-QCS} and when $m$ increases beyond the minimal
value $m_0$ needed for stabilizing BPDN (\ie BPDQ$_2$).  

This gain is depicted in Fig.~\ref{fig:first-exper}. The plots on the
left correspond to the reconstruction quality, \ie the value ${\rm
  SNR} = 20 \log(\|x\|/\|x - \hat x_p\|)$ expressed in dB, reached by
BPDQ$_p$ for different values of $p$ and $m/k$. The original signal
$x$ has dimension $n=1024$ and is $k$-sparse in the canonical basis,
with support of size $k=16$ uniformly random and normally distributed
non-zero coefficients. Each point of each curve represents average
quality over 500 trials.  For each sparse signal $x$, $m$ quantized
measurements were recorded using \eqref{eq:noisy-unfi-QCS} with a
random Gaussian sensing matrix $A$ and $\qD = \|A x\|_\infty/40$. The
reconstruction was done by solving BPDQ$_p$ with the Douglas-Rachford
algorithms \cite{Jacques2010}, an efficient convex optimization method
solving constrained programs, such as BPDQ\footnote{The code of BPDQ
  is freely available at \url{http://wiki.epfl.ch/bpdq}.}, using
simpler \emph{proximal operators} \cite{combettes2011proximal}.
Fig.~\ref{fig:first-exper-qual} shows that higher oversampling ratio
$m/k$ allows the use of higher $p$ with significant gain in the
reconstruction quality. However, if $m/k$ is low, \ie close to
$m/k=10$, the best quality is still reached by BPDN. The quantization
consistency of the reconstruction, \ie the original motivation for
introducing the BPDQ$_p$ program, can also be tested. This is shown on
Fig.~\ref{fig:nrh-1} and Fig.~\ref{fig:nrh-2} where the histograms of
the components of $\qD^{-1}(A \hat x_p - q)$ are represented for $p=2$
and $p=10$ at $m/k=40$. This histogram for $p=10$ is indeed closer to
a uniform distribution over $[-1/2,1/2]$, while the one at $p=2$ is
mainly Gaussian.
\subsection{Non-Uniform Scalar Quantization}  
\label{sec:nonunif-scal-quant}

If the distribution of the measurements is known,
  quantization distortion can be decreased by adopting a non-uniform
  scalar quantizer.
For instance, when $A$ is a random Gaussian matrix
viewing the signal as fixed and the matrix as randomly drawn, the
distribution of the components of $y=A x$ is also Gaussian with a
variance proportional to the signal energy
$\|x\|_2^2$ (and similarly, for other matrix constructions, such as
  ones drawn with random sub-Gaussian entries). Assuming the acquired
  signal energy can be fixed, \eg using some automatic gain control, the
  known distribution of the measurements can be exploited in the
  design of the quantizer, thanks for example to the Lloyd-Max algorithm
  mentioned in Sec.~\ref{sec:intro_scalar}~\cite{Lloyd1982}. In
  particular, the
  quantization thresholds and levels are then optimally adjusted to this
  distribution.

This section shows that the formalism developed in Sec.~\ref{sec:unif-scal-quant} can indeed
be adapted to non-uniform scalar quantizer. To understand this adaptation, we exploit a common tool in
quantization theory~\cite{gray1998quantization}: any non-uniform
quantizer can be factored as the composition of a ``compression'' of
the real line over $[0, 1]$ followed by a uniform quantization of the
result that is finally re-expanded on $\bb R$. Mathematically,
\begin{equation}
  \label{eq:compander}
  \qQop\ =\ \cl G^{-1} \circ \qQop_{\qD} \circ \cl G, 
\end{equation}
where $\cl G:\bb R \to [0,1]$ is the \emph{comp}ressor and $\cl
G^{-1}:[0,1]\to \bb R$ is the exp\emph{ander}, giving the name
\emph{compander} as a portemanteau.

In particular, under HRA, the compressor $\cl G$ of a distortion
optimal quantizer, \ie one that minimizes $\bb E|X-\qQop(X)|^2$ for a
source modeled as a random variable $X$ with pdf $\pdf$, must satisfy  
$$
\tfrac{\ud}{\ud \lambda} \cl G(\lambda) = \left( \int \pdf^{1/3}(t) \ud t\right)^{-1} \pdf^{1/3}(\lambda),
$$ 
and if $\qQop$ is an
optimal $B$-bit quantizer (\eg obtained by Lloyd-Max method) then $\qD
= 2^{-B}$ in \eqref{eq:compander}. In this case, the Panter and Dite formula estimates the
quantizer distortion as \cite{panter1951quantization}
$$
\bb E|X-\qQop(X)|^2 \simeq_B \tfrac{2^{-2 B}}{12} \fnorm{\pdf}_{1/3}
=: \sigma^2_{\rm PD}, 
$$
with $L_s$-norm $\fnorm{\pdf}_{s} = (\int |\pdf^{s}(t)| \ud t)^{1/s}$ and 
where ``$\simeq_B$'' means that the relation tends to an equality when
$B$ is large. The rest of this section assumes that the expected distribution is Gaussian, \ie if $\pdf
\sim \cl
N(0,\sigma_0^2)$ and $\fnorm{\pdf}_{1/3} = \tinv{2}\sqrt{3}\pi
\sigma^2_0$, as it comes by seeing the signal fixed (with known
  energy) and the Gaussian
matrix random in~CS. 

Compander theory generalizes quantization
consistency in the ``compressed'' domain, \ie
$$
|\cl G(\lambda) - \cl G(\qQop(\lambda))| \leq \qD/2 = 2^{-B-1}.
$$ Therefore, for the right compressor $\cl G$, in the noiseless QCS
model \eqref{eq:qcs-pcm}, the signal $x$ provides consistency
constraints to be imposed on any reconstruction candidate $x'$:
\begin{equation}
  \label{eq:quant-const-nonunif}
  \|\cl G(A x') - \cl G(q)\|_\infty \leq \qD/2 = 2^{-B-1}. \tag{QC}  
\end{equation}
This generalizes the uniform quantization consistency \eqref{eq:qc-unfi} introduced
in Sec.~\ref{sec:unif-scal-quant}.

The compander formalism is leveraged in \cite{Jacques2013}, to
generalize the approach described in Sec. \ref{sec:unif-scal-quant} to
non-uniform quantization. In particular, a new set of parametric
constraints are introduced, the $p$-Distortion Consistency (or D$_p$C)
for $p\geq 2$. These have for limit cases the QC above and the
\emph{distortion consistency} constraint (DC) arising from Panter and
Dite formula, namely, the constraint imposing any reconstruction
candidate $x'$ to satisfy \cite{Dai2009}
\begin{equation}
  \|A x' - q\|^2_2 \leq \epsilon^2_{\rm PD} := m \sigma^2_{\rm PD}, \tag{DC}
\end{equation}
with DC asymptotically satisfied by~$x$ when both $B$ and
$m$ are large.

The D$_p$C constraint corresponds to imposing that a candidate
signal $x'$ satisfies 
\begin{equation}
\label{eq:dpc}
\|A x' - \qQop_p[q]\|_{p,w}\ =\ \|A x' - \qQop_p[Ax]\|_{p,w}\ \leq\ \epsilon_{p,w}, \tag{D$_p$C}   
\end{equation}
where $\|v\|_{p,w} = \|\diag(w) v\|_p$ is the weighted $\ell_p$-norm 
of $v \in \bb R^m$ with weights $w \in
\bb R^m_+$, denoting by $\diag(w)$ the diagonal matrix having $w$ on its diagonal. The mapping $\qQop_p:\bb R^m \to \bb R^m$ is a
  post-quantization modification of $q$ characterized
componentwise hereafter and such that $\qQop_p[q]=\qQop_p[Ax]$. 

Under HRA, a careful design of $\qQop_p$, $w$ and the bounds
$\epsilon_{p,w}$ ensures that D$_2$C amounts to imposing DC on $x'$
and, that as $p\to +\infty$, D$_p$C tends to QC \cite{Jacques2013}.
Briefly, if $q_i$ falls in the quantization bin $\qcell_j$,
$\qQop_p(q_i)$ is defined as the minimizer of
$$
\min_{\lambda\in\qcell_j}\ \int_{\qcell_j} |t - \lambda|^p\ \pdf(t)\,\ud t.
$$
Actually, $\qQop_2(q_i) = q_i$ by equivalence with \eqref{eq:LM-condition}, and $\lim_{p\to \infty}\qQop_p(q_i) =
\tinv{2}(t_j+t_{j+1})$. The weights are defined by the
quantizer compressor $\cl G$ with $w_i(p)= \tfrac{\ud}{\ud \lambda} \cl G\big(\qQop_p[q_i]\big)^{\frac{p-2}{p}}$. 
Moreover, under HRA and asymptotically in $m$, an optimal bound $\epsilon_p$ reads
$\epsilon_{p,w}^p = m\,
\tfrac{2^{-Bp}}{(p+1)\,2^{p}}\,\fnorm{\pdf}_{1/3}$. For $p=2$, $\epsilon_{2,w}=\epsilon_{\rm PD}$ matches the distortion power estimated by
the Panter and Dite formula, while for $p\to
+\infty$, $\epsilon_{p,w} \to \tinv{2} 2^{-B}$, \ie half the size of
the uniform quantization bins in the domain compressed by $\cl G$.   

Similarly to Sec.~\ref{sec:unif-scal-quant}, using
\eqref{eq:dpc} as a fidelity constraint in the signal reconstruction leads to the definition of a Generalized
Basis Pursuit DeNoise program:
\begin{equation}
  \hat{x}_{p, w}\ =\ \arg\,\min_{z\,\in\,\bb R^n} \|z\|_1
  \ {\rm
  s.t.}\ \|\qQop_p(q) - A z\|_{p,w} \leq \epsilon_{p,w}. \tag{GBPDN$(\ell_{p,w})$}
\end{equation}

Ideally, we would like to directly set $p=\infty$ in order to enforce
consistency of $\hat{x}_{p, w}$ with $q$. However, as studied in
\cite{Jacques2013}, it is not certain that this limit case
  minimizes the reconstruction error $\|x - \hat{x}_{p, w}\|$ as a
  function of $p$, given a certain number of measurements $m$.

Actually, the stability of GBPDN can be established from the one
of BPDQ (Sec.~\ref{sec:unif-scal-quant}) if we impose $A$ to satisfy the more general RIP$_{\ell^m_{p,w},
  \ell^n_2}$, as formally defined in \eqref{eq:rip-p}. Indeed, for any
weighting vector $w$, we have always $\|\qQop_p(q) - A z\|_{p,w} =
\|q' - A' z\|_{p}$ with $q' = \diag(w) \qQop_p(q)$ and $A' = \diag(w)
A$. Therefore, we know from
Theorem~\ref{prop:l2-l1-instance-optimality-BPDQ} that if $A'$ is
RIP$_p$, or equivalently if $A$ is RIP$_{\ell^m_{p,w},
  \ell^n_2}$, with the additional condition \eqref{eq:cond-on-delta-p} on its RIP constants at different
sparsity levels, then the solution of GBPDN$(\ell_{p,w})$ will be stable in the
sense of \eqref{eq:BPDQ-l2-l1-inst_opt}, \ie
$$
\|\hat x_{p,w} - x\|\ \lesssim\ \tfrac{\epsilon_{p,w}}{\mu_{p,w}} +
\tfrac{\sigma_k(x)_1}{\sqrt k}.
$$

Compared to the unit weights case (as involved by
the RIP$_p$), a random Gaussian matrix $A$ with $a_{ij} \sim_{\rm iid}
\cl N(0,1)$ satisfies the RIP$_{\ell^m_{p,w},
  \ell^n_2}(k,\delta_k,\mu_{p,w})$ with high probability provided that $m$ grows like
$O\big( (\theta_p \delta_k^{-2} (k
\log(n/k))^{p/2} \big)$.
The ratio $\theta_p := \|w\|_\infty/(m^{-1/p}\|w\|_p)$ depends on the \emph{conditioning} of $w$. It is equal to 1 for constant
weights (recovering \eqref{eq:SGR-RIP-measur-bound}), while it
increases with the dynamic range of $w$. For the weight $w(p)$ defined previously and with a Gaussian
optimal quantizer, $\theta^{p/2}_p
\simeq_{m,B} \sqrt{p+1}$ asymptotically in $m$ and $B$.

As for the uniform case, a strong (polynomial) oversampling in
$m$ is thus required for satisfying the RIP$_{\ell^m_{p,w},
  \ell^n_2}$ at $p>2$ compared to the minimal number of measurements needed
at $p=2$. However, an asymptotic analysis of
$\epsilon_{p,w}/\mu_{p,w}$ shows that the GBPDN reconstruction error due to quantization for a Gaussian
sensing matrix behaves like \cite{Jacques2013}
$$
\|\hat x_{p,w} - x\| \lesssim \tfrac{2^{-B}}{\sqrt{p+1}} + \tfrac{\sigma_k(x)_1}{\sqrt k},
$$ 
This error decay is thus similar to the one found in
\eqref{eq:unif-error-decay} for uniform QCS with
now a direct interpretation in terms of the quantizer bit-depth $B$.

Efficient convex optimization methods, like those relying on proximal
algorithms \cite{combettes2011proximal}, can also be used to
numerically solve GBPDN.  In \cite{Jacques2013}, numerical simulations
show that the reconstruction qualities reached in the reconstruction
of sparse signals from their non-uniformly quantized measurements
behave similarly, with respect to $p$ and $m$, to those observed in
Sec.~\ref{sec:unif-scal-quant} for the uniformly quantized CS setting.

We should also remark that beyond QCS, the stability of GBPDN (when
$A$ is RIP$_{p,w}$) can also be used for reconstructing signals
acquired under a (heteroscedastic) noisy sensing model $y = Ax +
\noise$ where $\noise\in \bb R^m$ is an additive generalized Gaussian
noise with bounded $\ell_{p,w}$-norm for some specific weight $w\in
\bb R^m_+$ \cite{varanasi1989pgg,Jacques2013}.

\subsection{Finite-Range Scalar Quantizer Design}  
\label{sec:unif-scal-quant-with-saturation}

So far we have only considered a scalar quantizer model without
saturation. Practical scalar quantizers have a finite range, which
implies a saturation level $\pm\qS$ and, using \qB\ bits per
coefficient, a quantization interval equal to
\begin{align}
  \qD=\qS 2^{-B+1}.
  \label{eq:quant_interval}
\end{align}
In order to determine the optimal saturation rate, the system designed
needs to balance the loss of information due to saturation, as
\qS\ decreases, with the increased quantization error due to an
increasing quantization interval in~\eqref{eq:quant_interval}, as
\qS\ increases. In classical systems, this balance requires setting
the quantization level relatively close to the signal amplitude to
avoid saturation. On the other hand, in compressive sensing systems,
the incoherence of the measurements with the sparsity basis of the
signal makes them more robust to loss of information and enables
higher saturation levels with smaller quantization
intervals.

A key property of compressive measurements, which provides the
robustness to loss of information, is {\em democracy}. Intuitively,
each measurement contributes an equal amount of information to the
reconstruction. If the signal is slightly oversampled, relative to the
rate required for CS reconstruction, then any subset with enough measurements should be sufficient to recover the
signal. The notion of democracy was first introduced
in~\cite{CalDau::2002::The-pros-and-cons,Gun::2000::Harmonic-analysis}
in the context of information carried in each bit of the
representation; the definition below strengthens the concept and
formulates it in the context of compressive
sensing~\cite{DLBB_ECETR09,LasBouDav::2009::Demcracy-in-action}.

\begin{definition}
Let $A\in{\mathbb R}^{m \times n}$, and let $\widetilde{m} \le m$ be
given.  We say that $A$ is $(\widetilde{m},k,\delta_k)$-democratic if,
for all row index sets $\Gamma$ such that $|\Gamma|\ge \widetilde{m}$,
any matrix $\widetilde{A}=((A^T)_\Gamma)^T$, \ie comprised of a
$\Gamma$-subset of the rows of $A$, satisfies the RIP of order $k$
with constant $\delta_k$.
\end{definition}
This definition takes an adversarial view of democracy: a matrix $A$
is democratic if an adversary can pick any $d=m-\widetilde{m}$ rows to
remove from $A$, and the remaining matrix still satisfies the
RIP. This is a much stronger guarantee than just randomly selecting a
subset of the rows to be removed. Such a guarantee is important in the
case of saturation robustness because the saturated measurements are
the largest ones in magnitude, \ie potentially the ones most aligned with the
measured signal and, presumably, the ones that capture a significant
amount of information. Still, despite this strict requirement,
randomly generated matrices can be democratic if they have a
sufficient number of rows.

\begin{theorem}[\cite{DLBB_ECETR09}] \label{thm:democracy}
Let $A\in{\mathbb R}^{m \times n}$ with elements $a_{ij}$ drawn
according to $\mathcal{N}(0, \frac{1}{m})$ and let $\widetilde{m} \le m$, $k <
\widetilde{m}$, and $\delta \in (0,1)$ be given.  Define $d = m -
\widetilde{m}$. If
\begin{align}
\label{eq:Mdem}
m = C_{1} (k + d)\log\left(\frac{n+m}{k + d}\right),
\end{align}
then with probability exceeding $1 - 3 e^{-C_2 m}$ we have that $A$ is
$(\widetilde{m},k,\delta/(1-\delta))$-democratic, where $C_1$ is
arbitrary and $C_2 = (\delta/8)^2 - \log(42 e/ \delta)/C_1.$
\end{theorem}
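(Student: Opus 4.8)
\medskip
\noindent\emph{Proof strategy.} The plan is to reduce the claim to a single Restricted Isometry Property (RIP) statement for an \emph{augmented} matrix, following the approach of \cite{DLBB_ECETR09}. Given $A$ with entries $\cl N(0,1/m)$, set $B := [\,A\ \ \Id_m\,]\in\bb R^{m\times(n+m)}$, the concatenation of $A$ with the $m\times m$ identity; the columns of $\Id_m$ have unit $\ell_2$-norm, matching the expected column norm of $A$, so $B$ is correctly normalized for the RIP. First I would prove a purely deterministic implication: if $B$ has the RIP of order $s:=k+d$ with constant $\delta$, then for every row-subset $J\subseteq\{1,\dots,m\}$ with $|J|\le d$ the matrix $\overline{A}_{J}$ obtained by deleting from $A$ the rows indexed by $J$ has the RIP of order $k$ with constant $\delta/(1-\delta)$. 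Since every $\widetilde A$ appearing in the democracy definition (rows indexed by a $\Gamma$ with $|\Gamma|\ge\widetilde m$) is exactly such an $\overline{A}_{J}$ with $J=\Gamma^{c}$, $|J|\le d$, this gives the theorem with $\delta_k=\delta/(1-\delta)$. Second, I would invoke the probabilistic fact that a Gaussian $A$ makes $B$ satisfy the RIP of order $s$ with constant $\delta$ with probability at least $1-3e^{-C_2 m}$ as soon as $m=C_1 s\log\frac{n+m}{s}$, which is exactly condition \eqref{eq:Mdem} (with $s=k+d$).

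For the deterministic step, fix such a $J$ and a $k$-sparse $x\in\bb R^n$, let $A_J$ denote the rows of $A$ indexed by $J$, and let $w\in\bb R^m$ be supported on $J$ with $w_i=(Ax)_i$, so $\|w\|_2=\|A_J x\|_2$ and $\langle Ax,w\rangle=\|A_J x\|_2^2$. Feeding the disjointly supported, hence orthogonal, vectors $(x,0)$ and $(0,w)$ in $\bb R^{n+m}$ (total support size $\le k+d=s$) into the standard RIP consequence $|\langle Bu,Bv\rangle-\langle u,v\rangle|\le\delta\|u\|_2\|v\|_2$ yields $\|A_J x\|_2^2\le\delta\|x\|_2\|A_J x\|_2$, hence $\|A_J x\|_2\le\delta\|x\|_2$. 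Now set $v:=(x,-w)$, which is $s$-sparse; by construction $Bv=Ax-w$ vanishes on $J$ and agrees with $\overline{A}_{J}x$ off $J$, so $\|Bv\|_2=\|\overline{A}_{J}x\|_2$ while $\|v\|_2^2=\|x\|_2^2+\|A_J x\|_2^2$. Applying the RIP of $B$ to $v$ and then $\|A_J x\|_2\le\delta\|x\|_2$ gives $(1-\delta)\|x\|_2^2\le\|\overline{A}_{J}x\|_2^2\le(1+\delta)(1+\delta^2)\|x\|_2^2$. A one-line computation, $(1-\delta)(1+\delta)(1+\delta^2)=1-\delta^4\le1$ and $1-\delta\ge1-\tfrac{\delta}{1-\delta}$, shows that both the lower and the upper isometry constants of $\overline{A}_{J}$ are at most $\delta/(1-\delta)$, as needed.

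It remains to show the random matrix $B=[\,A\ \Id_m\,]$ has the RIP of order $s=k+d$ with constant $\delta$, and here I would follow the Baraniuk--Davenport--DeVore--Wakin template. For a fixed support $S\subseteq\{1,\dots,n+m\}$ with $|S|=s$, split $S=S_1\cup S_2$ into its ``Gaussian'' part $S_1\subseteq\{1,\dots,n\}$ and its ``identity'' part $S_2$; for a unit vector $u$ supported on $S$, $\|B_S u\|_2^2=\sum_{i=1}^m(g_i+b_i)^2$, where the $g_i$ are i.i.d.\ $\cl N(0,\|u_{S_1}\|_2^2/m)$, the fixed numbers $b_i$ obey $\sum_i b_i^2=\|u_{S_2}\|_2^2$, and $\bb E\|B_S u\|_2^2=\|u_{S_1}\|_2^2+\|u_{S_2}\|_2^2=1$. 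This noncentral-$\chi^2$ statistic concentrates about $1$ with a tail of order $\exp\!\big(-m(\delta^2/4-\delta^3/6)\big)$; a $\rho$-net of the sphere of $\bb R^s$, of cardinality $\le(1+2/\rho)^s$, upgrades this to $\|B_S^{\!\top}B_S-\Id_s\|_{2\to 2}\le\delta$ for the fixed $S$, and a union bound over the $\binom{n+m}{s}\le(e(n+m)/s)^s$ supports gives the RIP of $B$ with failure probability $\le 3e^{-C_2 m}$ once $m=C_1 s\log\frac{n+m}{s}$ with $C_1$ large enough; carrying the explicit constants through the net argument produces exactly $C_2=(\delta/8)^2-\log(42e/\delta)/C_1$. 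I expect the main obstacle to be precisely this concentration step for $\|B_S u\|_2^2$: $B_S$ mixes genuinely random Gaussian columns with deterministic standard-basis columns, so the textbook Gaussian RIP cannot be quoted verbatim, and one must verify that shifting the Gaussian quadratic form by a fixed vector (the identity part) does not spoil the $e^{-cm\delta^2}$ tail. Given that, chaining the deterministic reduction with this estimate at a fixed $\delta$ finishes the proof; everything outside the concentration bound is routine bookkeeping.
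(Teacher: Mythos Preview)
The paper does not give its own proof of this theorem; it simply states the result and cites \cite{DLBB_ECETR09}. Your proposal reproduces precisely the argument of that reference: augment $A$ to $B=[A\ \Id_m]$, show deterministically that the RIP of $B$ at order $k+d$ with constant $\delta$ forces every row-deleted submatrix $\overline{A}_J$ (with $|J|\le d$) to have RIP of order $k$ with constant $\delta/(1-\delta)$, and then establish the RIP of $B$ probabilistically via a concentration-plus-net-plus-union-bound argument. Your deterministic reduction is clean and correct, including the verification that $(1-\delta)(1+\delta)(1+\delta^2)=1-\delta^4\le 1$ gives the upper isometry constant $\le\delta/(1-\delta)$.

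You have also correctly isolated the only nonroutine point: the concentration step for $\|B_S u\|_2^2$ when $S$ straddles the Gaussian and identity blocks, which is a shifted (noncentral) $\chi^2$ rather than a central one. This is exactly what the original proof handles, and your sketch---writing $\|B_S u\|_2^2=\sum_i(g_i+b_i)^2$ with $g_i\sim\cl N(0,\|u_{S_1}\|_2^2/m)$ i.i.d.\ and $\sum_i b_i^2=\|u_{S_2}\|_2^2$, then invoking a subexponential tail bound---is the right way to do it. So your approach is the same as the one the paper is citing, and it is correct.
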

The practical implication of democratic measurements is that
information loss due to saturated measurements can be tolerated. 

Saturated measurements are straightforward to detect, since they
quantize to the highest or the lowest level of the quantizer. The
simplest approach is to treat saturated measurements as corrupted, and
reject them from the reconstruction, together with the corresponding
rows of $A$. As long as the number of saturated measurements is not
that large, the RIP still holds and reconstruction is possible using
any sparse reconstruction algorithm.

However, saturated measurements do contain the information that the
measurement is large. In the context of consistent reconstruction,
they can be used as constraints in the reconstruction process. If a
measurement $i$ is positively saturated, then we know that $(Ax)_i \ge \qS-\qD$. Similarly, if it is negatively saturated,
$(Ax)_i \le -\qS+\qD$. These constraints can be imposed on
any reconstruction algorithm to improve performance~\cite{LasBouDav::2009::Demcracy-in-action}.

\begin{figure}[!t] 
   \centering
   \begin{tabular}{ccc}
  \includegraphics[width=.32\linewidth]{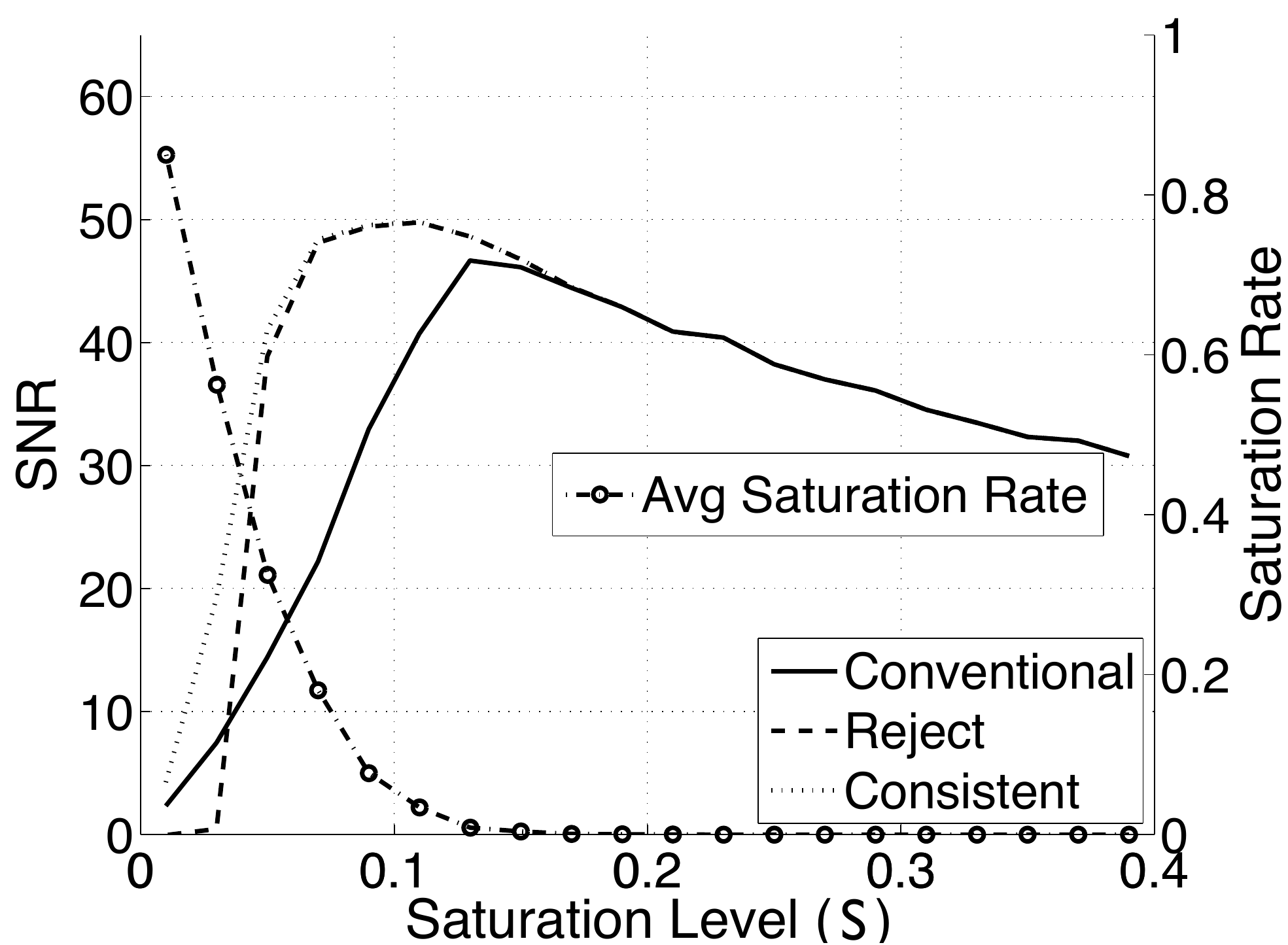}&
  \includegraphics[width=.32\linewidth]{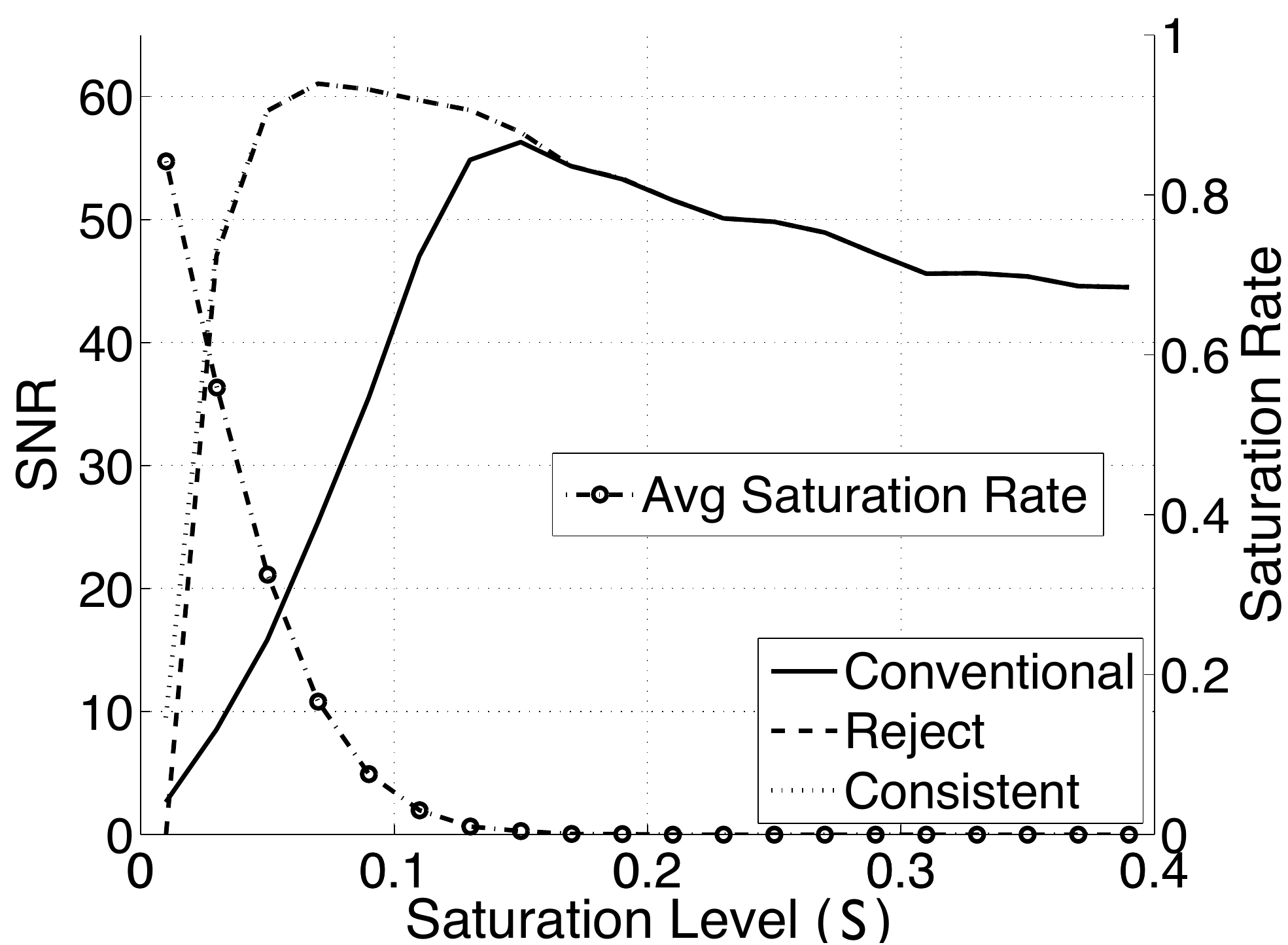}&
  \includegraphics[width=.32\linewidth]{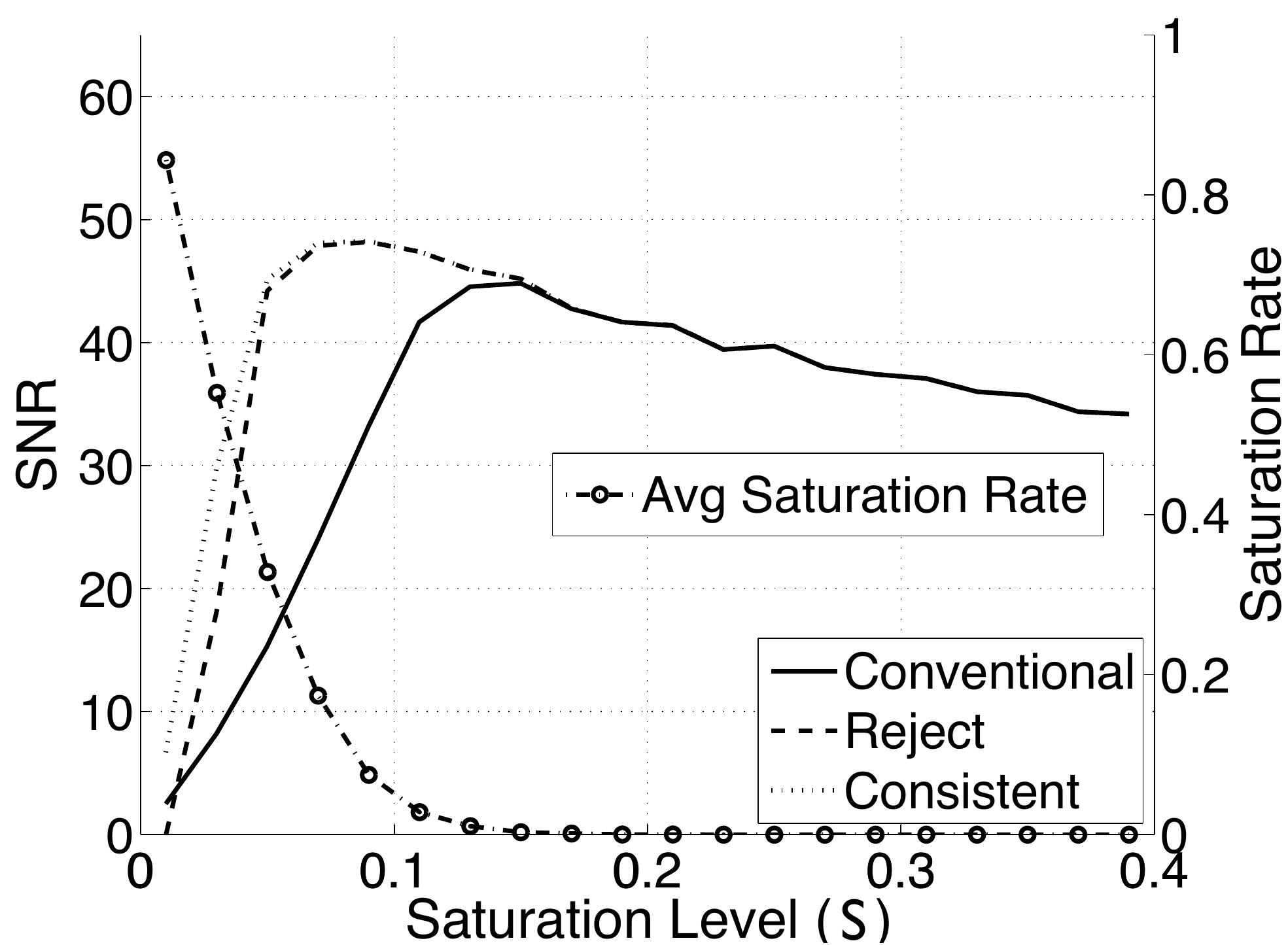}\\
  (a) $k=20$ &(b) $x\in w\ell_{0.4}$&(c) $x\in w\ell_{0.8}$\\
  \end{tabular}
   \caption{Saturation performance using $\ell_1$ minimization for (a)
     exactly $k=20$-sparse signals and compressible signals in weak
     $\ell_p$ for (b) $p=0.4$ and (c) $p=0.8$, with $n=1024$, $m=384$,
     and $\qB=4$.  The reconstruction SNR as a function of the
     saturation level is measured on the left y-axis assuming (solid line)
     conventional reconstruction, \ie ignoring saturation, (dotted
     line) enforcing saturation consistency, and (dashed line)
     rejecting saturated measurements. The dashed-circled line,
     measured on the right y-axis, plots the average saturation rate
     given the saturation level.}
   \label{fig:saturation}
\end{figure}

Fig.~\ref{fig:saturation} demonstrates the effect of each approach. As
demonstrated in the plots, rejecting saturated measurements or
treating them as consistency constraints significantly outperforms
just ignoring saturation. Furthermore, if saturation is properly taken
into account, a distortion optimal finite-range scalar quantizer
should be designed with significant saturation rate, often more than
20\%. While the figures suggest that saturation rejection and
saturation consistency have very similar performance, careful
examination demonstrates, as expected, that consistency provides more
robustness in a larger range of saturation rates and conditions. A
more careful study and detailed discussion can be found
in~\cite{LasBouDav::2009::Demcracy-in-action}. Furthermore, further
gains in the bit-rate can be achieved by coding for the location of
the saturated measurements and transmitting those
separately~\cite{kostina2011value}.
\subsection{1-Bit Compressive Sensing}
\label{sec:1bitCS}

The simplest scalar quantizer design to implement in hardware is a
1-bit quantizer, which only computes the sign of its input. Its
simplicity makes it quite appealing for compressive sensing systems.

The sensing model of 1-bit CS, first introduced in~\cite{BB_CISS08},
is very similar to the standard scalar quantization model
\begin{equation}
  \label{eq:1bit-CS}
  \qq = \sign(A x),
\end{equation}
where $\sign(x_i)$ is a scalar function applied element-wise to its
input and equals $1$ if $x_i\ge 0$ and $-1$ otherwise.

One of the challenges of this model is that it is invariant under changes of the
signal amplitude since $\sign(cx)=\sign(x)$ for any positive $c$. For
that reason, enforcing consistency is not straightforward. A signal
can be scaled arbitrarily and still be consistent with the
measurements. Thus, a magnitude constraint is typically necessary. Of
course, the signal can only be recovered within a positive scalar
factor.

Similarly to multi-bit scalar quantization models, the literature in
this area focuses on deriving lower bounds for the achievable
performance, reconstruction guarantees, as well as practical
algorithms to invert this problem.

\begin{figure}[!Htb]
  \centering
  \subfigure[Orthants in measurement space]{
    \includegraphics[width=.43\linewidth]{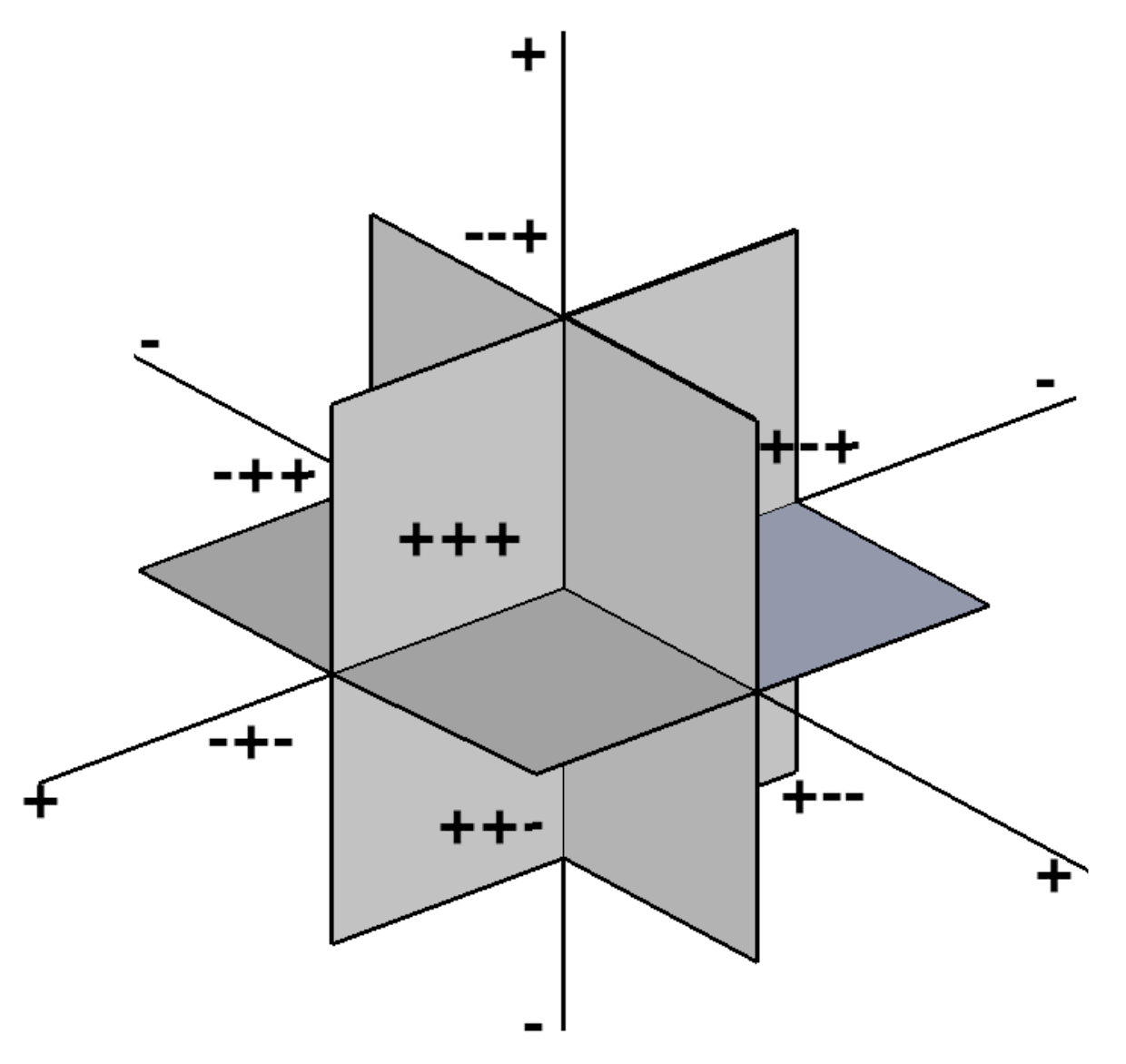}
  }
  \subfigure[Orthant intersection]{
    \includegraphics[width=.43\linewidth]{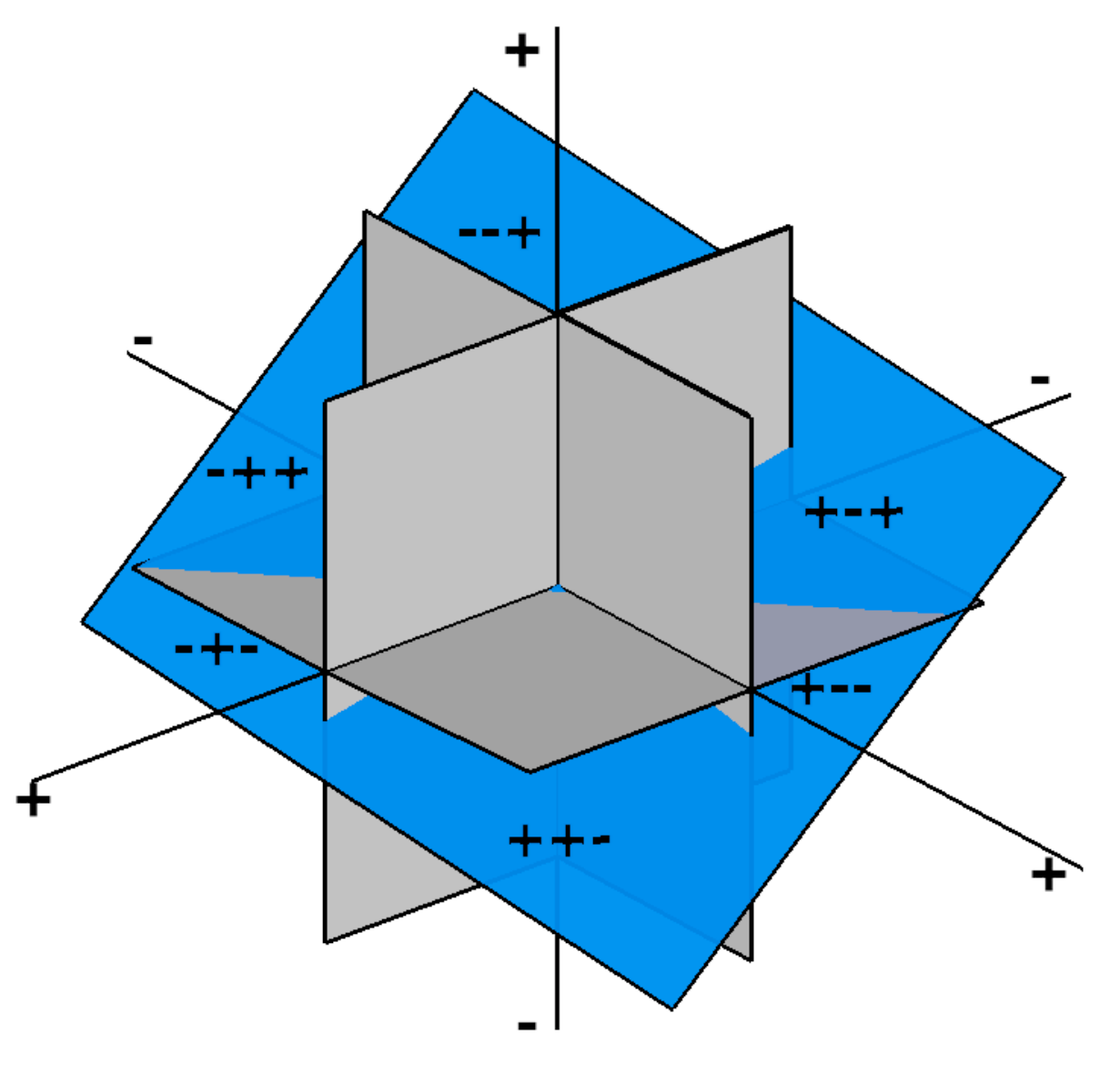}
  }
  \subfigure[Signal space consistency]{
    \includegraphics[width=.5\linewidth]{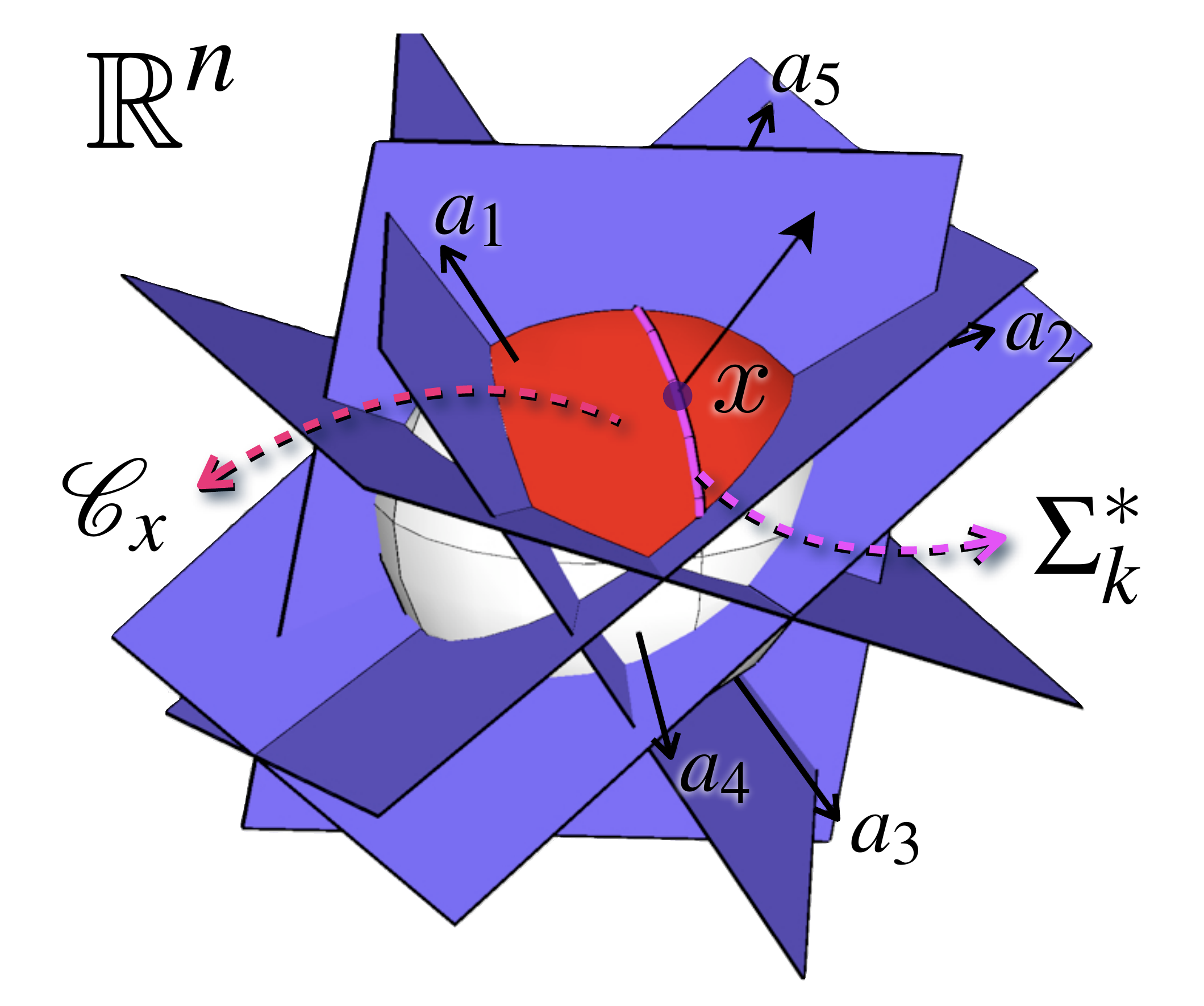}
  }
   \caption{Behavior of 1-bit measurements. (a) The measurement space,
     $\mathbb{R}^m$ is separated to high-dimensional orthant,
     according to the sign of each orthant. (b) Signals in a
     $k$-dimensional space ($k<m$) will only map to a $k$ dimensional
     subspace of $\mathbb{R}^m$ and intersect only a few orthants of
     the measurement space. (c) The same behavior in the signal
     space. Each measurement vector defines its orthogonal
     hyperplane. The measurement sign identifies which side of the
     hyperplane the signal lies on; all signals in the shaded region
     have consistent measurements. Newer measurements provide less and
     less information; the chance of intersecting the consistency
     region decreases.}
  \label{fig:1bit-cs-intuition}
\end{figure}

\subsubsection{Theoretical Performance Bounds}
\label{bounds}

A lower bound on the achievable performance, can be derived using a
similar analysis as in Sec.~\ref{sec:perf-bounds-sparse}. The main
difference is that the quantization cells are now orthants in the
$m$-dimensional space, shown in Fig.~\ref{fig:1bit-cs-intuition}(a),
corresponding to each measured sign pattern. Each subspace of the ${n
  \choose k}$ possible ones intersects very few of those orthants, as
shown in Fig.~\ref{fig:1bit-cs-intuition}(b), \ie uses very few
quantization points. In total, at most $I \le 2^k {n \choose k}{m
  \choose k}$ quantization cells are intersected by the union of all
subspaces~\cite{jacques2013robust}.

Since the signal amplitude cannot be recovered, the lower bound is
derived on $k$-dimensional spheres and coverings using spherical caps
instead of balls. The derivation ensures that the spherical caps have
radius sufficiently large to cover the ${n \choose k}$
spheres. Despite the similarity to the argument in
Sec.~\ref{sec:perf-bounds-sparse}, this case requires a little bit
more care in the derivation; details can be found
in~\cite{jacques2013robust}. Still, the result is very similar in
nature. Defining $\Sigma_{k}^{*}:=\{x\in \Sigma_k, \|x\|_2 =1 \},$ we have:
\begin{theorem}[\cite{jacques2013robust}]
  \label{res:lower_bound}
  Given $x \in \Sigma_{k}^{*}$, any 
  estimation $\hat x \in \Sigma^*_k$ of $x$ obtained from $q = \sign(A x)$ has
  a reconstruction error of at least 
  $$
    \|\hat x - x\|\ \gtrsim\ \frac{k}{m + k^{3/2}},
  $$
  which is on the order of $\tfrac{k}{m}$ as $m$ increases.
\end{theorem}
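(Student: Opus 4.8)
The plan is to reduce the error bound to a sphere-packing estimate, as in Sec.~\ref{sec:perf-bounds-sparse}, replacing balls by spherical caps to account for the amplitude invariance $\sign(cx)=\sign(x)$. Fix an arbitrary sensing matrix $A\in\bb R^{m\times n}$ and an arbitrary decoding map $\decoder:\{-1,1\}^m\to\bb R^n$; it suffices to exhibit, for each such $\decoder$, a signal $x\in\Sigma_k^*$ with $\|\decoder(\sign(Ax))-x\|\gtrsim k/(m+k^{3/2})$. I would work over a single fixed support $T$ with $|T|=k$, so that $q=\sign(Ax)=\sign(A_Tx_T)$ with $x_T\in S^{k-1}:=\Sigma_k^*\cap\bb R^T$ and $A_T$ the $m\times k$ submatrix of $A$ on the columns in $T$. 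The key combinatorial fact is that the $k$-dimensional subspace $A_T\bb R^k$ meets at most $2\sum_{j=0}^{k-1}\binom{m-1}{j}$ orthants of $\bb R^m$ — the hyperplane-arrangement count, which is maximal in general position and hence an upper bound for \emph{every} $A$ — so the restriction of $\decoder\circ\sign\circ A$ to $\{x:\operatorname{supp}(x)\subseteq T,\ \|x\|_2=1\}$ takes at most that many distinct values.

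Second, I would compare this count with a packing of $S^{k-1}$. A standard volumetric estimate yields, for every $r$ below an absolute constant $r_0$, a $2r$-separated subset $\{x_1,\dots,x_N\}\subset S^{k-1}$ with $N\ge(c_0/r)^{k-1}$. Choose $r$ as large as the inequality $(c_0/r)^{k-1}>2\sum_{j=0}^{k-1}\binom{m-1}{j}$ allows: bounding the right-hand side by $\binom mk$-type estimates, $\sum_{j<k}\binom{m-1}{j}\lesssim(em/k)^{k-1}$ when $m\gtrsim k$, and taking $(k-1)$-st roots, the inequality holds for $r\asymp k/m$; when $m\lesssim k$ the orthant count is at most $2^m\le 2^k$ and the same manipulation gives $r\asymp 1$. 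In either regime, once there are more packing points than attainable sign patterns, the pigeonhole principle provides $i\ne j$ with $\sign(Ax_i)=\sign(Ax_j)$, hence $\decoder$ returns the same estimate $\hat x$ for $x_i$ and $x_j$, forcing $\max(\|\hat x-x_i\|,\|\hat x-x_j\|)\ge\tfrac12\|x_i-x_j\|\ge r$. So the worst-case reconstruction error is $\gtrsim\min(k/m,\,1)$. Finally, since $k^{3/2}\ge k$ for $k\ge 1$, one checks $\min(k/m,1)\gtrsim k/(m+k^{3/2})$ for all $m,k$; in particular the two agree up to constants once $m\gtrsim k^{3/2}$, where the bound is $\asymp k/m$, matching \eqref{eq:sparse_lower} up to the dependence on the bit rate.

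The step requiring the most care — the "little bit more care" alluded to above — is making the cap-packing estimate and the orthant count fit together with honest constants across all ranges of $m$ relative to $k$, not just the comfortable regime $m\gg k$. Two points deserve attention: the volumetric bound $N\ge(c_0/r)^{k-1}$ on the $2r$-packing number of $S^{k-1}$ is only useful for $r$ below a fixed threshold, so for small $m$ the argument saturates at a constant instead of continuing to grow like $k/m$; and if one instead runs the counting globally over all $\binom nk$ supports, using the cruder cell bound $I\le 2^k\binom nk\binom mk$ together with a packing of the full set $\Sigma_k^*$ (the route suggested by covering the $\binom nk$ spheres with spherical caps), then the polynomial-in-$n$ factors must be shown to cancel, and the slightly lossy way the binomials are balanced in that route is presumably what produces the additive $k^{3/2}$ in \cite{jacques2013robust} rather than the sharper $\min(k/m,1)$. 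All remaining ingredients — the arrangement count, the volumetric packing bound, and binomial estimates such as $\binom nk\le(en/k)^k$ — are routine.
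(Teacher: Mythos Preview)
Your argument is correct. The route, however, differs from the one sketched in the paper (and carried out in \cite{jacques2013robust}). There the count is performed \emph{globally}: one bounds the total number of orthants met by the union of all $\binom{n}{k}$ coordinate $k$-planes by $I\le 2^k\binom{n}{k}\binom{m}{k}$, and then runs a \emph{covering} argument with spherical caps over the union of the $\binom{n}{k}$ unit spheres; the additive $k^{3/2}$ in the denominator is precisely the residue of balancing the $\binom{n}{k}$ factor against the cap volumes. You instead restrict to a \emph{single} support $T$, use the exact central-arrangement count $2\sum_{j=0}^{k-1}\binom{m-1}{j}$ on that $k$-plane, and apply a \emph{packing}/pigeonhole argument on $S^{k-1}$. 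This is cleaner --- no $\binom{n}{k}$ to cancel --- and yields the sharper intermediate bound $\min(k/m,\,c)$, from which $k/(m+k^{3/2})$ follows trivially. The trade-off is that the global route in \cite{jacques2013robust} is phrased so as to also cover the set $\Sigma_k^*$ uniformly (useful if one wants the covering radius itself, not just a worst-case pair), whereas your single-support pigeonhole already suffices for the minimax lower bound as stated. Your closing paragraph identifies exactly this distinction and the origin of the $k^{3/2}$ term, so nothing further is needed.
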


If the sensing matrix $A$ is Gaussian, \ie if $a_{ij} \sim_{\rm iid}
\cl N(0,1)$, any $k$-sparse signal that has consistent measurements
will not be very far from the signal producing the measurements,
assuming a sufficient number of them. This guarantee approaches the
lower bound of Theorem~\ref{res:lower_bound} within a logarithmic factor.

\begin{theorem}[\cite{jacques2013robust}]
\label{thm:hash}
Fix $0\leq \eta\leq 1$ and $\epsilon_{o}>0$.  If the number of
measurements is
\begin{equation}
\label{eq:gaussian-number-meas-consist}
m \geq \tfrac{2}{\epsilon_{o}}\,\big(2k\,\log(n) + 4k
\log(\tfrac{17}{\epsilon_{o}}) + \log
\tinv{\eta}\big),
\end{equation}
then for all $x, x'\in \Sigma^*_k$  we have that
\begin{equation}
\label{eq:radius-implication}
\|x - x'\|_{2} > \epsilon_{o}\ \Rightarrow\ \sign( A x) \neq \sign (A
x'),  
\end{equation}
with probability higher than $1-\eta$. Equivalently, if $m$ and $k$ are given, solving for $\epsilon_0$ above leads to
\begin{equation}
\label{eq:radius-implication-bound}
\|x - x'\|_{2} \lesssim \tfrac{k}{m} \log \tfrac{mn}{k},
\end{equation} 
with the same probability. 
\end{theorem}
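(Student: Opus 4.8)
This is a consistency (``binary stable embedding'') statement, and the plan is a net-plus-concentration argument anchored on the classical random-hyperplane identity: if $a$ is a standard Gaussian vector in $\bb R^n$ and $u,v\in\Sigma_k^*$ subtend an angle $\theta$, then $\Pr[\sign\langle a,u\rangle\neq\sign\langle a,v\rangle]=\theta/\pi$, and moreover $\theta\geq\|u-v\|_2$ since a chord is shorter than its subtending arc. Thus for a \emph{fixed} pair $x,x'\in\Sigma_k^*$ with $\|x-x'\|_2>\epsilon_o$, each row of $A$ independently ``separates'' the pair (places $x$ and $x'$ on opposite sides of its orthogonal hyperplane) with probability at least $\epsilon_o/\pi$. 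A Chernoff bound then shows the number of separating rows is at least $\tfrac{m\epsilon_o}{2\pi}$ except with probability $e^{-cm\epsilon_o}$; in particular it is positive, i.e. $\sign(Ax)\neq\sign(Ax')$. The whole point of the rest of the argument is to upgrade this from a fixed pair to all pairs in the union of subspaces.

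Next I would discretize. Writing $\Sigma_k^*=\bigcup_{|T|=k}\big(\mathbb S^{n-1}\cap\mathrm{span}\{e_i\}_{i\in T}\big)$, take for each of the $\binom{n}{k}$ supports a $\rho$-net of the corresponding $(k-1)$-sphere, yielding a net $\mathcal Q\subset\Sigma_k^*$ with $|\mathcal Q|\leq\binom{n}{k}(3/\rho)^k$. Applying the Chernoff estimate above and a union bound over the at most $|\mathcal Q|^2$ relevant net pairs gives: with probability at least $1-|\mathcal Q|^2e^{-cm\epsilon_o}$, every $\bar x,\bar x'\in\mathcal Q$ with $\|\bar x-\bar x'\|_2>\epsilon_o/2$ is separated by at least $\tfrac{m\epsilon_o}{2\pi}$ rows of $A$.

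The hard part will be the de-discretization, which is why (as the excerpt notes) this case ``requires a little bit more care'': the sign pattern $\sign(Ay)$ is not a Lipschitz function of $y$, because a hyperplane $a_i^\perp$ can pass arbitrarily close to a point. The key observation to work with is: if $x,x'$ are consistent and $\bar x,\bar x'\in\mathcal Q$ satisfy $\|x-\bar x\|_2,\|x'-\bar x'\|_2\leq\rho$, then any row $i$ that separates $\bar x$ from $\bar x'$ must separate $\bar x$ from $x$ or $\bar x'$ from $x'$, hence $|\langle a_i,\bar x\rangle|\leq\|a_i\|_2\rho$ or $|\langle a_i,\bar x'\rangle|\leq\|a_i\|_2\rho$. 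I would therefore also impose the event $\{\max_i\|a_i\|_2\leq 2\sqrt n\}$ (probability $1-me^{-cn}$ by $\chi^2$ concentration), under which the number of these ``$\rho$-ambiguous'' rows for a fixed net point is $\mathrm{Bin}(m,p)$ with $p=\Pr[|\langle a,\bar x\rangle|\leq 2\sqrt n\rho]\lesssim\sqrt n\,\rho$ (as $\langle a,\bar x\rangle\sim\mathcal N(0,1)$). Choosing $\rho\asymp\epsilon_o/\sqrt n$ (up to an absolute constant) forces $mp\lesssim m\epsilon_o$ with a small enough constant that a Chernoff/union bound over $\mathcal Q$ keeps the ambiguous count below $\tfrac{m\epsilon_o}{8\pi}$ at \emph{every} net point. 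Intersecting the three high-probability events yields the desired contradiction: a far, consistent pair $x,x'$ would round to a far net pair $\bar x,\bar x'$ (since $\|\bar x-\bar x'\|_2\geq\epsilon_o-2\rho>\epsilon_o/2$) separated by at least $\tfrac{m\epsilon_o}{2\pi}$ rows, of which at most $2\cdot\tfrac{m\epsilon_o}{8\pi}$ can be $\rho$-ambiguous for $\bar x$ or $\bar x'$, so at least $\tfrac{m\epsilon_o}{4\pi}>0$ of them actually separate $x$ from $x'$ --- contradicting consistency. Hence no such pair exists, which is exactly \eqref{eq:radius-implication}.

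Finally I would collect the bit budget. The binding requirement is $2\log|\mathcal Q|+\log(1/\eta)\lesssim m\epsilon_o$, and with $\rho\asymp\epsilon_o/\sqrt n$ one has $\log|\mathcal Q|\leq\log\binom{n}{k}+k\log(3/\rho)\lesssim k\log n+k\log(1/\epsilon_o)$; this is precisely a condition of the form $m\gtrsim\epsilon_o^{-1}\big(k\log n+k\log(1/\epsilon_o)+\log(1/\eta)\big)$, matching \eqref{eq:gaussian-number-meas-consist} after tracking the explicit constants (the $17/\epsilon_o$, the factors $2$ and $4$). The second display \eqref{eq:radius-implication-bound} is then pure algebra: it is the contrapositive ``$\sign(Ax)=\sign(Ax')\Rightarrow\|x-x'\|_2\leq\epsilon_o$'' of \eqref{eq:radius-implication}, with $\epsilon_o$ solved out of the measurement condition --- plugging the ansatz $\epsilon_o\asymp\tfrac km\log\tfrac{mn}{k}$ back in shows the logarithmic factor closes. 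I expect the genuine obstacle to be exactly this de-discretization balancing act: $\rho$ must be fine enough (scale $\epsilon_o/\sqrt n$, not $\epsilon_o$) that rounding rarely flips a \emph{useful} sign, yet coarse enough that $|\mathcal Q|$ does not overwhelm the $e^{-cm\epsilon_o}$ budget; by contrast the single-pair probability and the Chernoff/union-bound bookkeeping are routine.
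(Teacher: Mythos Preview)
The paper does not prove this theorem; it is stated with citation to \cite{jacques2013robust} and accompanied only by the geometric intuition of Fig.~3(c). So there is no in-paper argument to compare against.

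Your net-plus-concentration outline is correct and is essentially the argument of the cited reference: the Goemans--Williamson identity gives the single-pair separation probability, Chernoff plus a union bound over an $\epsilon$-net of $\Sigma_k^*$ handles discretized pairs, and the contradiction via ``ambiguous rows'' is the right de-discretization mechanism. One refinement worth noting: because $x$ and its net approximant $\bar x$ can be taken to share the same $k$-sparse support $T$, the Cauchy--Schwarz step gives $|\langle a_i,\bar x-x\rangle|\le \|(a_i)_T\|_2\,\rho$ with $\|(a_i)_T\|_2\lesssim\sqrt{k}$ rather than $\sqrt{n}$; equivalently one can bound $\Pr[\exists\,x\in B_\rho(\bar x):\ \sign\langle a_i,x\rangle\ne\sign\langle a_i,\bar x\rangle]$ directly by the angular probability $\lesssim\rho$. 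Either way $\rho\asymp c\,\epsilon_o$ (or $\epsilon_o/\sqrt{k}$) already suffices, which is what yields the clean $k\log n+k\log(1/\epsilon_o)$ budget and the specific constants in \eqref{eq:gaussian-number-meas-consist}. Your choice $\rho\asymp\epsilon_o/\sqrt{n}$ still closes the argument but inflates the constants; the asymptotic conclusion and the inversion to \eqref{eq:radius-implication-bound} are unaffected.
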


Fig.~\ref{fig:1bit-cs-intuition}(c) provides further intuition on
these bounds by illustrating how 1-bit measurements operate in the
signal space. Specifically, each measurement corresponds to a
hyperplane in the signal space, orthogonal to the measurement
vector. The measurement sign determines on which side of the
hyperplane the signal lies. Furthermore, the signal is sparse, \ie
lies in $\Sigma_k^*$. A consistent sparse reconstruction algorithm can
produce any sparse signal in the indicated shaded region.

A new measurement provides new information about the signal only if
the corresponding hyperplane intersects the region of consistent
signals and, therefore, shrinks it. However, as more measurements are
obtained and the consistency region shrinks, newer measurements have
lower probability of intersecting that region and providing new
information, leading to the $1/m$ decay of the error.

\
Consistency can be quantified using the normalized hamming distance
between measurements
$$
d_H(\qq,\qq')=\frac{1}{m}\sum_i\qq_i\oplus\qq_i',
$$ where $\oplus$ denotes the exlusive-OR operator.  It is, thus,
possible to show that if $x$ and $x'$ above differ by no more than $s$
bits in their 1-bit measurements, \ie if $d_H(\sign(Ax),\sign(Ax'))
\leq s/m$, then, with $m \gtrsim \tfrac{1}{\epsilon_0}\,k
\log \max(m,n)$ and with high probability \cite{QIHT},
$$
\|x - x'\|_{2} \leq \tfrac{k+s}{k} \epsilon_{o}.
$$

A bound similar to \eqref{eq:radius-implication-bound} exists for sign
measurements of non-sparse signals in the context of quantization
using frame permutations~\cite{vivekQuantFrame}. In particular,
reconstruction from sign measurements of signals exhibits (almost
surely) an asymptotic error decay rate arbitrarily close to
$O(1/m)$. However, in contrast to Theorem~\ref{thm:hash} this result
holds only for a fixed signal and not uniformly for all signals of
interest.

Note that these results focus on matrices generated using the normal
distribution. It has been shown that matrices generated from certain
distributions do not perform well in this setting, even though they
can be used in standard compressive
sensing~\cite{plan2011dimension}. For instance, consider a random
Bernoulli matrix $A$ such that $a_{ij} = 1$ or $-1$ with equal
probability. In this case, the two distinct sparse vectors
$(1,0,\cdots,0)^T$ and $(1,\lambda,0,\cdots,0)^T$ with $0\leq \lambda
<1$ are $\lambda$ apart and they generate the same quantization vector
$q=\sign(A_1)$, where $A_1$ is the first column of $A$. It is not
possible, therefore, to distinguish those two vectors from their 1-bit
observations by increasing $m$ and guarantee that the reconstruction
error will decay as measurements increase. This counterexample,
however, is exceptional in the sense that such failures can only
happen if the signal can have a very large entry. Under mild flatness
assumptions on the $\ell_\infty$-norm of the signal, arbitrary
subgaussian measurements can be utilized~\cite{ai2014one}.

These results establish lower and upper bounds on distances between
two sparse signals that have (almost) consistent 1-bit
measurements. It is also possible to provide an embedding guarantee
similar to the RIP~\cite{candes2006ssr}. Since the measurement does
not preserve the signal magnitude, we should not expect distances of
signals to be preserved. However, the measurements do preserve angles
between signals.
Defining $
d_S(u,v) = \tinv{\pi} \arccos(u^T v),\quad u,v\in S^{n-1},$ we have: 

\begin{theorem}[Binary $\epsilon$-Stable Embedding (B$\epsilon$SE) \cite{jacques2013robust}]
\label{thm:1-bit-stable-embed}
Let $A \in \bb R^{m\times n} $ be a random Gaussian matrix such that
$a_{ij} \sim_{\rm iid} \cl N(0,1)$.  Fix $0\leq \eta\leq 1$ and $\epsilon>0$.  If the
number of measurements satisfies
\begin{equation}
\label{eq:numM}
m\ \geq\ \tfrac{2}{\epsilon^2}\big(k\,\log(n) +
2k\,\log(\tfrac{35}{\epsilon}) + \log(\tfrac{2}{\eta})
\big),
\end{equation}
then with probability exceeding $1-\eta$
\begin{equation}
  \label{eq:bse-def}
  d_S(x,x') - \epsilon \leq d_H(\sign(Ax), \sign(Ax')) \leq d_S(x,x')
  + \epsilon,
\end{equation}
for all $x,x' \in \Sigma^*_k$.
\end{theorem}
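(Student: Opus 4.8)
The plan is to reduce the uniform two‑sided bound \eqref{eq:bse-def} to a one‑point concentration inequality and then pay for uniformity with a covering argument on $\Sigma_k^*$. The key elementary fact is the random‑hyperplane identity: for a single row $a$ of $A$, the direction $a/\|a\|_2$ is uniform on $S^{n-1}$, so for any fixed unit vectors $x,x'$,
\[
\mathbb{P}\big[\sign(a^T x)\neq\sign(a^T x')\big]\;=\;\tfrac1\pi\arccos(x^T x')\;=\;d_S(x,x').
\]
Since the $m$ rows are independent, $d_H(\sign(Ax),\sign(Ax'))$ is the average of $m$ i.i.d.\ indicator variables with mean exactly $d_S(x,x')$, so a Hoeffding/Chernoff bound gives, for each fixed pair, $\mathbb{P}\big[|d_H(\sign(Ax),\sign(Ax'))-d_S(x,x')|>t\big]\le 2e^{-m t^2/2}$. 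Note that here $d_H$ is an \emph{unbiased} estimator of $d_S$, so no angle‑versus‑chord correction is needed; only the deviation has to be controlled.

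Next I would make this uniform over $\Sigma_k^*$ by covering. The set $\Sigma_k^*$ is the union of the $\binom nk$ unit spheres lying in the $k$‑dimensional coordinate subspaces, hence admits a $\rho$‑net $\mathcal{G}$ with $|\mathcal{G}|\le\binom nk(1+2/\rho)^k$. A union bound of the one‑point estimate (with deviation slightly below $\epsilon$) over all pairs of points of $\mathcal{G}$ shows that, on an event of probability at least $1-2|\mathcal{G}|^2 e^{-m\epsilon^2/2}$, $|d_H(\sign(Aa),\sign(Ab))-d_S(a,b)|$ is small for all $a,b\in\mathcal{G}$; requiring this failure probability to be at most $\eta$ (splitting $\eta$ between the two good events) is precisely what forces $m\gtrsim\epsilon^{-2}\big(\log\binom nk+k\log(1/\rho)+\log(1/\eta)\big)$, and with the net resolution $\rho$ fixed in the next step this yields the coefficients $k\log n$, $2k\log(35/\epsilon)$ and $\log(2/\eta)$ in \eqref{eq:numM}.

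The delicate step — and the main obstacle — is transferring the estimate from $\mathcal{G}$ to all of $\Sigma_k^*$. On the metric side this is painless: $d_S$ is a genuine metric and is Lipschitz for the Euclidean distance ($d_S(u,v)\le\tfrac12\|u-v\|_2$ on $S^{n-1}$), so $|d_S(x,x')-d_S(a,b)|\lesssim\rho$ whenever $a,b$ are the net points closest to $x,x'$. The difficulty is that $d_H(\sign(A\cdot),\sign(A\cdot))$ is a \emph{discontinuous} function of the signals, so one cannot invoke uniform continuity; instead one must bound, with high probability and uniformly in $x$, the Hamming distance $d_H(\sign(Ax),\sign(Aa(x)))$ to the nearest net point, i.e.\ the number of rows whose sign flips under the perturbation. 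A sign flip at row $i$ between $x$ and a $\rho$‑close $a$ forces the hyperplane $\{v:a_i^T v=0\}$ to separate them, hence $|a_i^T a|\le\|x-a\|_2\,\|P_S a_i\|_2$ with $S=\mathrm{supp}(x)\cup\mathrm{supp}(a)$, $|S|\le 2k$ and $P_S$ the coordinate projection. Combining a $\chi^2$‑tail bound for $\|P_S a_i\|_2$ (union‑bounded over the $m$ rows and the $2k$‑subsets) with the anti‑concentration fact that $a_i^T a\sim\mathcal N(0,1)$ for fixed unit $a$, so $\mathbb{P}[|a_i^T a|\le\tau]\lesssim\tau$, a Chernoff bound together with a union bound over $\mathcal{G}$ shows that for $\rho$ chosen small enough the flip count is at most a tiny fraction of $m$ for every $x\in\Sigma_k^*$ simultaneously.

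Finally I would assemble the pieces. For arbitrary $x,x'\in\Sigma_k^*$ with nearest net points $a=a(x)$, $b=b(x')$, the triangle inequalities for $d_H$ and for $d_S$ give
\[
\big|d_H(\sign(Ax),\sign(Ax'))-d_S(x,x')\big|\;\le\; d_H(\sign(Ax),\sign(Aa))+d_H(\sign(Ax'),\sign(Ab))+\big|d_H(\sign(Aa),\sign(Ab))-d_S(a,b)\big|+\big|d_S(a,b)-d_S(x,x')\big|,
\]
whose four terms are controlled, respectively, by the flip‑count bound (twice), the net concentration bound, and the Lipschitz bound $\lesssim\rho$; choosing the deviation in the net step and the resolution $\rho$ so that the total is at most $\epsilon$ yields \eqref{eq:bse-def} on the intersection of the two good events, and tracking the constants through $|\mathcal{G}|\le\binom nk(1+2/\rho)^k$ reproduces the sample‑complexity bound \eqref{eq:numM}. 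The only genuine work is the hyperplane‑near‑a‑sparse‑vector count in the third paragraph; everything else is bookkeeping with Hoeffding and the triangle inequality.
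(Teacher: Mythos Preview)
The paper does not actually prove this theorem; it is stated as a cited result from \cite{jacques2013robust} with no accompanying argument, so there is no ``paper's own proof'' to compare against. That said, your sketch is essentially the proof given in the cited reference: the random-hyperplane identity $\mathbb{E}[d_H]=d_S$, Hoeffding for a fixed pair, a $\rho$-net on $\Sigma_k^*$ of size $\binom{n}{k}(1+2/\rho)^k$, and then the sign-flip count to pass from net points to arbitrary points via the triangle inequality for both $d_H$ and $d_S$. Your identification of the flip-count step as the only genuinely non-routine piece is accurate, and the mechanism you describe (a sign flip forces $|a_i^T a|$ to be small relative to $\|P_S a_i\|_2\,\rho$, then Gaussian anti-concentration plus a Chernoff/union bound) is exactly how it is handled there. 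Tracking the constants with $\rho\asymp\epsilon$ and splitting the deviation budget between the net and flip events recovers \eqref{eq:numM}.
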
 
 
In other words, up to an additive distortion that decays as
$\epsilon\lesssim (\tfrac{k}{m} \log \tfrac{mn}{k})^{1/2}$, the Hamming
distance between $\sign(Ax)$ and $\sign(Ax')$ tends to concentrate
around the angular distance between $x$ and $x'$. Notice that, in
contrast to the RIP, a vanishing distance between the quantized
measurements of two signals does not imply they are equal, \ie we observe a (restricted) \emph{quasi-isometry}
between $\Sigma^*_k$ and $\sign(A \Sigma^*_k)$ instead of the common
RIP \cite{jacques2013quantized}. This comes from the additive nature of the distortion in
\eqref{eq:bse-def} and is a direct effect of the inherent ambiguity
due to quantization.
 
This embedding result has been extended to signals belonging to
convex sets $\cl K\subset \bb R^n$ provided that their Gaussian mean
width
\begin{equation}
  \label{eq:gaussian-width}
  w(\cl K) = \bb E\, \sup\{ u^T g: u\,\in\,\cl K\! -\! \cl K\},\quad g \sim
  \cl N(0,{\rm I}_{n\times n}),
\end{equation}
with \mbox{$\cl K\!\!-\!\cl K :=\{v - v': v,v' \in \cl K\}$}, can be computed
\cite{plan2013one,plan2012robust,plan2011dimension}. In particular, if
$$
m \geq C \epsilon^{-6} w^2(\cl K)
$$ for some constant $C>0$, then \eqref{eq:bse-def} holds with high
probability for any $x,x' \in \cl K \cap S^{n-1}$. In particular, for
$$
\cl K = K_{n,k} := \{u \in \bb R^n: \|u\|_1 \leq k^{1/2}, \|u\|_2 \leq 1\},
$$ since $w^2(K_{n,k}) = O(k \log n/k)$ \cite{plan2012robust}, an
embedding exists between the set of compressible vectors modeled by $K_{n,k}$
and  $\{-1,+1\}^m$ provided that $m \geq C \epsilon^{-6}\,k \log n/k$.

Note that generalizations of these embeddings to non-linear functions
other than the $\sign$ operator, or to stochastic processes whose
expectation is characterizable by such functions, are also possible
\cite{plan2012robust}. 

\subsubsection{Reconstruction from 1-Bit Measurements}
\label{sec:reconstr-greedy-conv}
The original efforts in reconstructing from 1-bit measurements
enforced $\|x\|_2=1$ as a reconstruction constraint, formulating the
non-convex $\ell_1$ minimization problem
\begin{align}
  \hat{x}=\arg\min_x\|x\|_1,~\mathrm{s.t.}~\qq=\sign(Ax),~\|x\|_2=1. \label{eq:1bmin}
\end{align}
Even though the problem is not convex, a number of algorithms have
been shown experimentally to converge to the
solution~\cite{BB_CISS08,LasWenYin::2010::Trust-but-verify:}. More
recently, a number of greedy algorithmic alternatives have also been
proposed~\cite{Bou::2009::Greedy-sparse,bahmani2013robust,jacques2013robust}.

Most of these algorithms attempt to enforce consistency by introducing
a one-sided penalty for sign violations
\begin{align}
  J(Az,\qq) = \| (\qq \circ A z)_- \|_q,
  \label{eq:one-sided}
\end{align}
where $\circ$ is the element-wise product between vectors, $(y_i)_- =
y_i$ if $y_i$ is negative and 0 otherwise, also applied element-wise,
and the $\ell_q$ norm is typically the $\ell_1$ or the $\ell_2$
norm. Typically, a descent step is performed using the gradient
of~\eqref{eq:one-sided}, followed by a support identification and
sparsity enforcement step. Often, care is taken in selecting the
descent step, especially considering the signal is on the unit
$\ell_2$ sphere~\cite{LasWenYin::2010::Trust-but-verify:}. Assuming a
certain noise level, a maximum likelihood formulation can also be used
to remove the norm constraint~\cite{bahmani2013robust}.

For example, the Binary IHT (BIHT), a variation of the popular Iterative Hard
Thresholding (IHT)~\cite{BluDav::2008::Iterative-hard}, uses the
one-sided $\ell_1$ norm in~\eqref{eq:one-sided} and follows its
subgradient $\tinv{2}A^T(q-\sign(Az))$. The algorithm is defined by
the iteration
\begin{equation}
  \label{eq:biht-def}
  z^{n+1} = \cl H_k\big(z^{n} + \tinv{2} A^T(y - \sign(A z^{n}))\big), \quad z^{0} = 0,
\end{equation}
where $\cl H_k(\cdot)$ is a hard threshold, keeping the largest $k$
coefficients of its input and setting the remaining ones to zero.

The BIHT does not have convergence or reconstruction guarantees to a
consistent output. Still, as shown in Fig.~\ref{fig:AngVsHamm}, it
works surprisingly well compared to other greedy approaches. Moreover,
variations exist to make it more robust to potential binary errors in
the knowledge of~$q$~\cite{jacques2013robust} or to extend it to
multi-bit scalar quantization \cite{QIHT}.

\begin{figure}[t]
   \centering
   \subfigure[\label{fig:AngVsHamm-01}m/n = 0.1]{\includegraphics[width=.305\textwidth]{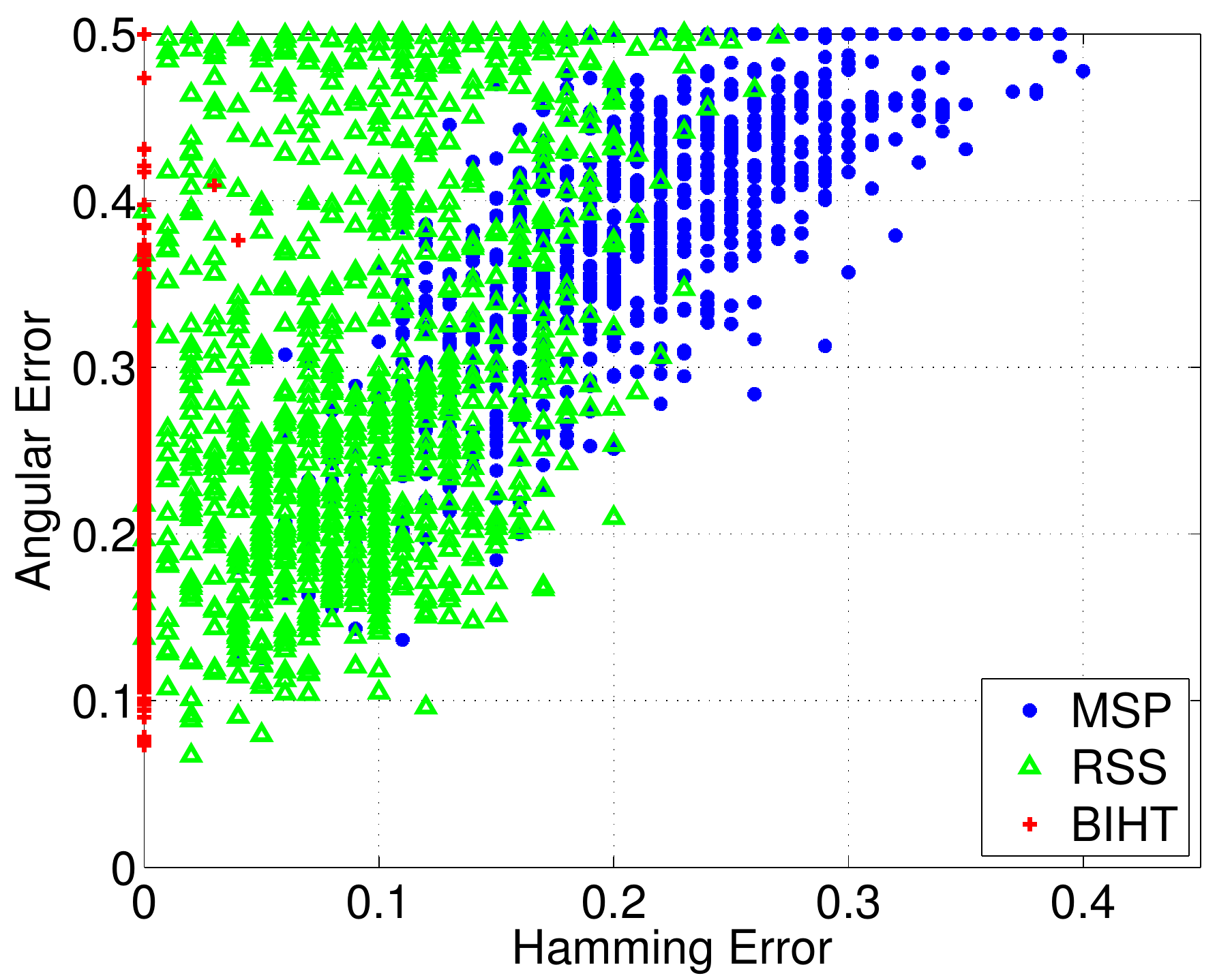}}
   \subfigure[\label{fig:AngVsHamm-07}m/n = 0.7]{\includegraphics[width=.32\textwidth]{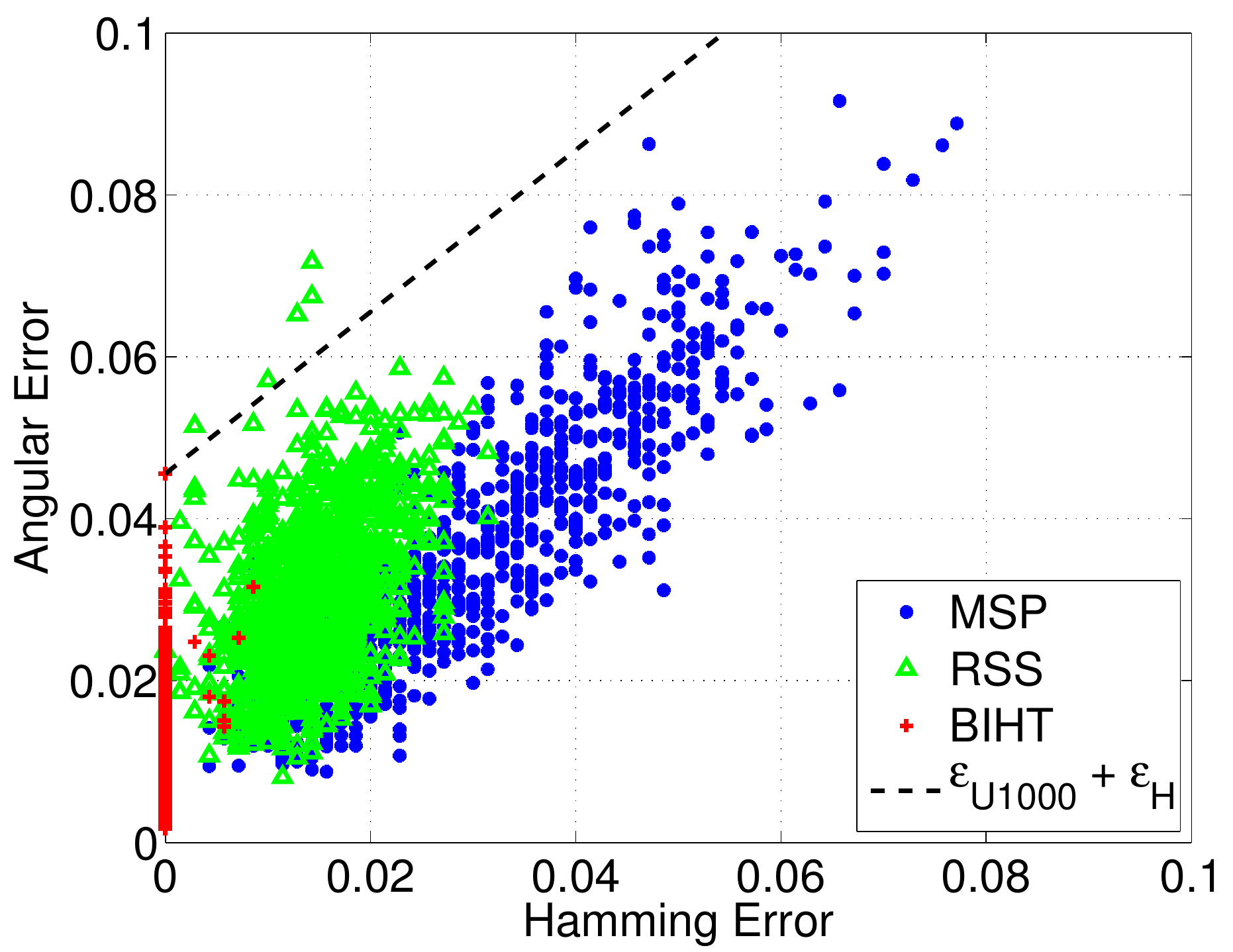}}
   \subfigure[\label{fig:AngVsHamm-15}m/n =1.5]{\includegraphics[width=.32\textwidth]{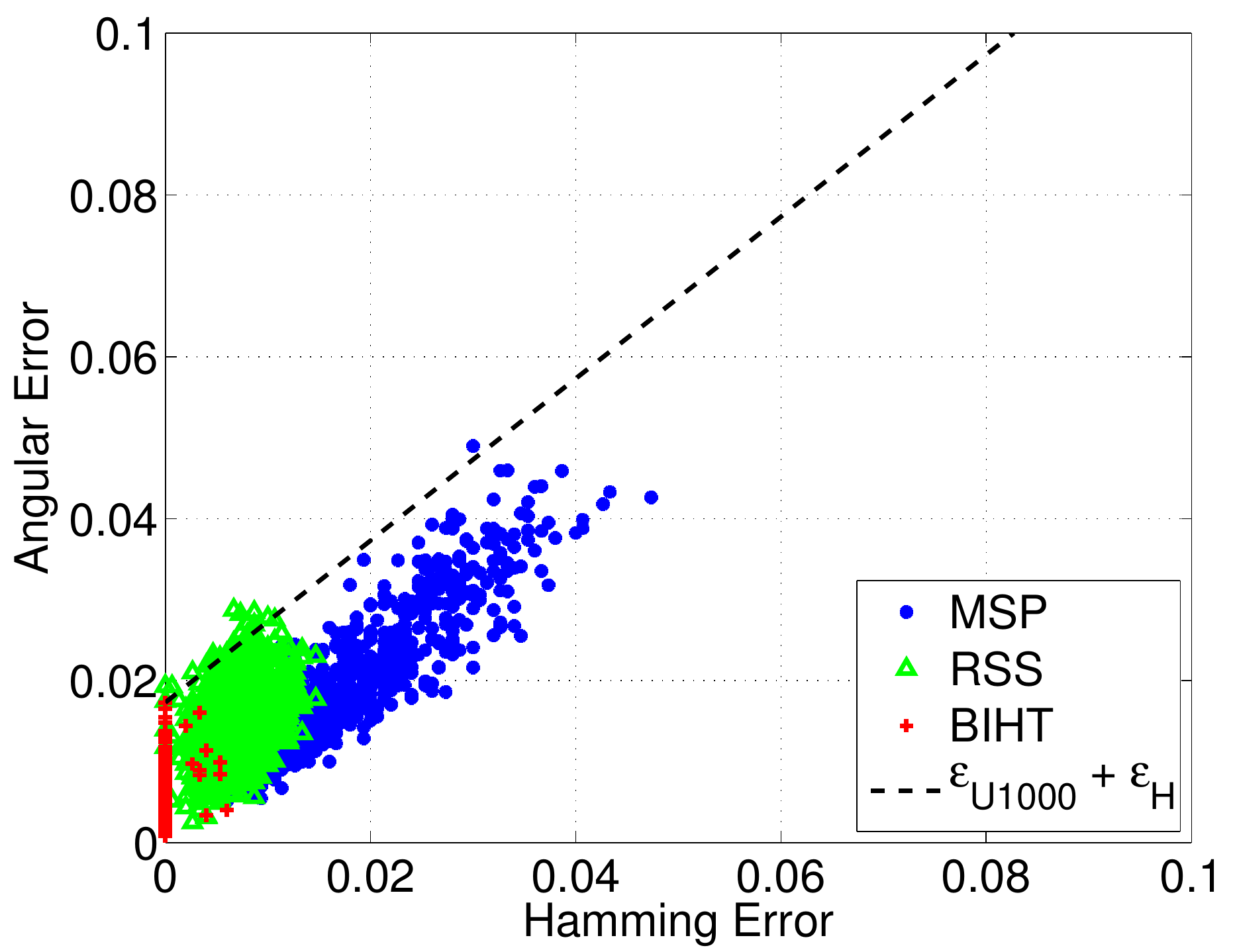}}
   \caption{Angular reconstruction error $\epsilon_S = d_S(x,\hat x)$
     vs.\ consistency error $\epsilon_H = d_H(\sign(A\hat x), q)$ for different
     greedy reconstructions (MSP, RSS and BITH).  BIHT returns a
     consistent solution in most trials.  When $A$ is a B$\epsilon$SE,
      \eqref{eq:bse-def} predicts that the angular
     error $\epsilon_S$ is bounded by the hamming error $\epsilon_{H}$
     (and conversely) in addition
     to an offset $\epsilon$. This phenomenon is confirmed by an
     experimental linear trend (in dashed) between the two errors that improves
     when $m/n$ increases \cite{jacques2013robust}. \label{fig:AngVsHamm}}
\end{figure}

The first iteration of BIHT is a simple truncated back-projection,
$\hat x_0 = \cl H_k(A^T q)$ whose distance to $x$ is known to decay
asymptotically as
$\sqrt{k/m}$ for a Gaussian matrix $A$
\cite{QIHT,bahmani2013robust}. Furthermore, $\hat x_0$ matches the
solution of the (feasible) problem
$$ \arg\max_z q^T Az\ {\rm s.t.}\ z\in\Sigma_k^*,
$$
where maximizing $q^T Az$ also promotes the 1-bit consistency of $z$
with $q$. 

This optimization can be generalized to any convex sets $\cl K \subset
\bb R^n$ where $x$ can lie, such as the set $\cl K = K_{n,k}$ of
compressible signals with Gaussian width $w(K_{n,k}) \asymp k \log
n/k$~\cite{plan2012robust}. If $m \geq C \epsilon^{-2} w(\cl K)^2$
for some $C>0$, and a fixed $x$ is sensed using~\eqref{eq:1bit-CS}
with a Gaussian sensing matrix $A$, then the solution to
$$
\hat x\ =\ \arg\max_z\ q^T Az\ {\rm s.t.}\ z\in\cl K,
$$ satisfies $\|\hat x - x\|^2 = O(\epsilon)$ with high probability.
Interestingly, under certain conditions, this holds also for sensing
models other than \eqref{eq:1bit-CS}, where the sign operator is
replaced, for instance, by the logistic function
\cite{plan2012robust}.

What makes it difficult to provide reconstruction error estimates for algorithms 
motivated by the problem
\eqref{eq:1bmin} is the non-convex constraint $\|x\|_2=1$, whose convex relaxation 
allows for the zero solution and is hence meaningless.
To overcome this obstacle, it has been proposed in~\cite{plan2013one,plan2012robust} to
impose a norm constraint to prevent trivial solutions on the measurements rather than the signal.
This results in a different problem, which allows for a meaningful convex
relaxation. Namely, since $\qq=sign(Ax)$, it
follows that at the solution $\qq^T(Ax)=\|Ax\|_1$. Thus, by
constraining this norm, the following convex problem can be
formulated:
\begin{align}
  \hat{x}=\arg\min_x\|x\|_1,~\mathrm{s.t.}~\qq=\sign(Ax),~\qq^TAx=1
  \label{eq:1bit_convex}
\end{align}
As shown in \cite{plan2013one}, the problem in \eqref{eq:1bit_convex}
does allow for reconstruction guarantees: If $m \sim \epsilon^{-5} k
\log (n/k)$, the solution $\hat x$ recovered from quantized Gaussian
measurements of a sparse signal $x$ is such that $d_S(x,\hat x) \leq
\epsilon$ with high probability. This holds uniformly for all signals
$x \in \bb R^n$. Under flatness assumptions on the signal, recovery
guarantees can also be proved for arbitrary subgaussian measurements
\cite{ai2014one}.
\subsection{Noise, Quantization and Tradeoffs}
\label{sec:extensions}
   
The sections above were focused on noiseless QCS models. These models
only consider the statistical or the geometrical properties of
quantization of CS measurements under high or low resolution
modes. However, any signal acquisition system is subject to noise
corruption before quantization, either on the measurement process or
on the signal itself. Such noise can be incorporated in a more general
model
\begin{equation}
  \label{eq:qcs-pcm-noisy}
  q = \qQop(A (x + \noise_x) + \noise_{\rm s}),  
\end{equation}
where $\noise_x\in \bb R^n$ and $\noise_{\rm s} \in \bb R^m$ corrupt
the signal and the sensing, respectively, before
quantization. Examining the impact of such noise in signal recovery
leads to new interesting questions.

In \cite{zymnis2010compressed} two efficient reconstruction methods are
developed for sparse or compressible signals sensed according
\eqref{eq:qcs-pcm-noisy} under sensing noise only, \ie $\noise_{x} = 0$. The
two approaches are mainly numerical: one relies on a maximum likelihood
formulation, built on the quantization model and on a known Gaussian noise
distribution, the other follows a least square principle. The two resulting methods are both
regularized by an $\ell_1$-norm accounting for sparse signal prior. A
provably convergent procedure inherited from a fixed point
continuation method is used for reconstructing the signal in the two
possible frameworks. With their approach, the combined effects of
noise and coarse quantization can be jointly handled. Reasonable
reconstruction results are achieved even using 1 or 2 bits per
measurement.

The case $\noise_x \neq 0$, $\noise_{\rm s} = 0$ boils down to an interaction of the
well-understood phenomenon of \emph{noise folding} in CS
\cite{davenport2012pros} and quantization \cite{laska2012regime}. Noise-folding in unquantized CS
says that under a weak assumption of orthogonality between the rows of $A$,
the variance of the component $A \noise_x$ undergoes a multiplication
by $n/m$ compared to the variance $\sigma_{\noise}^2$ of
$\noise_x$. This impacts directly the reconstruction error of
signals. The corresponding MSE is then $n/m$ times higher than the
noise power, or equivalently, the SNR looses 3 dB each time $m$ is
divided by 2 \cite{davenport2012pros}.

An extension of this result to noisy QCS has been provided
in~\cite{laska2012regime}, assuming the sensing matrix $A$ is RIP of
order $k$ and constant $\delta$. In this case, if $\noise_x$ is
standard normally distributed and if the quantizer has resolution $B$,
then, under a random signal model where the signal support $T$ is
chosen uniformly at random in $\{1,\cdots, n\}$ and the amplitudes of
the non-zero coefficients are standard normally distributed,
\begin{equation}
  \label{eq:regime-change-tradeoff}
(1-\delta)\bb E \|x - \hat x\|^2 = 2^{-2B+1} \tfrac{k}{m} \bb E \|x\|^2 
+ 2\big(2^{-2B} 
+  1 \big) \tfrac{n}{m}\,\bb E \|\noise_x|_T\|^2 + km \kappa, 
\end{equation}
where $\hat x = (A_T^\dagger q)_T$ is the oracle-assisted
reconstruction of $x$ knowing the
support $T$ of $x$ for each of its realization, and 
$$
\kappa =
\max_{i\neq j} |\bb E \qQop( a_i^T(x+\noise_x)) \qQop(
a_j^T(x+\noise_x))|,
$$
measures the worst correlation between distinct
quantized measurements.

In \eqref{eq:regime-change-tradeoff}, the first term
accounts for the quantization error of the signal itself, while the
second term represents both the error due to folded signal noise as
well as the quantization of that noise. Finally, the third term
reflects a distortion due to correlation between quantized
measurement. It is expected to be negligible in CS scenarios,
specially when $B$ increases or if a \emph{dithering} is added to $\qQop$ \cite{gray1998quantization}.

\begin{figure*}[!t] 
   \centering
   \begin{tabular}{cc}
     \includegraphics[width=.4\textwidth]{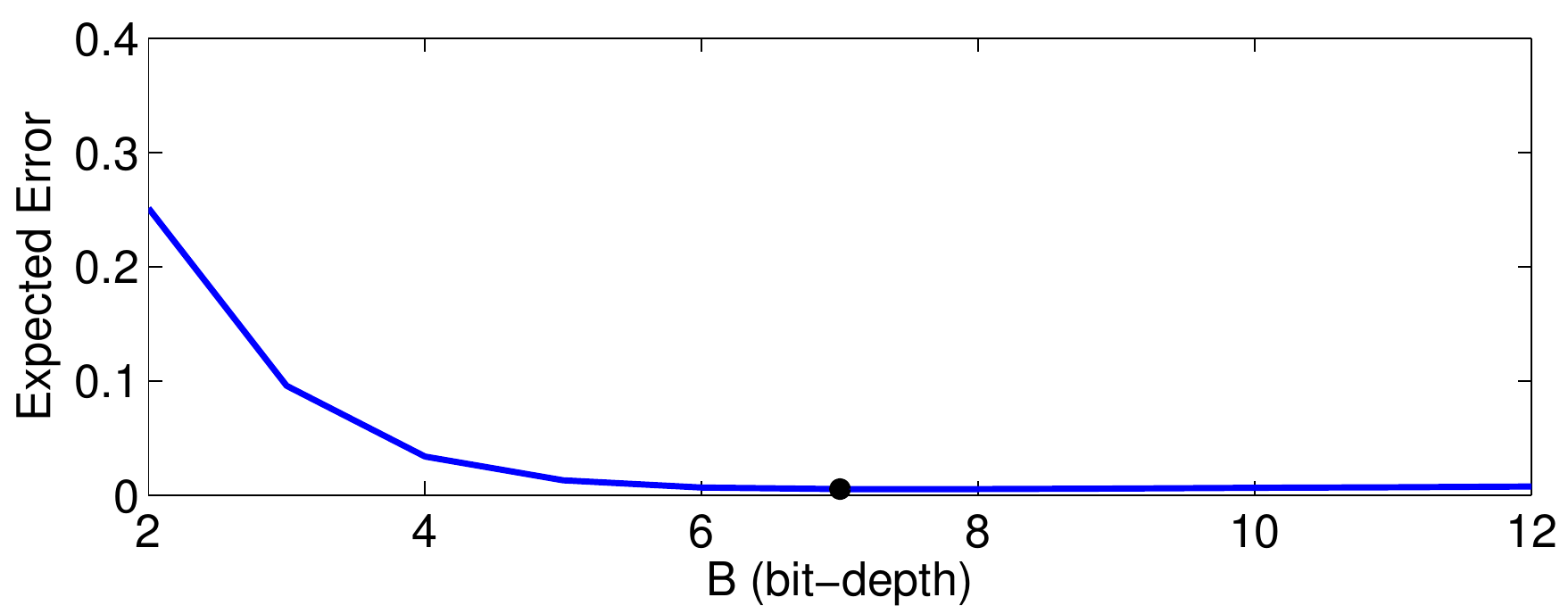}& 
     \includegraphics[width=.4\textwidth]{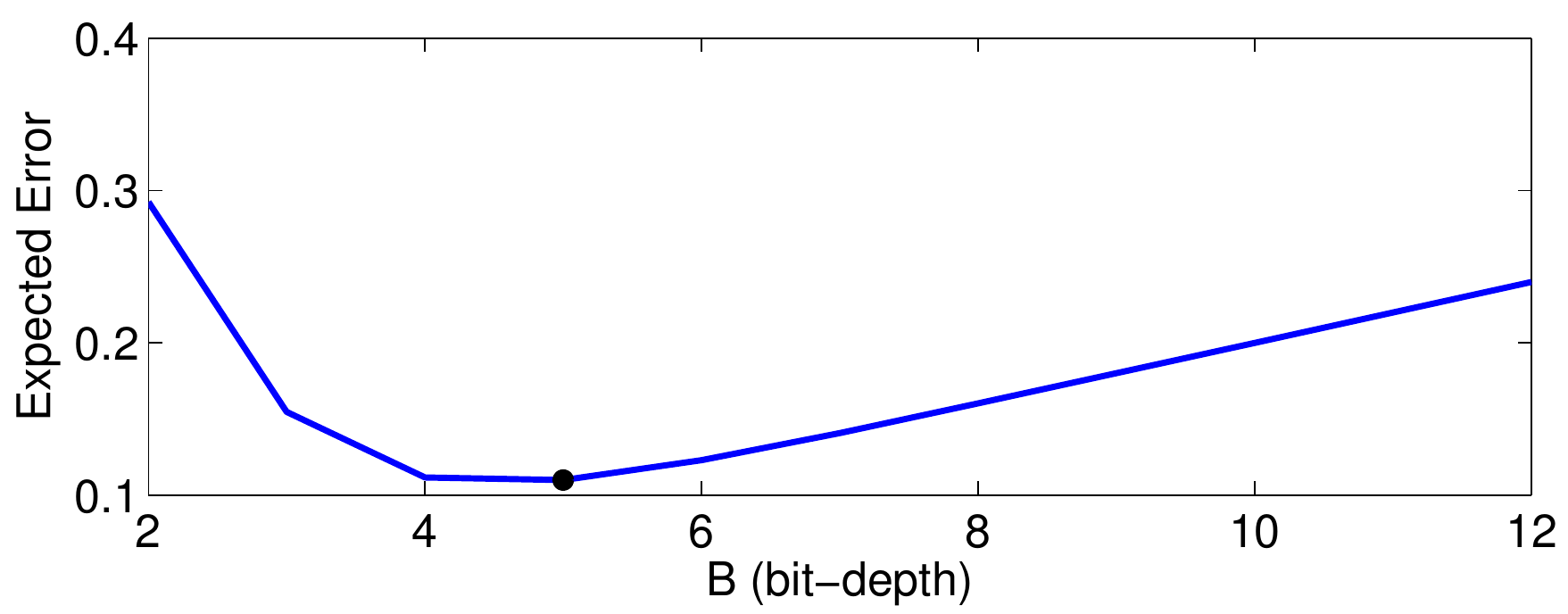}\\
       (a) \small{$\mathrm{ISNR} = 35$dB, optimal bit-depth $= 7$}&(b) \small{$\mathrm{ISNR} = 20$dB, optimal bit-depth $= 5$}\\
          \includegraphics[width=.4\textwidth]{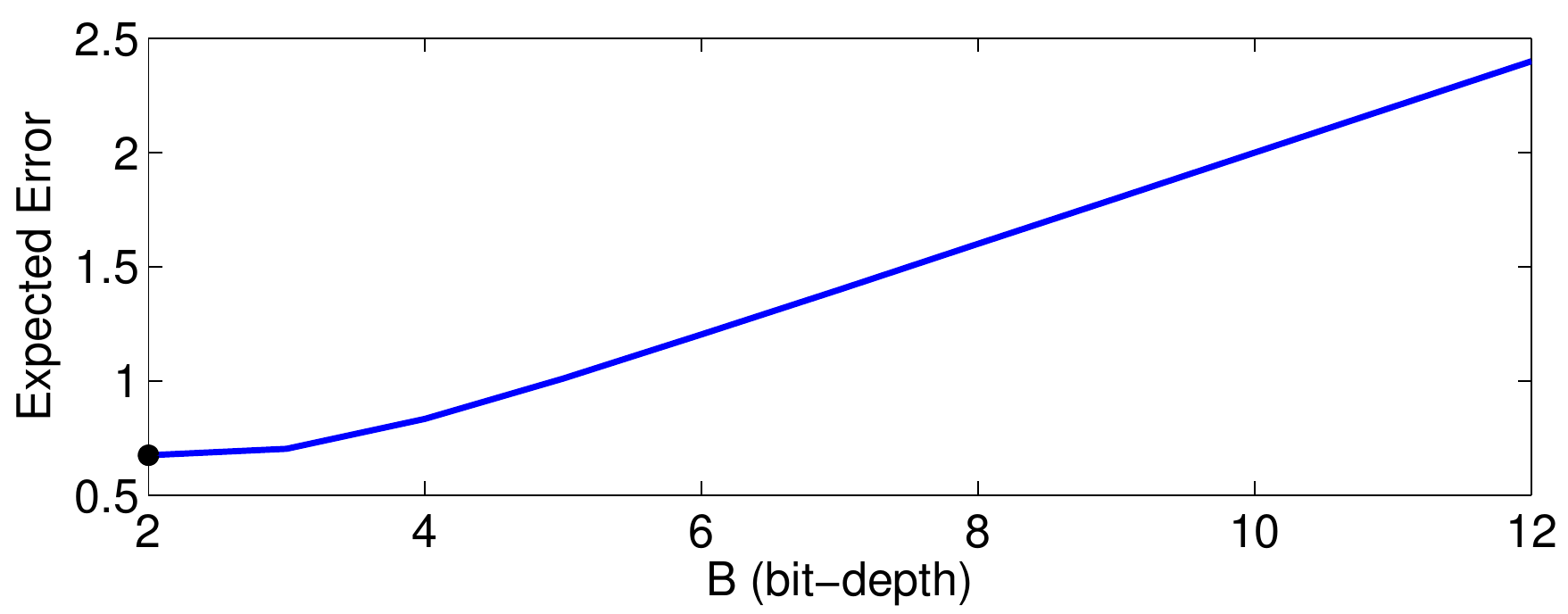}&
   \includegraphics[width=.4\textwidth]{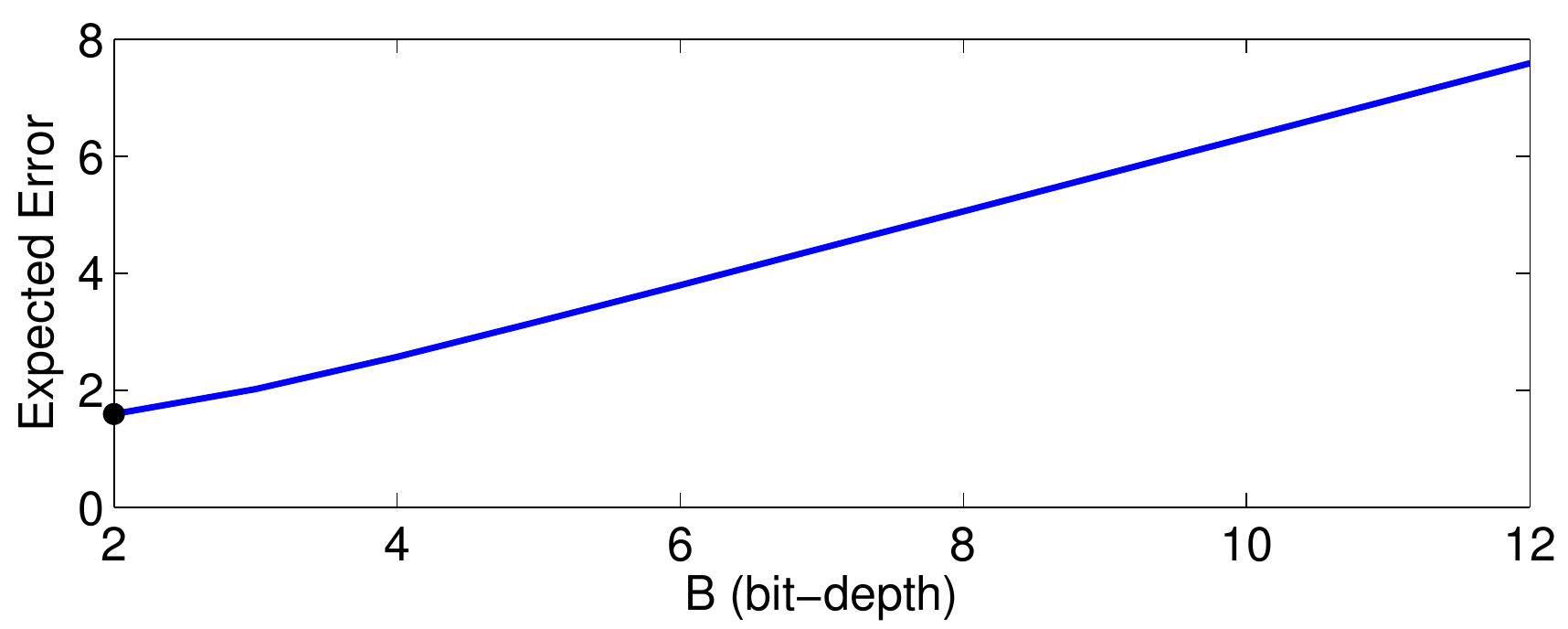}\\
	(c) \small{$\mathrm{ISNR} = 10$dB, optimal bit-depth $= 2$}&(d)\small{ $\mathrm{ISNR} = 5$dB, optimal bit-depth $= 2$}\\
   \end{tabular}
   \caption{Upper bound on the oracle-assisted reconstruction error as
     a function of bit-depth $B$ and ISNR at
constant rate $R = mB$ \cite{laska2012regime}.  
The black dots denote the minimum point on each curve. 
   }
   \label{fig:bound}
\end{figure*}

Numerical study of \eqref{eq:regime-change-tradeoff} shows that, at
constant rate $R = mB$, a tradeoff can be expected between a
\emph{measurement compression} (MC) regime, where $m$ is small (but
still high enough to guarantee $A$ to be RIP) and $B$ is high, and a
\emph{quantization compression} (QC) regime, where $m$ is high
compared to the standard CS setting but $B$ is small. Interestingly,
the optimal bit-depth $B$, minimizing the expected reconstruction
error, depends on the input SNR: $\mathrm{ISNR}=20 \log_{10}
\|x\|/\|\noise_x\|$. This is illustrated in Fig.~\ref{fig:bound} where
the evolution of \eqref{eq:regime-change-tradeoff} (discarding the
effect of the third term) is plotted for four different noise
scenarios. The optimal bit depth decays smoothly with the ISNR,
suggesting that the QC regime is preferable at low ISNR while MC is
clearly better at high ISNR. The general behavior of
Fig.~\ref{fig:bound} is also confirmed on Monte Carlo error estimation
of the oracle-assisted reconstruction defined above
\cite{laska2012regime}.

\section{Sigma-Delta Quantization for Compressive Sensing}\label{sec:SDCS}

As mentioned in the introduction, $\sd$ quantization for compressed
sensing fundamentally builds on corresponding schemes for finite
frames. Thus before presenting an analysis specific to compressed
sensing, we first discuss the finite frame case.

\subsection{$\sd$  Quantization for Frames}\label{subsec:frame}
Let $\Phi \in \mathbb{R}^{n\times N}$  with columns $\{ \phi_j\}_{j=1}^N$ be a frame in the sense of (1.32) and consider the frame expansion $$c=\Phi^T x$$ of a signal $x\in {\mathbb R}^n$. The goal is now to quantize  $c$ as a whole such that the quantized representation $q$ allows for approximate recovery of $x$. $\sd$ quantization schemes obtain such a $q$ using a recursive procedure, which we will now explain in detail.

At the core of the schemes is a uniform scalar quantizer $\qQop$, which maps a real number to the closest point in a codebook of the form 
\begin{equation}\qQal = \big\{ (\pm j-1/2)\qD, j \in \{ 1, ... ,  L\}\big\}.\label{eq:SD_codebook}\end{equation}

A $\sd$ scheme applies such a quantizer sequentially to the entries of $c$, taking in each quantization step the errors made in $r$ previous steps into account. The complexity parameter $r$ is referred to as the order of the $\sd$ scheme; it quantifies the trade-off between required storage and achievable accuracy.

A first order $\sd$ quantization scheme, the simplest such algorithm, hence retains the error only for one step.   In the following formalization associated with the so-called \emph{greedy} first order $\sd$ scheme, the error parameter appears as the state variable $u_i$; it measures the total accumulated error up to step $i$.
The quantized frame coefficient vector $q\in \qQal^N$ is computed by running the iteration
\begin{align}
q_i &= \qQop (u_{i-1} + c_i) \nonumber \\
u_i &= u_{i-1} + c_i - q_i. \label{eq:1st_order}
\end{align}
As initialization, one typically uses $u_0=0$. In matrix-vector notation, the above recurrence relation reads \begin{equation}D u = c - q.\label{eq:mat_form} \end{equation} Here $D\in \mathbb{R}^{N\times N}$ is the finite difference matrix with entries given in terms of the Kronecker delta by $D_{ij} = \delta_{i,j} -\delta_{i+1,j}$, that is, 
\begin{equation}
D = \begin{pmatrix}1&0  & 0 & \cdots & 0\\-1 &1 &0 & & 0\\ 0& -1 & 1 & & 0\\ \vdots& & \ddots&\ddots &\vdots 
\\ 0& 0 &\cdots &-1 &1
\end{pmatrix}.
\end{equation}

The scheme is explicitly designed such that each $q_j$ partly cancels the error made up to $q_{j-1}$. When the signal is approximated as $\widetilde{\Phi} q$ using a dual frame $\widetilde \Phi \in \mathbb{R}^{n\times N}$  with columns $\{ \widetilde \phi_j\}_{j=1}^N$, this entails that one seeks to compensate an error in the direction of a dual frame vector $\widetilde \phi_{j-1}$ using a distortion in the direction of the next dual frame vector $\widetilde \phi_{j}$. This serves as a motivation to choose a smoothly varying dual frame, \ie with subsequent dual frame vectors close to each other.

Bounding the reconstruction error using \eqref{eq:mat_form} in terms of the operator norm $\|A\|_{2\rightarrow 2}:= \sup_{\|x\|_2\leq 1} \|Ax\|_2$, one obtains \[ \| x- \widetilde{\Phi}q \|_2 =  \|\widetilde{\Phi}(c-q)\|_2 = \|\widetilde{\Phi}D u \|_2 \leq \|\widetilde{\Phi} D \|_{2\rightarrow 2} \|u\|_2. \]  The smoothness intuition is reflected in the fact that the columns of $\widetilde{\Phi} D$ are given by $\widetilde{\phi}_j - \widetilde{\phi}_{j-1}. $  Thus more precisely, finding a smooth dual frame $\widetilde \Phi$ amounts to minimizing $\|\widetilde \Phi D\|_{2\rightarrow 2}$.

If one is willing to store more than one previous value of the state variable, that is, to consider a higher order $\sd$ scheme, it is possible to profit from higher order smoothness of the dual frame. Such a generalization of \eqref{eq:1st_order} is the greedy $r$-th order $\sd$ scheme, which is associated with the recurrence relation
\begin{equation}D^r u = c - q.\label{eq:mat_form_r} \end{equation} 
Here, the iteration to compute the quantized coefficients is explicitly given by 
\begin{align}
q_i &\textstyle = \qQop \big(\sum\limits_{j=1}^{r}(-1)^{j-1} {{r}\choose{j}}u_{i-j} + c_i\big) \nonumber \\
u_i &\textstyle = \sum\limits_{j=1}^{r}(-1)^{j-1} {{r}\choose{j}}u_{i-j} + c_i - q_i. \label{eq:r-th_order}
\end{align}
As before, one initializes $u_i=0, \ i\leq 0$. The reconstruction
error is now bounded by
\begin{equation} \| x- \widetilde{\Phi}q \|_2 =  \|\widetilde{\Phi}(c-q)\|_2 = \|\widetilde{\Phi}D^r u \|_2 \leq \|\widetilde{\Phi} D^r \|_{2\rightarrow 2} \|u\|_2. \label{eq:errbound}
\end{equation}
Examining \eqref{eq:errbound}, it is advantageous to choose a dual frame that minimizes $\|\widetilde{\Phi}D^r \|_{2\rightarrow2}$, and a $\sd$ scheme that yields a state-variable sequence with well bounded $\|u\|_2$. This motivates the following definitions.
\begin{definition}
Let $\Phi \in \mathbb{R}^{n\times N}$ be a frame and $r$ be a positive integer. Then the $r$-th order  \emph{Sobolev dual} of $\Phi$ is given by 
\begin{equation}\widetilde\Phi^{(r)}:= \arg\min \| \widetilde{\Phi}D^r\|_{2 \rightarrow 2} = (D^{-r} \Phi)^\dagger D^{-r} \label{eq:Sobolev},\end{equation}
where the minimum is taken over all dual frames of $\Phi$.
\end{definition}

\begin{definition}
A $\sd$ scheme with a codebook $\qQal$ is \emph{stable} if there exist constants $C_1$ and $C_2$ such that whenever $\|c\|_\infty \leq C_1$ we have $\|u\|_\infty \leq C_2.$ 
\end{definition}

In general, designing and proving the stability of $\sd$ quantization schemes of arbitrary order can be quite difficult if the number of elements in the associated codebook is held fixed. This challenge is especially difficult in the case of 1-bit quantizers and overcoming it is the core of the contributions of \cite{DD03, G03, DGK11}, where stable $\sd$ quantization schemes of arbitrary order are designed. On the other hand, if the number of elements in the codebook \eqref{eq:SD_codebook} is allowed to increase with order, then even the simple greedy $\sd$ schemes \eqref{eq:r-th_order} are stable, as the following proposition shows (see, \eg \cite{BPA07}). 

\begin{proposition}
The greedy $r$-th order $\sd$ scheme \eqref{eq:r-th_order} associated with the $2L$-level scalar quantizer \eqref{eq:SD_codebook} is stable, with $\|u\|_\infty \leq \qD/2$, whenever $\|c\|_\infty \leq  \qD (\qL - 2^{r-1}+ 2^{-1})$.
\end{proposition}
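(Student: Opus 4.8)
The plan is to prove the bound $\|u\|_\infty \le \qD/2$ by induction on the iteration index $i$, using the elementary fact that the midpoint uniform quantizer $\qQop$ commits an error of at most $\qD/2$ as long as its input has not saturated, combined with the hypothesis on $\|c\|_\infty$, which is calibrated precisely to keep the quantizer input in its non-saturating range at every step.

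First I would record the relevant property of the scalar quantizer $\qQop$ attached to the codebook \eqref{eq:SD_codebook}. The $2\qL$ levels $(\pm j - 1/2)\qD$, $j=1,\dots,\qL$, are the midpoints of the $2\qL$ consecutive length-$\qD$ intervals tiling $[-\qL\qD,\qL\qD]$, so that $|v - \qQop(v)| \le \qD/2$ for every $v$ with $|v| \le \qL\qD$. (The convention at the two extreme cells -- half-lines versus bounded intervals -- is immaterial, since in the argument below $\qQop$ is only ever evaluated at inputs lying in this interval.)

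Next comes the induction itself. The base case is the initialization $u_i = 0$ for $i \le 0$, for which $|u_i| = 0 \le \qD/2$ trivially. For the inductive step, fix $i \ge 1$ and assume $|u_{i-j}| \le \qD/2$ for $j = 1, \dots, r$ (an assumption that includes, harmlessly, the zero values at non-positive indices). Writing $v_i := \sum_{j=1}^{r}(-1)^{j-1}\binom{r}{j} u_{i-j} + c_i$ for the argument of $\qQop$ in \eqref{eq:r-th_order}, the triangle inequality and the binomial identity $\sum_{j=1}^{r}\binom{r}{j} = 2^r - 1$ give
\[
|v_i| \;\le\; \frac{\qD}{2}\sum_{j=1}^{r}\binom{r}{j} + \|c\|_\infty
\;=\; \qD\bigl(2^{r-1} - \tfrac12\bigr) + \|c\|_\infty
\;\le\; \qD\bigl(2^{r-1} - \tfrac12\bigr) + \qD\bigl(\qL - 2^{r-1} + \tfrac12\bigr)
\;=\; \qL\qD,
\]
where the last inequality is exactly the hypothesis $\|c\|_\infty \le \qD(\qL - 2^{r-1} + 2^{-1})$. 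Hence $v_i$ lies in the non-saturating range of $\qQop$, so the property recorded above yields $|u_i| = |v_i - \qQop(v_i)| \le \qD/2$, which closes the induction. This establishes stability in the sense of the preceding definition, with $C_1 = \qD(\qL - 2^{r-1} + 2^{-1})$ and $C_2 = \qD/2$.

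There is essentially no hard step here; the only mild subtlety is bookkeeping the weighted combination of past state variables correctly -- one must bound it by the sum of the \emph{absolute values} of the coefficients $(-1)^{j-1}\binom{r}{j}$, namely $2^r - 1$, rather than by the much smaller signed sum -- and then to notice that the hypothesis on $\|c\|_\infty$ has been chosen exactly so that the resulting bound $\qD(2^{r-1}-\tfrac12) + \|c\|_\infty$ on $|v_i|$ does not exceed $\qL\qD$. One could also phrase the whole proof in terms of the matrix form \eqref{eq:mat_form_r}, but the coordinatewise induction is the most transparent route.
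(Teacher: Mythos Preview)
Your proof is correct and in fact more streamlined than the paper's. The paper does not work directly with the direct-form recursion \eqref{eq:r-th_order}; instead it first rewrites the scheme in its cascade-of-accumulators form \eqref{eq:r-th_order2}, introducing auxiliary state variables $u_i^{(1)},\dots,u_i^{(r)}$ with $u_i^{(r)}=u_i$, and carries through a stronger inductive hypothesis $|u_{i-1}^{(j)}|\le 2^{r-j}\qD/2$ for all $j$. Both routes land on the same key estimate $(2^r-1)\qD/2$ for the feedback contribution and the same non-saturation check $|v_i|\le \qL\qD$; the difference is that the paper also bounds each intermediate accumulator, which can be useful if one cares about the size of internal registers, while your direct-form argument tracks only what the proposition actually asserts and is correspondingly shorter. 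Your remark about using the \emph{absolute} binomial sum $\sum_j\binom{r}{j}=2^r-1$ rather than the signed sum is exactly the right point to flag.
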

\begin{proof}
The proof is by induction. We begin by rewriting \eqref{eq:r-th_order} in terms of auxiliary state variables  $u_i^{(j)}, j=1,...,r$ and $u_i^{(0)}=c_i -q_i$ as 
\begin{align}
q_i &= \qQop \Big(\sum_{j=1}^{r}u^{(j)}_{i-1} + c_i\Big) \nonumber \\
u_i^{(j)} &= u_{i-1}^{(j)}+u_{i}^{(j-1)}, \quad j=1,...,r \label{eq:r-th_order2}
\end{align}
with $u^{(j)}_0=0$ for $j=1,...,r$. Note that with this notation $u_i^{(r)}=u_i$. Now suppose that $|u^{(j)}_{i-1}| \leq 2^{r-j} \qD/2$ for all $j \in \{1,...,r\}$, then $|\sum_{j=1}^r u^{(j)}_{i-1}| \leq (2^r-1) \qD/2$.
Since by the $\sd$ iterations we have $u^{(j)}_{i} = \sum_{k=1}^j u^{(k)}_{i-1} + c_i - q_i$ we deduce that \[ |u_i^{(r)}| = | \sum_{k=1}^r u^{(k)}_{i-1} + c_i - \qQop(\sum_{k=1}^r u^{(k)}_{i-1} + c_i)|  \leq \qD/2\] 
provided  $\|c\|_\infty \leq \qD (\qL - 2^{r-1}+ 1/2)$. Moreover, by \eqref{eq:r-th_order2}, $|u^{(j)}_{i}| \leq 2^{r-j}\Delta/2$.
 \end{proof}

 Working with stable $r$-th order $\sd$ schemes and frames with
 smoothness properties, and employing the Sobolev dual for
 reconstruction, it was shown in \cite{BLPY10} that the reconstruction
 error satisfies $\| x - \widetilde{\Phi}q\|_2 \leq C_{r,\Phi}
 N^{-r}$, where the constant $C_{r,\Phi}$ depends only on the quantization scheme and the frame. Such results for $\sd$-quantization show that its error
 decay rate breaks the theoretical $\sim 1/N$ lower bound of scalar
 quantization described in the introduction.  Fig.~\ref{fig:quant_cells} helps to illustrate why such a result is
 possible. It shows the quantization cells associated with two bit
 quantization of $\Phi^T x$, where $x$ in the unit ball of
 $\mathbb{R}^2$, using both first order $\sd$ quantization and scalar
 quantization. For vectors belonging to a given cell, the worst case
 error achieved by an optimal decoder is proportional to the diameter
 of the cell.  The figure shows that the cells resulting from $\sd$
 quantization are smaller than those resulting from scalar
 quantization, indicating the potential for a smaller reconstruction
 error. For a detailed overview of $\sd$ quantization of frame
 expansions see, \eg \cite{PSY13}.

\begin{figure}[t]
  \centering
   \begin{tabular}{cc}
  \includegraphics[width=.5\linewidth]{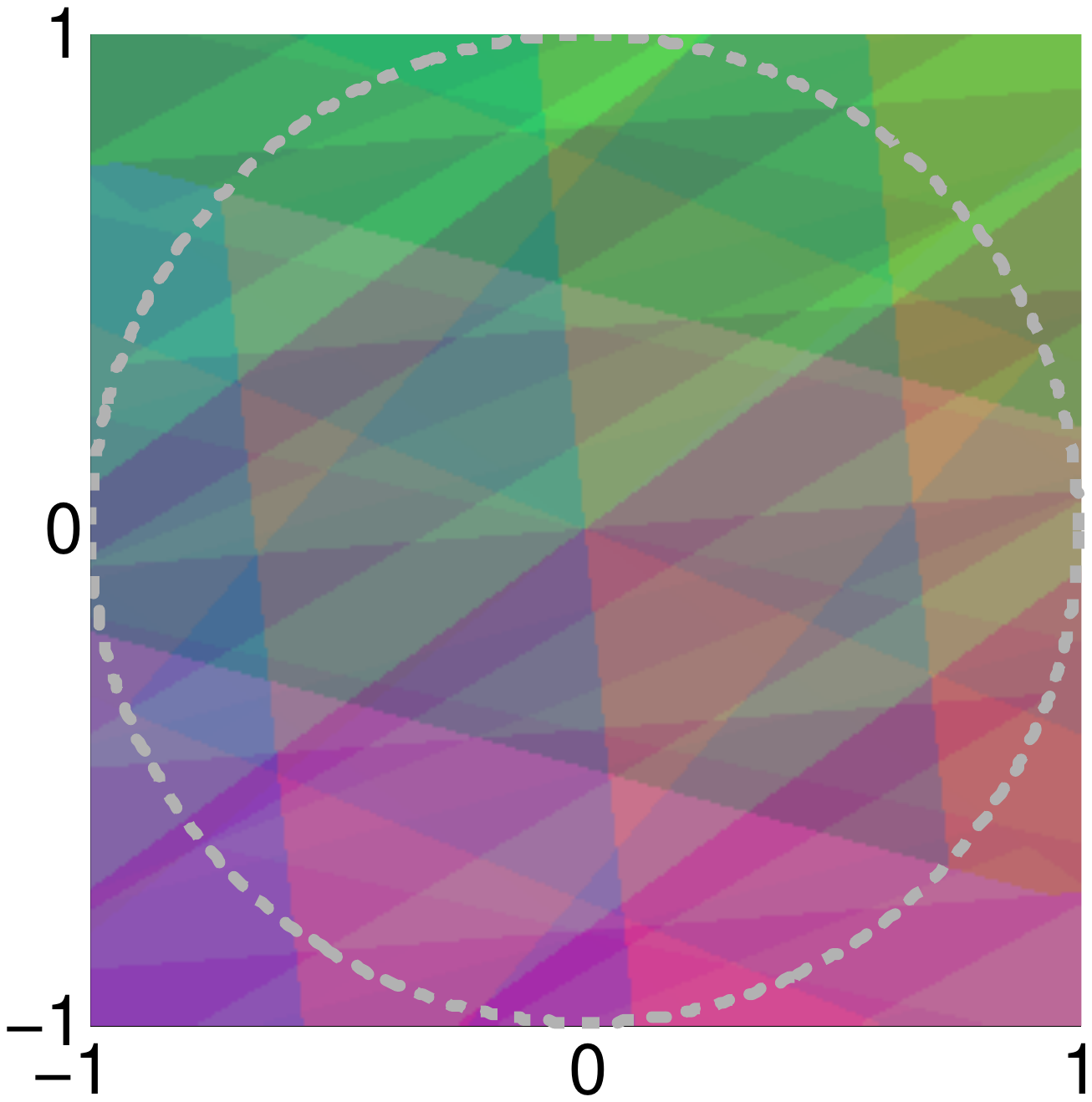}& 
  \includegraphics[width=.5\linewidth]{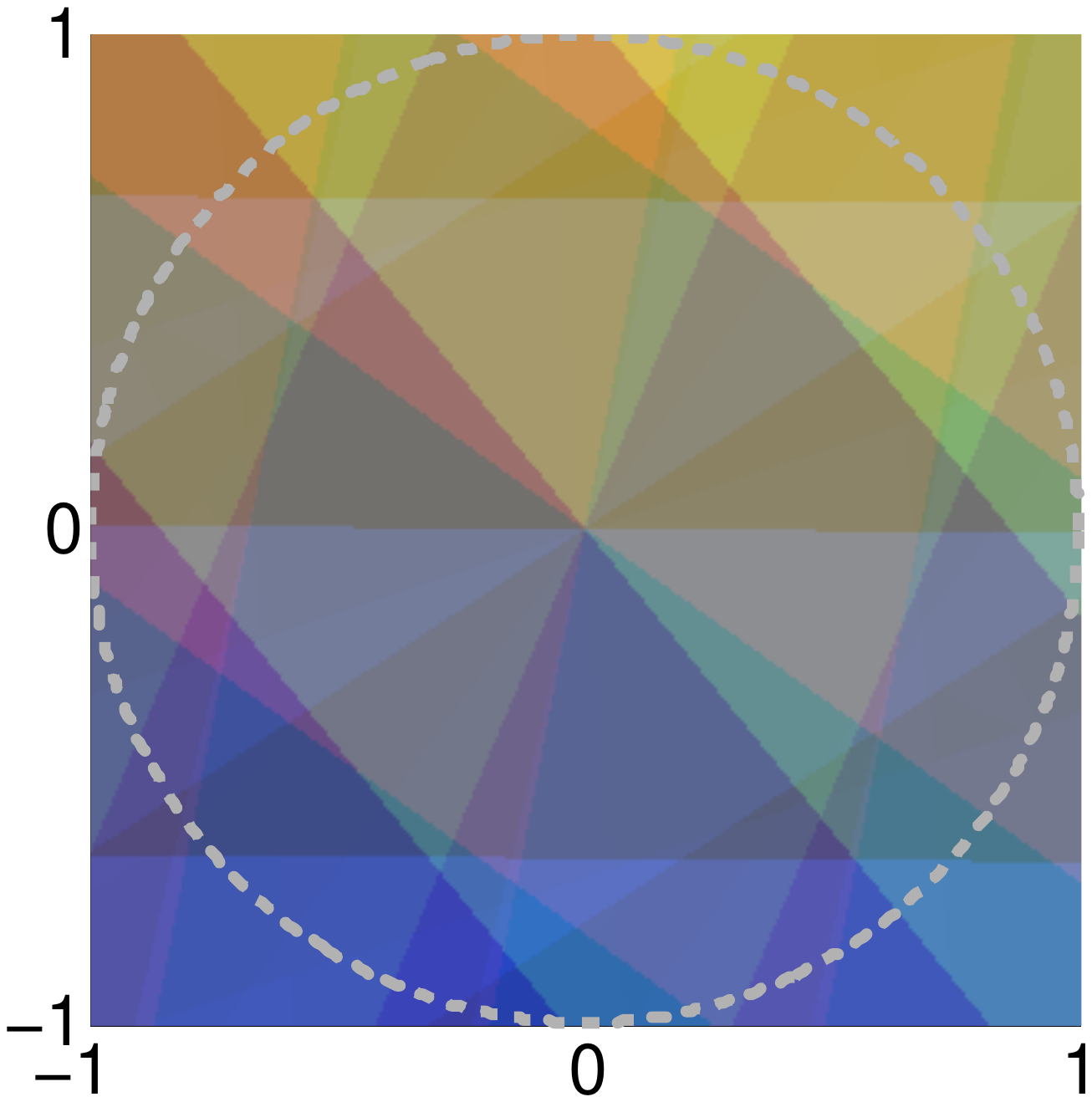}
  \end{tabular}
  \vspace{-40pt}
  \caption{The first order $\sd$ (left) and scalar quantization (right) cells associated with 2-bit quantization of $\Phi^T x$ where $x$ is in the unit ball of $\mathbb{R}^2$ and $\Phi$ is a $2\times 15$ Gaussian random matrix.}
  \label{fig:quant_cells}
  \end{figure}

The existing recovery algorithms for $\sd$ quantized compressed sensing measurements rely on a two stage algorithm. In the first stage, the signal support is recovered and in the second stage, the signal coefficients are estimated using the Sobolev dual of the frame associated with the recovered support. 

\subsection{Finding the Signal Support}\label{subsec:supp}
Let $q\in \qQal^m$ be the $r$-th order $\sd$ quantization of the compressed sensing measurements $y=Ax \in \mathbb{R}^m$ associated with the sparse vector $x\in\Sigma_k$ and the measurement matrix $A\in \mathbb{R}^{m\times n}$. In order to preserve the codebook definition \eqref{eq:SD_codebook}, we assume as in Sec.~\ref{sec:scalarCS} that the scaling of the entries of $A$ is independent of $m$.

The goal of the first stage of the reconstruction algorithm is to recover  $T:= supp(x)$. To that end, following \cite{GLPSY13} we will use a (standard) compressed sensing decoder $\decoder: \mathbb{R}^m \to \mathbb{R}^N$ that has uniform robustness guarantees for matrices with an appropriate RIP constant. For such a decoder and an arbitrary scalar $\kappa$
\begin{equation}\label{eq:CS_guarantee}
x\in\Sigma_k \text{ and } \gamma\in\mathbb R^m: \|\gamma\|_2 \leq \kappa\sqrt{m} \quad \implies \quad \|\decoder(A x + \gamma) - x \|_2 \leq C\kappa. 
\end{equation} 
For example, if $\decoder(Ax+\gamma)$ is the output of an $\ell_1$-minimization algorithm such as Basis Pursuit DeNoising (BPDN), it satisfies \eqref{eq:CS_guarantee} with constant $C:=C(\delta, k)$ when the matrix $A$ (more precisely ${A}/{\sqrt{m}}$) satisfies an appropriate restricted isometry property \cite{candes2008rip}. As the next proposition shows, robust decoders allow recovering the support of a sparse vector when its smallest non-zero entry is above the error level.

\begin{proposition}\label{prop:support_recovery}Let $\decoder$ be a compressed sensing decoder satisfying \eqref{eq:CS_guarantee} and let $x\in \Sigma_k$ with $T:=supp(x)$. Define $\hat{x}:=\decoder(A x + \gamma)$.  If $\min_{i\in T} |x_i| > 2C\kappa$ then the largest $k$ coefficients of $\hat{x}$ are supported on $T$. 
\end{proposition}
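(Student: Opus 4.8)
\emph{Proof plan.} The plan is to convert the $\ell_2$ reconstruction bound guaranteed for $\decoder$ into an entrywise ($\ell_\infty$) bound, and then to use the gap hypothesis $\min_{i\in T}|x_i|>2C\kappa$ to separate the magnitudes of the on-support and off-support entries of $\hat x$.

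First I would apply \eqref{eq:CS_guarantee} with the perturbation $\gamma$ in force (in the intended application $\|\gamma\|_2\le\kappa\sqrt m$, so the hypothesis of \eqref{eq:CS_guarantee} is met), obtaining $\|\hat x-x\|_2\le C\kappa$ and hence $\|\hat x-x\|_\infty\le\|\hat x-x\|_2\le C\kappa$. Next I would treat the two groups of coordinates separately. For an index $i\notin T$ one has $x_i=0$, hence $|\hat x_i|=|\hat x_i-x_i|\le C\kappa$. For an index $i\in T$, the reverse triangle inequality gives
\[
|\hat x_i|\ \ge\ |x_i|-|\hat x_i-x_i|\ \ge\ \min_{j\in T}|x_j|-C\kappa\ >\ 2C\kappa-C\kappa\ =\ C\kappa.
\]
Thus every coordinate of $\hat x$ indexed by $T$ has magnitude strictly above $C\kappa$, while every coordinate indexed outside $T$ has magnitude at most $C\kappa$. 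Consequently the $|T|$ entries of $\hat x$ of largest magnitude are exactly those in $T$; since $x\in\Sigma_k$ (with $|T|=k$ in the regime of interest, and $|T|\le k$ in general so that keeping the $k$ largest entries of $\hat x$ retains all of $T$), the support of the best $k$-term approximation of $\hat x$ is contained in — and, when $|T|=k$, equal to — $T$, which is the claim.

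This is a short threshold argument, so I do not anticipate a genuine obstacle; the only point deserving care is the numerical constant. One needs an amount $C\kappa$ of slack in each coordinate to absorb the reconstruction error, and a further strict margin to guarantee that the on-support entries dominate the off-support entries, and the factor $2$ in the hypothesis $\min_{i\in T}|x_i|>2C\kappa$ supplies exactly that. A secondary bookkeeping item is checking that the input to \eqref{eq:CS_guarantee} does satisfy $\|\gamma\|_2\le\kappa\sqrt m$; in the $\sd$-quantized compressed sensing application this follows from the stability of the $\sd$ scheme, which controls $\|u\|_\infty$ and hence the norm of the quantization error fed to the decoder.
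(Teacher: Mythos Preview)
Your proposal is correct and follows essentially the same approach as the paper: convert the $\ell_2$ guarantee \eqref{eq:CS_guarantee} into an entrywise bound, then use the reverse triangle inequality on $T$ and the vanishing of $x$ on $T^c$ to separate $|\hat x_i|>C\kappa$ for $i\in T$ from $|\hat x_i|\le C\kappa$ for $i\notin T$. Your additional remarks on the case $|T|<k$ and on verifying $\|\gamma\|_2\le\kappa\sqrt m$ via $\sd$ stability are fine elaborations but not needed for the proposition itself.
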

\begin{proof}First, note that
for all $i\in T$,  \eqref{eq:CS_guarantee} yields $|\hat{x}_i - x_i| \leq C\kappa$.  Since $\min_{i\in T} |x_i| > 2C\kappa$, the reverse triangle inequality gives $|\hat{x}_i| > C\kappa$ for all $i$ in $T$. On the other hand, \eqref{eq:CS_guarantee} also ensures that $|\hat{x}_i| \leq C\kappa$ for all $i \in T^c$.

 A sharper version of this argument appears in \cite{GLPSY13} but Proposition \ref{prop:support_recovery} is sufficient for our purposes. In particular, consider an $r$th order greedy $\sd$ quantization associated with  a codebook $\qQal$ having $2L$ elements.  Applying such a scheme to $Ax$ yields a quantized vector $q$ satisfying $\|q-Ax \|_2 \leq \frac{\Delta}{2} 2^r \sqrt{m}$ provided 
\begin{equation}
 L>\|Ax \|_\infty/\Delta + 2^{r-1} - 1/2 \label{eq:number_of_levels}.
\end{equation}
 Thus assuming that $A/\sqrt{m}$ has appropriate RIP constants,  Proposition \ref{prop:support_recovery} shows that using a decoder satisfying \eqref{eq:CS_guarantee}, the support $T$ of $x\in\Sigma_k\subset \mathbb{R}^n$ can be accurately recovered provided $|z_i| > 2^rC~ \Delta$ for all $i \in T$. What remains is to choose the number of levels $L$ in the codebook to satisfy \eqref{eq:number_of_levels}; this in turn requires an estimate of $\|Ax\|_\infty$. 
 
 To that end, we now consider subgaussian measurement matrices, \ie matrices whose entries are subgaussian random variables as defined below. 
 
 \begin{definition}
Let $\xi$ be a Gaussian random variable drawn according to $\mathcal{N}(0,\sigma^2)$. If a random variable $\eta$ satisfies $P(|\eta|>t) \leq e P(|\xi|>t)$ for all $t$, then we say $\eta$ is subgaussian with parameter $\sigma>0$.
\end{definition}

Examples of subgaussian random variables include Gaussian, Bernoulli, and bounded random variables, as well as their linear combinations. For matrices populated with such subgaussian entries, the following proposition from \cite{KSY13} gives a bound on $\|Ax\|_\infty$ when the non-zero entries of $x$ are restricted to a fixed support $T$ so that $Ax = \Phi^Tx_T$ for a frame $\Phi$ associated with the support.

 \begin{proposition}\label{prop:infty_norm_bound}
 Let $\widehat{\Phi}$ be a $k \times m$ subgaussian matrix with mean zero, unit variance, and parameter $\sigma$, where $k<m$. Let $\Phi= \frac{1}{\sqrt{m}}\widehat{\Phi}$ and fix $\alpha\in (0,1)$. Then, with probability at least $1-e^{-\frac{1}{4} m^{1-\alpha}k^\alpha}$, we have  for all $m> C^{\frac{1}{1-\alpha}} k$ and $x \in \mathbb{R}^k$
\begin{equation}\|\Phi^Tx\|_{\infty} \leq  e^{1/2}\big(\tfrac{m}{k}\big)^{-\frac{\alpha}{2}}\|x\|_2.\label{eq:infty_norm_bound}\end{equation}
Here $C$ is a constant that may depend on $\sigma$, but is independent of $k$ and $\alpha$.
\end{proposition}
Taking a union bound over all the ${n\choose k}$ submatrices of $A$ of size $m\times k$ yields an identical uniform bound on $\|Ax\|_\infty$, which holds for sparse vectors $x$ with high probability, provided $m>C k (\log n)^{\frac{1}{1-\alpha}}$.  

Thus an $r$-th order greedy $\sd$ scheme with sufficiently many quantization levels allows the recovery of a sparse signal's support from its compressed sensing measurements. Equipped with this knowledge, we can estimate the signal coefficients using the Sobolev dual of the frame associated with the recovered support. 

\end{proof}

\subsection{Recovering the Signal Coefficients}\label{subsec:coeff}

\newcommand{\sing}{S}

We continue to consider Gaussian or subgaussian measurement matrices, now assuming that the support $T$ of the signal $x$ has been identified. Our goal is to approximate the coefficients $x_i$, $i\in T$. With high probability, the matrix $A/\sqrt{m}$ has the restricted isometry property of order $2k$ and level $\delta_{2k}\leq 1/\sqrt{2}$ provided one takes at least on the order of $k\log(n/k)$ measurements. Then the matrix $A_T/\sqrt{m}$ restricted to the columns indexed by $T$ is close to an isometry and its rows hence form a frame. Consequently, the measurement vector is the associated frame expansion of $x_T$, and $q$ is the corresponding $\sd$ frame quantization.

As shown in Sec.~\ref{subsec:frame}, it is advantageous to reconstruct $x$ from the $r$-th order $\sd$ quantization $q$ of the measurement vector $Ax$ using the Sobolev dual $\widetilde A_T^{(r)}$ of $A_T$, see \eqref{eq:errbound} and \eqref{eq:Sobolev}.  A possible bound for the reconstruction error is then proportional to $\|\widetilde A_T^{(r)}D^r\|_{2\rightarrow 2}$. Thus to show a uniform recovery guarantee, one needs a bound for this quantity which is uniform over all potential support sets $T$. In the initial work \cite{GLPSY13}, dealing with Gaussian compressed sensing matrices, the approach to proving such a bound consisted of explicitly controlling the lowest singular value of $D^{-r}A_T$. Their approach utilized the unitary invariance of the Gaussian measure to identify the distribution of the singular values of the random matrix $D^{-r}A_T$ with those of $\sing_{D^{-r}}\Psi$, where $\sing_{D^{-r}}$ is a diagonal matrix whose entries are the singular values of $D^{-r}$, and $\Psi$ is a Gaussian matrix. This, 
coupled with bounds on the singular values of $D^{-r}$, 
allowed \cite{GLPSY13} to derive bounds that held with probability high enough to survive a union bound over all ${n}\choose{k}$ Gaussian submatrices of $A$. In \cite{KSY13}, this approach was extended to subgaussian matrices.  Herein, to prove such a bound on $\|\widetilde A_T^{(r)}D^r\|_{2\rightarrow 2}$, we follow the simpler, RIP-based approach presented in \cite{FK13}. 

To that end, let $E=U_{E} \sing_{E} V_{E}^T$ be the singular value decomposition (SVD) of any matrix $E$ (for some orthogonal matrices $U_{E}$ and $V_{E}$) where the matrix  $\sing_{E}$ is diagonal with (ordered) diagonal entries $\sigma_j(E)$. We denote also $\sigma_{\min}(E) := \sigma_1(E)$ the smallest singular value of $E$.
Then the following proposition (see, \eg \cite{GLPSY13}) holds. 
\begin{proposition}
There are positive constants $C_1(r)$ and $C_{2}(r)$, independent of $m$, such that
\begin{equation}\label{D^-r}
C_1(r)(\tfrac{m}{j})^r\leq \sigma_j(D^{-r})\leq C_{2}(r)(\tfrac{m}{j})^{r},\ j=1,\ldots,m.
\end{equation}
\end{proposition}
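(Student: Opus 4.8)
The plan is to reduce everything to the unit lower-triangular all-ones matrix $L:=D^{-1}\in\mathbb R^{m\times m}$, so that $D^{-r}=L^r$, and to read \eqref{D^-r} as the claim that the $j$-th largest singular value of $L^r$ is of order $(m/j)^r$. I would first dispatch $r=1$ by an exact diagonalization: one checks directly that $L^TL$ has entries $(L^TL)_{ij}=m+1-\max(i,j)$ and $LL^T$ has entries $\min(i,j)$, and these classical matrices have the explicit eigenvalues $\tfrac14\csc^2\!\big(\tfrac{(2j-1)\pi}{2(2m+1)}\big)$, $j=1,\dots,m$, with discrete-sinusoid eigenvectors. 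Since $\tfrac{(2j-1)\pi}{2(2m+1)}\in(0,\pi/2)$, on which $\tfrac{2}{\pi}\theta\le\sin\theta\le\theta$, this gives $\sigma_j(D^{-1})=\big(2\sin\tfrac{(2j-1)\pi}{2(2m+1)}\big)^{-1}\asymp m/j$, \ie \eqref{D^-r} for $r=1$. As a by-product this identifies the singular vectors of $L$ as discrete sines and cosines of frequency $\Theta(j/m)$, which feeds the general case.

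For the upper bound with general $r$ I would use submultiplicativity of singular values, $\sigma_{i_1+\dots+i_r-(r-1)}(L^r)\le\sigma_{i_1}(L)\cdots\sigma_{i_r}(L)$, with all $i_\ell$ taken as close as possible to $(j+r-1)/r$; together with $\sigma_i(L)\asymp m/i$ this gives $\sigma_j(L^r)\lesssim_r (m/j)^r$ for $j\ge r$, and for the finitely many indices $j<r$ the crude bound $\sigma_j(L^r)\le\|L^r\|_{2\to 2}\le\|L\|_{2\to 2}^r\asymp m^r\lesssim_r(m/j)^r$ suffices. This produces $C_2(r)$.

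The lower bound $\sigma_j(D^{-r})\ge C_1(r)(m/j)^r$ is where essentially all the work lies, since the non-normality of $D$ forbids the identity $\sigma_j(L^r)=\sigma_j(L)^r$, and the cheap substitute $\sigma_j(L^r)\ge\sigma_j(L)\,\sigma_m(L)^{r-1}\asymp m/j$ (using $\sigma_m(L)=\|D\|_{2\to 2}^{-1}>\tfrac12$) is far too small. The route I would take is to compare $m^{-r}L^r$ with the $r$-fold integration operator $(J^rf)(s)=\tfrac1{(r-1)!}\int_0^s(s-t)^{r-1}f(t)\,\ud t$ on $L^2[0,1]$, of which $m^{-r}L^r$ is a Riemann-sum discretization at the grid points $i/m$. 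The singular values of $J^r$ are classically $\asymp(\pi j)^{-r}$ --- for instance because $(J^r)^\ast J^r$ is the inverse of $(-1)^r\,\partial_s^{2r}$ under suitable two-point boundary conditions, whose eigenvalues grow like $j^{2r}$. One then has to transfer the lower half of this estimate to the matrix: fixing $j$, one takes $V_j$ to be the span of discretizations of the first $j$ right singular functions of $J^r$ (trigonometric-polynomial-like vectors of frequency $\lesssim j/m$) and shows, through a quantitative quadrature/perturbation bound, that $L^r$ restricted to $V_j$ is bounded below by $(c_r\,m/j)^r$; equivalently, that on such low-frequency vectors $L^r$ acts, up to a controlled high-frequency tail, like the Fourier multiplier $(1-e^{i\theta})^{-r}$, whose modulus is $\asymp(m/j)^r$ at frequency $\Theta(j/m)$. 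Controlling the $r$-fold accumulation of these discretization and boundary errors so that the surviving lower bound is still $\gtrsim_r(m/j)^r$ is the technical heart of the matter and the step I expect to be the real obstacle; it is the regime $j\asymp m$, where the grid spacing $1/m$ is comparable to the oscillation scale of the relevant singular functions, that requires the most care. The full argument along these lines is carried out in \cite{GLPSY13}.
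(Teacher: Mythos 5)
The paper itself offers no proof of this proposition---it is stated with a pointer to \cite{GLPSY13}---so there is no in-paper argument to measure your route against; the comparison has to be on the merits of your sketch. Two of your three pieces are correct and complete: the exact diagonalization for $r=1$ (indeed $LL^T=[\min(i,j)]$ and $L^TL=[m+1-\max(i,j)]$, with eigenvalues $\tfrac14\csc^2\tfrac{(2j-1)\pi}{2(2m+1)}$, so $\sigma_j(D^{-1})\asymp m/j$), and the upper bound for general $r$ via the multiplicative Weyl--Horn inequality with $i_\ell=\lfloor (j+r-1)/r\rfloor$, which gives $C_2(r)\lesssim r^r$ and needs nothing beyond the $r=1$ formula. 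The genuine gap is the lower bound, and you name it yourself: the transfer from the singular values of the Volterra operator $J^r$ on $L^2[0,1]$ to the discrete matrix, uniformly in $j\le m$, is exactly the step you do not carry out, and closing with ``the full argument is in \cite{GLPSY13}'' is a citation, not a proof. As written, the proposal establishes only half of \eqref{D^-r}.

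Two concrete remarks on that missing half. First, the regime you flag as the most delicate, $j\asymp m$, is in fact trivial: $\sigma_j(D^{-r})\ge\sigma_m(D^{-r})=\|D^r\|_{2\to2}^{-1}\ge 2^{-r}$, which already exceeds $4^{-r}(m/j)^r$ once $j\ge m/2$. Second, the continuum detour (and the quadrature/boundary error analysis it forces) is avoidable by a purely discrete argument: since $\sigma_j(D^{-r})=1/\sigma_{m+1-j}(D^r)$, by Courant--Fischer it suffices, for $j\le m/2$, to exhibit a $j$-dimensional subspace $W$ with $\|D^r w\|_2\lesssim_r (j/m)^r\|w\|_2$ for all $w\in W$. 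Take the span of the discrete Fourier modes of frequency $|l|\le L$ with $L\approx (j+r)/2$, intersected with the codimension-$r$ subspace of vectors vanishing in their last $r$ coordinates; on such vectors $D^r$ coincides with the $r$-th power of the circulant difference matrix (the only discrepancy between $D$ and its circulant version propagates from the last coordinates, which have been zeroed out), so $\|D^rw\|_2\le(2\pi L/m)^r\|w\|_2$, which is the required bound, while the intersection still has dimension at least $2L+1-r\ge j$; degenerate cases $m\lesssim r$ are absorbed into $C_1(r)$. With that replacement---or with the quantitative details of your $J^r$-comparison actually supplied---your outline becomes a complete proof; without it, it is a plan plus a reference, which is essentially the status the proposition already has in the paper.
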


Denote by $P_\ell$ the $\ell\times m$ matrix that maps a vector to its
first $\ell$ components. Moreover, denote by
$\widetilde\Sigma_k(A,\decoder) \subset \Sigma_k$ the set of
$k$-sparse signals $x$ whose support can be recovered from $q$ with
the decoder $\decoder$ as in Proposition
\ref{prop:support_recovery}. The following theorem describes the
reconstruction performance.

 \begin{theorem}[\cite{FK13}]\label{thm:SD_RIP}
Let $A\in \mathbb{R}^{m\times n}$ be a matrix such that for a fixed $\ell\leq m$, the $\ell \times n$ matrix $\frac{1}{\sqrt\ell}P_\ell V^T_{D^{-r}}A$ has restricted isometry constant $\delta_k \leq \delta$. Then the following holds uniformly for all $x\in \widetilde\Sigma_k(A,\decoder)$. 

If $x$ has support $T$, $q$ is the $r$-th order $\sd$ quantization of $Ax$, and $\hat{x} := \widetilde{A}_{T}^{(r)}q$, then  
$$
\|x-\hat{x}\|_2 \leq \tfrac{\qD}{C(r)\sqrt{(1-\delta)}}(\tfrac{m}{\ell})^{-r+\frac{1}{2}},
$$
where $C(r)>0$ is a constant depending only on r and $\qD$ is the quantization step size.
\end{theorem}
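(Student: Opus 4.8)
The plan is to exploit the two-stage structure already in place and reduce everything to a single singular-value estimate. By the definition of $\widetilde\Sigma_k(A,\decoder)$ together with Proposition~\ref{prop:support_recovery}, the first stage returns the true support $T$, so the problem reduces to the $\sd$-frame-quantization estimate for the frame formed by the rows of $A_T$. In that regime the reconstruction error is controlled exactly as in \eqref{eq:errbound}, and the only genuinely new ingredient is an RIP-based lower bound on $\sigma_{\min}(D^{-r}A_T)$ that is uniform in $T$.

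\textbf{Step 1: reduction and error splitting.} Since $x\in\widetilde\Sigma_k(A,\decoder)$ has support $T$, we have $\hat x=\widetilde A_T^{(r)}q$ supported on $T$, so $\|x-\hat x\|_2=\|x_T-\widetilde A_T^{(r)}q\|_2$. Any dual frame is a left inverse of the analysis operator (and under the RIP hypothesis $A_T$ has full column rank), so $\widetilde A_T^{(r)}A_T=I$; writing $c=A_Tx_T=Ax$ and using the $\sd$ relation $c-q=D^ru$ from \eqref{eq:mat_form_r} gives $x_T-\widetilde A_T^{(r)}q=\widetilde A_T^{(r)}D^ru$, hence, exactly as in \eqref{eq:errbound},
\[
\|x-\hat x\|_2\ \le\ \|\widetilde A_T^{(r)}D^r\|_{2\rightarrow 2}\,\|u\|_2 .
\]
For $\|u\|_2$ I would invoke stability of the greedy $r$-th order scheme (assuming, as in \eqref{eq:number_of_levels}, enough codebook levels, so that $\|u\|_\infty\le\qD/2$), which yields $\|u\|_2\le\tfrac{\qD}{2}\sqrt m$. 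For the operator norm I use the closed form of the Sobolev dual: since $\widetilde A_T^{(r)}=(D^{-r}A_T)^\dagger D^{-r}$ we get $\widetilde A_T^{(r)}D^r=(D^{-r}A_T)^\dagger$, so $\|\widetilde A_T^{(r)}D^r\|_{2\rightarrow 2}=1/\sigma_{\min}(D^{-r}A_T)$. Everything thus reduces to bounding $\sigma_{\min}(D^{-r}A_T)$ from below.

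\textbf{Step 2: the RIP lower bound on $\sigma_{\min}(D^{-r}A_T)$ (the crux).} Write the SVD $D^{-r}=U_{D^{-r}}S_{D^{-r}}V_{D^{-r}}^T$; by orthogonal invariance, $\sigma_{\min}(D^{-r}A_T)=\sigma_{\min}(S_{D^{-r}}V_{D^{-r}}^T A_T)$. For a unit vector $w$ supported on $T$, keeping only the first $\ell$ coordinates and using that the $\sigma_j(D^{-r})$ are nonincreasing,
\[
\|S_{D^{-r}}V_{D^{-r}}^T A_Tw\|_2^2=\sum_{j=1}^m\sigma_j(D^{-r})^2\,|(V_{D^{-r}}^T A_Tw)_j|^2\ \ge\ \sigma_\ell(D^{-r})^2\,\|P_\ell V_{D^{-r}}^T A_Tw\|_2^2 .
\]
Now $P_\ell V_{D^{-r}}^T A_T$ is the column-restriction to $T$ of $P_\ell V_{D^{-r}}^T A$; since by hypothesis $\tfrac1{\sqrt\ell}P_\ell V_{D^{-r}}^T A$ has RIP constant $\delta_k\le\delta$ and $|T|\le k$, we get $\|P_\ell V_{D^{-r}}^T A_Tw\|_2^2\ge\ell(1-\delta)\|w\|_2^2$. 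Combining with $\sigma_\ell(D^{-r})\ge C_1(r)(m/\ell)^r$ from \eqref{D^-r},
\[
\sigma_{\min}(D^{-r}A_T)\ \ge\ C_1(r)\,(m/\ell)^r\sqrt{\ell(1-\delta)}\ =\ C_1(r)\sqrt{1-\delta}\;m^r\,\ell^{-r+1/2}.
\]
Crucially, this uses only the RIP of the single matrix $\tfrac1{\sqrt\ell}P_\ell V_{D^{-r}}^T A$, which is independent of $T$, so the bound is uniform over all admissible supports, yielding the uniformity over $\widetilde\Sigma_k(A,\decoder)$.

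\textbf{Step 3: assembling.} Plugging Steps 1--2 together,
\[
\|x-\hat x\|_2\ \le\ \frac{\tfrac{\qD}{2}\sqrt m}{C_1(r)\sqrt{1-\delta}\;m^r\ell^{-r+1/2}}\ =\ \frac{\qD}{2\,C_1(r)\sqrt{1-\delta}}\,(m/\ell)^{-r+1/2},
\]
which is the asserted estimate with $C(r):=2\,C_1(r)$. The only delicate point is Step 2: the key insight is to truncate $D^{-r}$ to its top $\ell$ singular directions (so its large singular values, of size $\sim(m/\ell)^r$, are retained) and to recognize that RIP of the correspondingly conditioned matrix $P_\ell V_{D^{-r}}^T A$ is precisely what controls $\sigma_{\min}(D^{-r}A_T)$ uniformly in $T$. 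By contrast, the stability bound $\|u\|_2\lesssim\sqrt m$ and the bookkeeping with the exponents of $m$ and $\ell$ are routine.
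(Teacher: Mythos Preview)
Your proof is correct and follows essentially the same approach as the paper's: both reduce to bounding $\sigma_{\min}(D^{-r}A_T)$ from below by truncating to the top $\ell$ singular directions of $D^{-r}$ and invoking the RIP of $\tfrac{1}{\sqrt\ell}P_\ell V_{D^{-r}}^T A$, then combine with the stability bound $\|u\|_2\le\tfrac{\qD}{2}\sqrt m$ and the identity $\|\widetilde A_T^{(r)}D^r\|_{2\to2}=1/\sigma_{\min}(D^{-r}A_T)$. The only cosmetic difference is that you write the key step as an explicit quadratic-form inequality on a unit vector $w$, whereas the paper phrases it via the chain of matrix singular-value inequalities $\sigma_{\min}(S_{D^{-r}}V_{D^{-r}}^T A_T)\ge\sigma_{\min}(P_\ell S_{D^{-r}}V_{D^{-r}}^T A_T)\ge\sigma_\ell(D^{-r})\,\sigma_{\min}(P_\ell V_{D^{-r}}^T A_T)$; these are equivalent.
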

 \begin{proof}
As the SVD of $D^{-r}$ provides $D^{-r}=U_{D^{-r}}\sing_{D^{-r}}V_{D^{-r}}^T$, the smallest singular value of $D^{-r}A_T$ satisfies
 \begin{align*}
 \sigma_{\min}(D^{-r}{A_T})&=\sigma_{\min}(\sing_{D^{-r}}V^T_{D^{-r}}{A_T})\\
 &\geq \sigma_{\min}(P_\ell \sing_{D^{-r}}V^T_{D^{-r}}{A_T})\\
 &= \sigma_{\min}((P_{\ell}\sing_{D^{-r}}P^T_{\ell} )(P_\ell V^T_{D^{-r}}{A_T}))\\
 &\geq \sigma_\ell(D^{-r}) \sigma_{\min}(P_\ell V^T_{D^{-r}}{A_T}),
 \end{align*}
To bound $\sigma_{\min}(P_\ell V^T_{D^{-r}}{A_T})$ uniformly over all support sets $T$ of size $k$  we simply note that if $\frac{1}{\sqrt{\ell}}P_\ell V^T_{D^{-r}}{\Phi}$ has restricted isometry constant $\delta_k\leq\delta$
 then $\sigma_{\min}(P_\ell V^T_{D^{-r}}{A_T})$ is uniformly bounded from below by 
 \begin{equation}\label{subRIP}
 \sqrt{\ell}\sqrt{1-\delta}.
 \end{equation}
 The theorem follows by applying (\ref{eq:errbound}), (\ref{D^-r}), (\ref{subRIP}) as
\begin{equation}
 \tfrac{1}{\sigma_{min}(D^{-r}{A_T})}\|u\|_2\leq  \tfrac{\Delta}{C(r)\sqrt{(1-\delta)}}(\tfrac{m}{\ell})^{-r+\frac{1}{2}}
\end{equation}\qed
\end{proof}

The above theorem can be applied almost directly to Gaussian compressed sensing matrices. If $A$ is a Gaussian matrix with independent zero mean and unit variance entries, then by rotation invariance so is the matrix $P_\ell V^T_{D^{-r}}A$. Regarding the choice of $\ell$, note from Theorem \ref{thm:SD_RIP} that the smaller $\ell$ is, the better the bound. On the other hand $\ell$ has to be large enough for $\frac{1}{\sqrt{\ell}}(P_\ell V^T_{D^{-r}}\Phi)$ to have the restricted isometry constant $\delta_k\leq\delta$. This prompts the choice $\ell \asymp k\log n$, as then $\frac{1}{\sqrt{\ell}}(P_\ell V^T_{D^{-r}}\Phi)$ has the restricted isometry constant $\delta_k<\delta$ with high probability, as discussed in Chapter 1 (see, \eg Theorem 1.5). In particular, if \[m\ \gtrsim\ k(\log n)^\frac{1}{1-\alpha}, \quad \alpha \in (0,1)\]
and \[ \ell \asymp k\log n\]
 then \[\tfrac{m}{\ell}\asymp \tfrac{m}{k\log n} = (\tfrac{m}{k})^\alpha\cdot \bigg(\tfrac{m}{k(\log n)^{\frac{1}{1-\alpha}}}\bigg)^{1-\alpha}\gtrsim (\tfrac{m}{k})^{\alpha}\]
Applying Theorem \ref{thm:SD_RIP} directly, we obtain
  \begin{align*}
\|x-\hat{x}\|_2&\lesssim\Delta(\tfrac{m}{k})^{-\alpha (r-\frac{1}{2})}.
\end{align*}
This essentially recovers the result in \cite{GLPSY13} and a similar, albeit more technical argument for subgaussian matrices, using either bounds on tail probabilities for quadratic forms \cite{HW71, rv13} or bounds for suprema of chaos processes \cite{krmera12} recovers the analogous result in \cite{KSY13}. 

To illustrate the advantage of using $\sd$ schemes for quantizing compressed sensing measurements we conduct a numerical experiment with $k$-sparse signals in $\mathbb{R}^n$, as we vary the number of measurements $m$. We fix $k=10,$ $n=1000,$ and the quantization step-size $\Delta=0.01$. We draw  $m \times n$ Gaussian matrices $A$ for $m\in\{100, 200, 400, 800\}$ and quantize the measurements $Ax$ using scalar quantization and $r$th order $\sd$ schemes with $r=1,2,3$. We then use the two-stage reconstruction method described herein to obtain an approximation $\hat{x}$ of $x$ using its quantized measurements. Repeating this experiment 30 times, we compute the average of the reconstruction error $\|x-\hat{x}\|_2$ for each of the quantization methods and plot them against the oversampling ratio $m/k$ in Fig.~\ref{fig:SD_CS_results}.

\begin{figure}[t]
  \centering
  \includegraphics[width=0.9\linewidth]{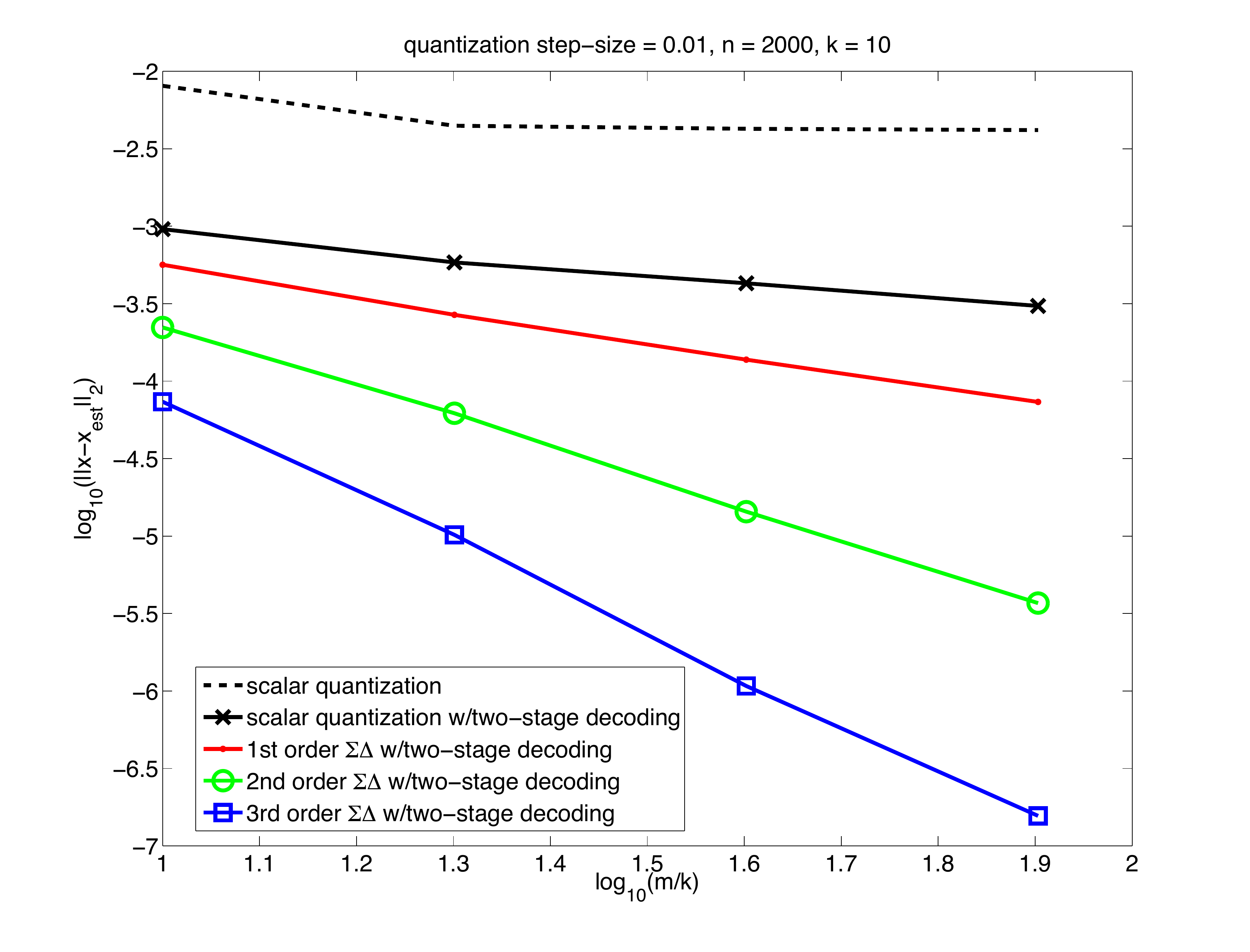} 
  \vspace{-10pt}
  \caption{Average errors over 30 experiments. The figure shows the reconstruction errors resulting from scalar quantization, using $\ell_1$-minimization for reconstruction (dashed line). Also corresponding to scalar quantization, the figure shows the errors resulting from reconstructing via the two-stage algorithm described herein (solid black line), using the canonical dual of the frame corresponding to the recovered support in the second stage. It also shows the reconstruction errors resulting from $1st$, $2nd$, and $3rd$ order $\sd$ quantization respectively. These errors decay as $(m/k)^{-r}$ for $r=1,2,3$ respectively, slightly outperforming the theoretical predictions presented here.}
  \label{fig:SD_CS_results}
  \end{figure}

In summary, using Gaussian and subgaussian compressed sensing matrices
recovery of sparse signals from their $\sd$ quantized measurements is
possible. More importantly, the reconstruction error decays
polynomially in the number of measurements and thus outperforms the
(at best) linear error decay that can be achieved with scalar
quantization. This improvement comes at the cost of introducing memory
elements, and feedback, into the quantization procedure.
\section{Discussion and Conclusion}
\label{sec:discussion}
Quantization is an essential component of any acquisition system, and,
therefore, an important part of compressive sensing theory and
practice. While significant work has been done in understanding the
interaction of quantization and compressive sensing, there are several
open problems and questions.

One of the most interesting open problems is the interaction of
quantization with noise. While the discussion and references in
Sec.~\ref{sec:extensions} provides some initial results and theoretical analysis, a
comprehensive understanding is still missing. An understanding of the
optimal bit allocation and the optimal quantizer design, uniform or
non-uniform scalar, or \sd, given the noise level, as well as the
robustness of the reconstruction to noise and quantization is still
elusive. 

While \sd\ can be used to improve the rate efficiency of compressive
sensing, compared to scalar quantization, the performance is still not
comparable to the state-of-the-art in conventional \sd\ methods. For
example, conventional \sd\ quantization of band-limited functions can achieve error that decays exponentially
as the sampling rate increases, not currently possible with existing
compressive sensing \sd. Furthermore, the analysis in
Sec.~\ref{sec:SDCS} does not hold for 1-bit quantization, often
desirable in practical systems due to its simplicity. Such an
extension has significant practical importance.

Even with \sd\ approaches, the rate efficiency of compressive sensing
systems is not ideal. As evident from the fundamental bounds in
Sec.~\ref{sec:intro_fundamentals}, compressive sensing is not
rate-efficient compared to classical methods such as transform
coding. In others word, while compressive sensing is very promising
in building sensing systems because it can significantly reduce the
number of measurements and the sampling burden, it is not a good data
compression approach if the measurements have already been obtained
and the achievable bit-rate is important. That said, due to the intimate connection between frame quantization and quantization for compressed sensing, promising results in the finite frames context, e.g., \cite{iwen2013near} can inform future developments in compressed sensing. 

The potential encoding simplicity of a compressive sensing
system is very appealing. Acquiring generalized linear measurements and quantizing
them can be less complex than typical transform-coding
approaches and much more attractive in low-power and
computationally-restricted sensing applications. The complexity is
shifted to the reconstruction, which, in many applications, can bear
significantly more computational complexity. Nevertheless, the rate inefficiency of
compressive sensing can be a barrier in such applications.

A number of promising approaches have been proposed to overcome this
barrier using modifications of the quantizer that produce
non-contiguous quantization
regions~\cite{bib:Pai06,bib:BoufSAMPTA2011,B_TIT_12,Kamilov12}. Initial
theoretical analysis and experimental results are promising. However,
our understanding is still limited. One of the drawbacks of such
approaches is that the reconstruction is no longer convex and,
therefore, not as simple to provide guarantees for.

Alternatively, recent work on adaptive quantization strategies has
shown that error decay exponential in the bit-rate can be achieved,
even using a 1-bit quantizer, at the cost of adaptivity in the
measurements and -- in contrast with the methods presented in this chapter -- significant computation at the encoder. Specifically,~\cite{baraniuk2014exponential} shows that
adaptively choosing the threshold of a 1-bit quantizer allows
the error to decay exponentially with the number of measurements. The cost is that the thresholds are updated by solving an $\ell_1$ minimization problem, or running an iterative hard thresholding scheme. It is thus interesting to quantify the tradeoff between computational complexity at the quantizer, and achievable reconstruction accuracy.

Another important aspect is that while the best recovery guarantees in
compressed sensing are obtained for Gaussian and subgaussian
measurement matrices, which are also mainly considered in this
article, applications usually require structured matrices, such as
subsampled Fourier matrices, e.g., as a model for subsampled MRI
measurements \cite{ldp07}, or subsampled convolution, e.g., as a model
for coded aperture imaging \cite{mawi08}. In both cases, when the
subsampling is randomized, near-optimal recovery guarantees are known
for unquantized compressed sensing \cite{RV08:sparse,krmera12}.
Combined with quantization, however, hardly anything is known for such
matrices. Such results would be of great importance to move the
approaches discussed in this survey closer to the application
scenarios.

Quantization is also important when considering randomized embeddings,
an area of research intimately related to compressive
sensing~\cite{baraniuk2008simple,krahmer2011new}. Embeddings are
transformations that preserve the geometry of the space they operate
on; reconstruction of the embedded signal is not necessarily the
goal. They have been proven quite useful, for example, in signal-based
retrieval applications, such as augmented reality, biometric
authentication and visual
search~\cite{LRB_MMSP12,BR_DCC13,SBV_SPIE13_Embeddings}.

These applications require storage or transmission of the embedded
signals, and, therefore, quantizer design is very important in
controlling the rate used by the embedding. Indeed, significant
analysis has been performed for embeddings followed by conventional
scalar quantization, some of it in the context of quantized
compressive
sensing~\cite{jacques2013robust,plan2012robust,plan2011dimension} or
in the study of quantized extensions to the Johnson Lindenstrauss Lemma~\cite{johnson1984extensions,LRB_MMSP12,SBV_SPIE13_Embeddings,jacques2013quantized}. Furthermore,
since reconstruction is not an objective anymore, non-contiguous
quantization is more suitable, leading to very interesting quantized
embedding designs and significant rate reduction~\cite{BR_DCC13}. In
this context, quantization can also provide to significant computation
savings in the retrieval, leading to Locality Sensitive Hashing (LSH)
and similar methods~\cite{Andoni08LSH}.

\section*{Acknowledgement}
Petros T. Boufounos is exclusively supported by
Mitsubishi Electric Research Laboratories. Laurent Jacques is a Research Associate funded by the Belgian
F.R.S.-FNRS. Felix Krahmer acknowledges support by the German Science Foundation (DFG) in the context of the Emmy-Noether Junior Research Group KR 4512/1-1 ``RaSenQuaSI''.
Rayan Saab is an assistant professor of mathematics with the University of California, San Diego.

\end{document}